\def \CC {{\mathbb C}}
\def \RR {{\mathbb R}}
\def \p {{\mathfrak {p}}}
\def \O {{\mathcal O}}
\def \D {{\mathcal{D}}}
\def \P {{\mathfrak{P}}}
\def \uv {{\boldsymbol u}}
\def \vv {{\boldsymbol v}}
\def \xv {{\boldsymbol x}}
\def \yv {{\boldsymbol y}}
\def \cv {{\boldsymbol c}}
\def \sv {{\boldsymbol s}}
\def \tv {{\boldsymbol t}}
\def \wv {{\boldsymbol \omega}}
\def \re {{{\rm Re}}}
\def \im {{{\rm Im}}}
\newcommand\Tr[1] {{\rm{Tr}}\left(#1\right)}
\newcommand\de[1] {{\rm{det}}\left(#1\right)}
\newcommand{\FF}{\mathbb{F}}
\newcommand{\QQ}{\mathbb{Q}}
\newcommand{\ZZ}{\mathbb{Z}}
\newcommand{\Oc}{\mathcal{O}}
\newcommand{\pf}{\mathfrak{p}}
\theoremstyle{definition}
\newtheorem{thm}{Theorem}[section]
\newtheorem{lem}[thm]{Lemma}
\newtheorem{definition}[thm]{Definition}
\newtheorem{lemma}[thm]{Lemma}
\newtheorem{proposition}[thm]{Proposition}
\newtheorem{corollary}[thm]{Corollary}
\newtheorem{remark}[thm]{Remark}
\newtheorem{example}[thm]{Example}
\title{Modular Lattices from a Variation of Construction A over Number Fields}
\author{Xiaolu Hou and Fr\'ed\'erique Oggier}
\address{Division of Mathematical Sciences, School of Physical and Mathematical Sciences, \newline\indent Nanyang Technological University, Singapore.} 
\email {HO0001LU@e.ntu.edu.sg,frederique@ntu.edu.sg}
\thanks{X. Hou is supported by a Nanyang President Graduate Scholarship. The research of F. Oggier for this work is supported by Nanyang Technological University under Research Grant M58110049.}
\begin{document}
\begin{abstract}
We consider a variation of Construction A of lattices from linear codes based on two classes of number fields, totally real and CM Galois number fields. We propose a generic construction with explicit generator and Gram matrices, then focus on modular and unimodular lattices, obtained in the particular cases of totally real, respectively, imaginary, quadratic fields.
Our motivation comes from coding theory, thus some relevant properties of modular lattices, such as minimal norm, theta series, kissing number and secrecy gain are analyzed. Interesting lattices are exhibited.
\end{abstract}
\maketitle
%
%
\section{Introduction}

Let $K$ be a Galois number field of degree $n$ which is either totally real or a CM field. 
Let $\mathcal{O}_K$ be the ring of integers of $K$ and $\mathfrak{p}$ be a prime ideal of $\O_K$ above the prime $p$.
We have $\O_K/\mathfrak{p}\cong\FF_{p^f}$, where $f$ is the inertia degree of $p$.
Define $\rho$ to be the map of reduction modulo $\mathfrak{p}$ componentwise as follows:
\begin{equation}\label{eqn:maprho}
\begin{array}{cccc}
\rho :& \O_K^N &\rightarrow& \FF_{p^f}^N\\
    ~ &(x_1,\ldots,x_N) &\mapsto& (x_1~{\rm mod}~\mathfrak{p},\ldots,x_N~{\rm mod}~\mathfrak{p})
\end{array}
\end{equation}
for some positive integer $N$.
Let $C\subseteq\FF_{p^f}^N$ be a linear code over $\FF_{p^f}^N$, that is a $k$-dimensional subspace of $\FF_{p^f}^N$.
As $\rho$ is a $\ZZ-$module homomorphism, $\rho^{-1}(C)$ is a submodule of $\O_K^N$.
Since $\O_K$ is a free $\ZZ-$module of rank $n$, $\rho^{-1}(C)$ is a free $\ZZ-$module of rank $nN$. 
Let $b_\alpha:\O_K^N\times\O_K^N\to\RR$ be the symmetric bilinear form defined by
\begin{equation}\label{eqn:bilinearform}
b_\alpha(\xv,\yv)=\sum_{i=1}^N{\rm Tr}_{K/\QQ}(\alpha x_i \bar{y}_i)
\end{equation}
where $\alpha \in K\cap \RR$ and $\bar{y}_i$ denotes the complex conjugate of $y_i$ if $K$ is CM (and $\bar{y}_i$ is understood to be $y_i$ if $K$ is totally real).
If $\alpha$ is furthermore totally positive, i.e., $\sigma_i(\alpha)>0$, for $\sigma_1 (\text{the identity}),\sigma_2,\ldots,\sigma_n$ all elements of the Galois group of $K$ over $\QQ$, then 
$b$ is positive definite: 
\[
b_\alpha(\xv,\xv)=\sum_{i=1}^N\Tr{\alpha x_i\bar{x}_i}=\sum_{i=1}^N\sum_{j=1}^n\sigma_j(\alpha)|\sigma_j(x_i)|^2>0,
\] 
$\forall \xv\in\O_K^N$, $\xv$ nonzero.
If we take $\alpha$ in the codifferent $\D_K^{-1}=\{x\in K:\Tr{xy}\in\ZZ~\forall y\in\O_K \}$ of $K$, then $\Tr{\alpha x_i\bar{y}_i}\in\ZZ$. 

The pair $(\rho^{-1}(C),b_\alpha)$ thus forms a lattice of rank (or dimension) $nN$, which is integral when $\alpha\in\D_K^{-1}$ 
but also in other cases, depending on the choice of $C$, as we will see several times next.

This method of constructing lattices from linear codes is usually referred to as Construction $A$~\cite{Conway}. The principle is well known, albeit not using the exact above formulation. The original binary Construction A, due to Forney~\cite{Forney}, uses $K=\QQ$, $\Oc_K=\ZZ$, $\p=2$ and typically $\alpha=1/2$ (sometimes $\alpha$ is chosen to be 1).
The binary Construction A can also be seen as a particular case of the cyclotomic field approach proposed by Ebeling~\cite{Ebeling}, which in turn is a particular case of the above construction.
For $p$ a prime, take for $K$ the cyclotomic field $\QQ(\zeta_p)$, where $\zeta_p$ is a primitive $p$th root of unity, and note that $\O_K=\ZZ[\zeta_p]$.  
Take $\p=(1-\zeta_p)$ the prime ideal above $p$, and $\alpha=1/p$. Since $\O_K/\p\cong\FF_p$, this construction involves linear codes over $\FF_p$. The case $p=2$ is the binary Construction A.
The generalization from cyclotomic fields to either CM fields or totally real number fields was suggested in~\cite{Ong} for the case where $\pf$ is totally ramified. The motivation to revisit Construction A using number fields came from coding theory for wireless communication, for which lattices built over totally real numbers fields and CM fields play an important role~\cite{FnT}. In particular, Construction A over number fields enables lattice coset encoding for transmission over wireless channels, and wireless wiretap channels \cite{Ong}. It is also useful in the context of physical network coding~\cite{BO13}.

The main interest in constructing lattices from linear codes is to take advantage of the code properties to obtain lattices with nice properties, modularity and large shortest vector (or minimal norm) being two of them, both mathematically but also for coding applications.

Given an arbitrary lattice $(L,b)$ where $L$ is a $\ZZ-$module and $b$ is a symmetric bilinear form which is positive definite, then the dual lattice of 
$(L,b)$ is the pair $(L^*,b)$, where
\[
	L^* = \{\xv\in L\otimes_\ZZ\RR:b(\xv,\yv)\in\ZZ \ \forall \yv\in L\},
\]
and $(L,b)$ is 

$\bullet$ integral if $L\subseteq L^*$,

$\bullet$ unimodular if $(L,b)\cong (L^*,b)$, i.e., there exists a $\ZZ-$module homomorphism $\tau:L\to L^*$ such that $b(\tau(x),\tau(y))=b(x,y)$ for all $x,y\in L$, and

$\bullet$ {\em $d-$modular} (or {\em modular of level} $d$) if it is integral and $(L,b)\cong (L^*,d b)$ for some positive integer $d$.

Given a linear code $C \subset \FF_q^N$ of dimension $k$, $q$ a prime power, its dual code $C^\perp$ is defined by
\begin{equation}\label{eqn:dualcode}
C^\perp = \{\xv\in\FF_q^N:\xv\cdot\yv=\sum_{i=1}^N x_iy_i=0\ \forall \yv\in C\}
\end{equation}
and $C$ is called 

$\bullet$ {\em self-orthogonal} if $C\subseteq C^\perp$, and 

$\bullet$ {\em self-dual} if $C=C^\perp$.

It is well known for the binary Construction A that $C\subseteq \FF_2^N$ is self-dual if and only if $(\rho^{-1}(C),b_{\frac{1}{2}})$ is unimodular~\cite{Ebeling,Conway}. 
More generally, for $K=\QQ(\zeta_p)$, if $C\subseteq\FF_{p}^N$ is self-dual, then $(\rho^{-1}(C),b_{\frac{1}{p}})$ is unimodular~\cite{Ebeling}. We will prove a converse of this statement for totally real number fields and CM fields with a totally ramified prime in Section \ref{sec:matrix}. 

Self-dual codes thus provide a systematic way to obtain modular lattices. This was used for example in \cite{Chapman}, where $K=\QQ(\sqrt{-2})$, $\p=(3)$ and self-dual codes over the ring $\Oc_K/\p$ were used to construct $2-$modular lattices. 
Similarly, in \cite{Chua}, it was shown that by taking $K=\QQ(\zeta_3)$, where $\zeta_3$ is the $3$rd primitive root of unity, $\p=(4)$, and self-dual codes over the ring $\O_K/\p$, $3-$modular lattices can be constructed.
In \cite{Bachoc}, the quadratic fields $\QQ(\sqrt{-7})$ with $\p=(2)$, $\QQ(i)$ with $\p=(2)$ and $\QQ(\zeta_3)$ with $\p=(2)$ or $\p=(3)$, as well as totally definite quaternion algebras ramified at either 2  or 3 with $\p=(2)$, were used to construct modular lattices from self-dual codes.
An even more generalized version of Construction A is introduced in \cite{Sloane}, where $\Oc_K$ is replaced by any lattice $L\subset\RR^n$ and $\p$ by $pL$ for a prime $p$.
It is then applied to construct unimodular lattices from self-dual linear codes.

Apart from modularity, large minimal norm is another classical property which has been well studied.
This is normally achieved via Construcion A by exploiting the dualities between the linear codes and the resulting lattices.
For example, in \cite{Bachoc}, the association between MacWilliams identities for linear codes and theta series for lattices are established for the cases listed above to construct extremal lattices, lattices with the largest possible minimal norm.
Other duality relations also include the relation between the minimum weight of linear codes and the minimal norm of the corresponding lattices~\cite{Chapman}, or the connection between the weight enumerator of linear codes and the theta series of lattices~\cite{Chua}, shown in both cases for the cases listed above. 
One classical motivation for finding lattices with the biggest minimum is to find the densest sphere packings, which can be applied to coding over Gaussian channels~\cite{Sloane}.

In Section \ref{sec:matrix}, generator and Gram matrices are computed for the generic case of Construction A over Galois number fields, either totally real or CM. Knowledge of these matrices is important for applications, such as lattice encoding, or if one needs to compute the theta series of the lattice, as we will do in Section \ref{sec:realproperties}. It also gives one way to verify modularity, as will be shown both in Section \ref{sec:real} and \ref{sec:imaginary}. 
From the generic construction, examples of lattices are obtained by considering specific number fields. We investigate the two most natural ones, namely totally real quadratic fields in Section \ref{sec:real}, and totally imaginary quadratic fields in Section \ref{sec:imaginary}. Our techniques could be applied to other number fields, such as cyclotomic fields, or cyclic fields, but these directions are left open.
Section \ref{sec:realproperties} provides examples of lattices and of their applications: we construct modular lattices and compute their theta series (and their kissing number in particular), but also their minimal norm. The theta series allows to compute the secrecy gain of the lattice~\cite{Oggier}, a lattice invariant studied in the context of wiretap coding. Interesting examples are found -- a new extremal lattice or new constructions of known extremal lattices, modular lattices with large minimal norm -- and numerical evidence gives new insight on the behaviour of the secrecy gain.

%
%
%

\section{Generator and Gram Matrices for Construction A}\label{sec:matrix}

As above, we consider the $nN$-dimensional lattice $(\rho^{-1}(C),b_{\alpha})$. 
Let $\Delta$ be the absolute value of the discriminant of $K$. We will adopt 
the row convention, meaning that a lattice generator matrix contains a basis as row vectors. The Gram matrix contains as usual the inner product between the basis vectors. The volume of a lattice is the absolute value of the determinant of a generator matrix, while the discriminant of a lattice is the determinant of its Gram matrix.

\begin{lemma}\label{lem:latdisc}
	The lattice $(\rho^{-1}(C),b_\alpha)$ has discriminant $\Delta^Np^{2f(N-k)}N(\alpha)^N$ and volume $\Delta^{N/2}p^{f(N-k)}N(\alpha)^\frac{N}{2}$.
\end{lemma}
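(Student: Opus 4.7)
The plan is to realize $\rho^{-1}(C)$ as a finite-index sublattice of the ambient lattice $(\Oc_K^N, b_\alpha)$ and combine the standard sublattice-index formula with a direct computation of the discriminant of $(\Oc_K^N, b_\alpha)$. The index is readily identified: since $\rho$ is a surjective $\ZZ$-module homomorphism whose kernel lies inside $\rho^{-1}(C)$, the third isomorphism theorem gives $\Oc_K^N/\rho^{-1}(C) \cong \FF_{p^f}^N/C$, of cardinality $p^{f(N-k)}$ since $\dim_{\FF_{p^f}} C = k$. Recalling that for a sublattice $L' \subseteq L$ of finite index the volume scales by $[L:L']$ and the discriminant by $[L:L']^2$, the problem reduces to computing the discriminant of $(\Oc_K^N, b_\alpha)$.

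Because $b_\alpha$ is a sum over the $N$ coordinates, the Gram matrix of $(\Oc_K^N, b_\alpha)$ in the natural $\ZZ$-basis is block diagonal with $N$ identical copies of the Gram matrix $G$ of $(\Oc_K, b_\alpha)$ in some integral basis $e_1, \ldots, e_n$ of $\Oc_K$. To compute $\det G$ I would expand
\[
b_\alpha(e_i, e_j) = \sum_{k=1}^n \sigma_k(\alpha)\, \sigma_k(e_i)\, \overline{\sigma_k(e_j)},
\]
using that $\sigma_k(\alpha) \in \RR$ (as $\alpha \in K\cap\RR$ is fixed by complex conjugation) and that, for a CM field, the intrinsic complex conjugation on $K$ agrees with complex conjugation in $\CC$ after each embedding. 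Setting $M = (\sigma_j(e_i))$ and $D = \mathrm{diag}(\sigma_1(\alpha), \ldots, \sigma_n(\alpha))$, this gives $G = M D M^{*}$, where $M^{*}$ is the conjugate transpose (which reduces to $M^T$ when $K$ is totally real), so $\det G = |\det M|^{2}\,\det D = \Delta\cdot N_{K/\QQ}(\alpha)$ by the standard identity $|\det M|^{2} = \Delta$. Hence $\mathrm{disc}(\Oc_K^N, b_\alpha) = \Delta^N N(\alpha)^N$, and multiplying by $[\Oc_K^N:\rho^{-1}(C)]^2 = p^{2f(N-k)}$ gives the asserted discriminant; the volume is its positive square root.

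The main bookkeeping obstacle is uniformising the totally real and CM cases: one must verify that $\sigma_k(\alpha)$ is real for every embedding and that each embedding intertwines the intrinsic complex conjugation on $K$ with complex conjugation in $\CC$, so that $G = MDM^{*}$ is simultaneously the correct Gram matrix, is a real symmetric matrix, and carries a positive determinant $\Delta \cdot N(\alpha)>0$. Once this is in place, the remaining steps are routine linear algebra and an application of the index formula.
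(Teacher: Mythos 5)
Your proposal is correct and follows essentially the same route as the paper: identify the index $[\Oc_K^N:\rho^{-1}(C)]=p^{f(N-k)}$ via the surjectivity of $\rho$, and rescale the volume/discriminant of the ambient lattice $(\Oc_K^N,b_\alpha)$ accordingly. The only difference is that you verify the $N=1$ discriminant $N(\alpha)\Delta$ directly through the factorization $G=MDM^{*}$, whereas the paper simply cites Bayer-Fluckiger's ideal-lattice result for this fact.
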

\begin{proof}
	For $N=1$, $(\O_K,b_\alpha)$ is a lattice with discriminant $N(\alpha)\Delta$~\cite{BayerIdeallattice}.
Hence $(\O_K^N,b_\alpha)$ has discriminant $(N(\alpha)\Delta)^N$ and volume $(N(\alpha)\Delta)^{\frac{N}{2}}$.
As $\rho$ is a surjective $\ZZ-$module homomorphism and $C$ has index $p^{f(N-k)}$ as a subgroup of $\FF_{p^f}^N$, $\rho^{-1}(C)$ also has index $p^{f(N-k)}$ as a subgroup of $\O_K^N$ and 
we have \cite{Ebeling}
\[
vol((\rho^{-1}(C),b_\alpha))=vol((\O_K^N,b_\alpha))|\O_K^N/\rho^{-1}(C)|=N(\alpha)^{N/2}\Delta^{\frac{N}{2}}p^{f(N-k)}.
\]
\end{proof}

\begin{corollary}
The dual lattice $(\rho^{-1}(C)^*,b_\alpha)$ has discriminant $\Delta^{-N}p^{-2f(N-k)}N(\alpha)^{-N}$ and volume $\Delta^{-N/2}p^{-f(N-k)}N(\alpha)^\frac{-N}{2}$.
Also the lattice $(\rho^{-1}(C^\perp),b_\alpha)$ has discriminant $\Delta^Np^{2fk}N(\alpha)^N$ and volume $\Delta^{\frac{N}{2}}p^{fk}N(\alpha)^\frac{N}{2}$.
\end{corollary}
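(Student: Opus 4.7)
The corollary is an immediate consequence of Lemma 2.1 together with a standard fact about duals and the elementary dimension relation for dual codes, so the proof will essentially be two one-line applications.

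For the first statement, the plan is to invoke the general fact that for any full-rank integral lattice $(L,b)$ with Gram matrix $G$, the dual lattice $(L^*,b)$ has Gram matrix $G^{-1}$, and hence $\mathrm{disc}(L^*,b)=\mathrm{disc}(L,b)^{-1}$ and $\mathrm{vol}(L^*,b)=\mathrm{vol}(L,b)^{-1}$. Applying this to the discriminant and volume computed in Lemma \ref{lem:latdisc} for $(\rho^{-1}(C),b_\alpha)$ immediately yields the claimed expressions $\Delta^{-N}p^{-2f(N-k)}N(\alpha)^{-N}$ and $\Delta^{-N/2}p^{-f(N-k)}N(\alpha)^{-N/2}$.

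For the second statement, the key observation is that $C^\perp\subseteq\FF_{p^f}^N$ is a linear code of dimension $N-k$: this follows from the standard fact that for a non-degenerate symmetric bilinear form on $\FF_{p^f}^N$, the orthogonal complement of a $k$-dimensional subspace has dimension $N-k$. We then apply Lemma \ref{lem:latdisc} verbatim with $C$ replaced by $C^\perp$ and with $k$ replaced by $N-k$. The quantity $N-k$ in the lemma is then replaced by $N-(N-k)=k$, yielding discriminant $\Delta^N p^{2fk}N(\alpha)^N$ and volume $\Delta^{N/2}p^{fk}N(\alpha)^{N/2}$, exactly as stated.

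There is no real obstacle here; the only point that could be flagged is that the first claim implicitly uses that $(\rho^{-1}(C),b_\alpha)$ is full-rank in $\rho^{-1}(C)\otimes_\ZZ\RR\cong\RR^{nN}$ (which is clear since it has rank $nN$ as a $\ZZ$-module and $b_\alpha$ is positive definite, as noted in the introduction), so that the dual lattice is well-defined and its Gram matrix is the inverse of the Gram matrix of $(\rho^{-1}(C),b_\alpha)$.
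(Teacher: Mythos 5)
Your proposal is correct and follows exactly the route the paper intends: the paper states this corollary without proof as an immediate consequence of Lemma \ref{lem:latdisc}, relying on the same standard facts you cite, namely that $\mathrm{vol}(L^*)=\mathrm{vol}(L)^{-1}$ (used again later in the proof of the $d$-modularity proposition, with reference to Ebeling) and that $\dim C^\perp=N-k$, so that substituting $k\mapsto N-k$ in the lemma gives the second claim. Nothing is missing; your remark about full rank and positive definiteness of $b_\alpha$ is a correct (if routine) justification that the dual lattice and the inverse Gram matrix make sense.
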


Let $\{v_1,\dots,v_n\}$ be a $\ZZ-$basis for $\O_K$ and let $\{\omega_1,\dots,\omega_n\}$ be a $\ZZ-$basis for $\p$. 
Suppose $C$ admits a generator matrix in the standard (systematic) form and let $A$ be a matrix such that $(I_k~(A~{\rm mod}~\p))$ is a generator matrix of $C$.

\begin{proposition}\label{prop:MCReal}
For $K$ a totally real number field of degree $n$ with Galois group $\{\sigma_1,\ldots,\sigma_n\}$, 
a generator matrix for $(\rho^{-1}(C),b_\alpha)$ is given by
\begin{equation}\label{eq:MCReal}
M_C=
\begin{bmatrix}
	I_k\otimes M& A\widetilde{\otimes}M\\
	\boldsymbol{0}_{nN-nk,nk}&I_{N-k}\otimes M_p
\end{bmatrix}(I_N \otimes D_\alpha),
\end{equation}
where $M=(\sigma_i(v_j))_{i,j=1,\ldots,n}$, $M_p=(\sigma_i(w_j))_{i,j=1,\ldots,n}$ are respectively
generator matrices for $(\Oc_K^N,b_1)$ and $(\pf^N,b_1)$, $D_\alpha$ is a diagonal matrix whose 
diagonal entries are $\sqrt{\sigma_i(\alpha)}$, $i=1,\ldots,n$, and
	\[
		A\widetilde{\otimes}M:=[\sigma_1(A_1)\otimes M_1,\dots,\sigma_n(A_1)\otimes M_n,\dots,\sigma_n(A_{N-k})\otimes M_1,\dots,\sigma_n(A_{N-k})\otimes M_n],
	\]
	where we denote the columns of the matrices $M,A$ by $M_i$, $i=1,\ldots,n$, $A_j$, $j=1,2,\dots,N-k$ and $\sigma_i$ is understood componentwise, $i=1,\ldots,n$.
\end{proposition}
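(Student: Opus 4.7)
The plan is to exhibit an explicit $\ZZ$-basis of $\rho^{-1}(C)$, apply the scaled canonical embedding that turns $b_\alpha$ into the standard Euclidean inner product on $\RR^{nN}$, and then recognize that the resulting row vectors assemble into the block matrix $M_C$.

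First I would describe $\rho^{-1}(C)$ as a $\ZZ$-module. Because $C$ has generator matrix $(I_k \mid A \bmod \pf)$, lifting $A$ to a matrix over $\O_K^{k \times (N-k)}$ yields
\[
\rho^{-1}(C) = \{(x,\, xA + z) \in \O_K^N : x \in \O_K^k,\ z \in \pf^{N-k}\},
\]
and the map $(x,z) \mapsto (x, xA+z)$ is a $\ZZ$-module isomorphism $\O_K^k \oplus \pf^{N-k} \to \rho^{-1}(C)$. Combined with the $\ZZ$-bases $\{v_j\}$ of $\O_K$ and $\{\omega_j\}$ of $\pf$, this produces a $\ZZ$-basis of $\rho^{-1}(C)$ consisting of the $kn$ vectors $\beta_{i,j}$, which carry $v_j$ in coordinate $i \leq k$ and $v_j A_{i,\ell}$ in coordinate $k+\ell$, together with the $(N-k)n$ vectors $\gamma_{\ell,j}$ with $\omega_j$ in coordinate $k+\ell$ and zero elsewhere.

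Second I would embed these via $\phi_\alpha:(x_1,\ldots,x_N)\mapsto(\sqrt{\sigma_\nu(\alpha)}\,\sigma_\nu(x_i))_{i,\nu}$, with coordinates ordered so that the $n$ Galois coordinates of each $x_i$ form a contiguous block. The same computation that establishes positive-definiteness of $b_\alpha$ in the introduction shows that $b_\alpha(\xv,\yv) = \phi_\alpha(\xv)\cdot\phi_\alpha(\yv)$, so arranging $\phi_\alpha(\beta_{i,j})$ and $\phi_\alpha(\gamma_{\ell,j})$ as rows gives a legitimate generator matrix for $(\rho^{-1}(C), b_\alpha)$.

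Third I would match this row layout with the block structure of $M_C$. Each $\phi_\alpha(\gamma_{\ell,j})$ is supported on the length-$n$ block of coordinates indexed by position $k+\ell$, where it contributes the embedding of $\omega_j$ scaled by $D_\alpha$; the ensemble of these rows is exactly $[\,\boldsymbol{0}\mid I_{N-k}\otimes M_p\,](I_N\otimes D_\alpha)$. For $\phi_\alpha(\beta_{i,j})$, the block at position $i$ contains the embedding of $v_j$ and produces $I_k\otimes M$ in the top-left corner, while the block at position $k+\ell$ contains the embedding of the product $v_j A_{i,\ell}$. The key identity $\sigma_\nu(v_j A_{i,\ell}) = \sigma_\nu(v_j)\sigma_\nu(A_{i,\ell})$ expresses these embeddings as element-wise products of columns of $M$ with the Galois images of the columns of $A$, which is precisely the combination encoded by the notation $A\widetilde{\otimes}M$. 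Factoring the $\sqrt{\sigma_\nu(\alpha)}$ scalings out on the right as $I_N\otimes D_\alpha$ then reproduces $M_C$.

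The main obstacle is purely notational: unpacking $A\widetilde{\otimes}M$ and matching its entries with the scalars $\sigma_\nu(v_j)\sigma_\nu(A_{i,\ell})$ requires carefully fixing the row-index $(i,j)$ and column-index $((k+\ell),\nu)$ conventions. Once these are pinned down, the identification is immediate from the multiplicativity of the Galois automorphisms, and the fact that $M$ and $M_p$ are generator matrices for $(\O_K, b_1)$ and $(\pf,b_1)$ ensures that the first family of rows spans the $\O_K^k$-part while the second family spans the $\pf^{N-k}$-part of $\rho^{-1}(C)$.
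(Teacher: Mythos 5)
Your proposal is correct, but it proves the proposition by a different mechanism than the paper. The paper's proof never exhibits a basis of $\rho^{-1}(C)$ directly: it checks (i) that the lattice spanned by the rows of $M_C$ has volume $N(\alpha)^{N/2}\Delta^{N/2}p^{f(N-k)}$, matching the volume of $(\rho^{-1}(C),b_\alpha)$ from Lemma~\ref{lem:latdisc}, and (ii) that every integer combination of the rows, pulled back through the inverse $\psi$ of the scaled embedding, reduces modulo $\pf$ to a codeword of the form $(x_1,\dots,x_k)(I_k\;A\bmod\pf)\in C$; equality then follows because a full-rank sublattice of equal volume is the whole lattice. You instead establish both inclusions at once by parametrizing $\rho^{-1}(C)$ as the image of the $\ZZ$-module isomorphism $\O_K^k\oplus\pf^{N-k}\to\rho^{-1}(C)$, $(x,z)\mapsto(x,xA+z)$, transporting the bases $\{v_j\}$, $\{\omega_j\}$ to an explicit $\ZZ$-basis $\{\beta_{i,j}\}\cup\{\gamma_{\ell,j}\}$, and verifying via $\sigma_\nu(v_jA_{i\ell})=\sigma_\nu(v_j)\sigma_\nu(A_{i\ell})$ that the embedded basis vectors are exactly the rows of $M_C$ (the identity $b_\alpha(\xv,\yv)=\phi_\alpha(\xv)\cdot\phi_\alpha(\yv)$, valid since $K$ is totally real, makes this a legitimate generator matrix). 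What this buys you is independence from the volume computation: you do not need Lemma~\ref{lem:latdisc} or the index argument behind it, and you get an explicit correspondence between rows of $M_C$ and lattice vectors, which in particular makes the identification of the $A\widetilde{\otimes}M$ block transparent. The paper's route is shorter given that the volume formula is already in hand, and its containment computation is the part it later reuses almost verbatim for the CM case and for Proposition~\ref{MC2}; your route is more self-contained and yields the surjectivity "for free" from the module isomorphism rather than from a determinant count. The only care needed in your write-up is the one you flagged: fixing the row ordering (outer index the code position, inner index the $\O_K$- or $\pf$-basis element) so that the embedded basis really assembles into the stated block pattern.
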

\begin{proof}
Note that $\de{M}=\Delta^{\frac{1}{2}}$ is the volume of $(\O_K,b_1)$~\cite{Bayer}
and similarly, $\de{M_p}=\Delta^{\frac{1}{2}}p^f$ is the volume of $(\p,b_1)$.

	The volume of the lattice generated by $M_C$ is 
\[
	\de{I_N\otimes D_\alpha}\de{I_k\otimes M}\de{I_{N-k}\otimes M_p}=N(\alpha)^{N/2}\Delta^{\frac{k}{2}}\left(\Delta^{\frac{1}{2}}p^f\right)^{N-k},
\]
which agrees with the volume $N(\alpha)^{N/2}\Delta^{\frac{N}{2}}p^{f(N-k)}$ of $(\rho^{-1}(C),b_\alpha)$.

Define $\psi:\sigma(x)\mapsto x\in\O_K$ to be the inverse of the embedding 
\[
	\sigma=(\sqrt{\sigma_1(\alpha)}\sigma_1,\ldots,\sqrt{\sigma_n(\alpha)}\sigma_n):\O_K\hookrightarrow\RR^n.
\]
Then it suffices to prove that
\[
	\rho^{-1}(C)\supseteq\{\psi(\xv M_C):\xv\in\ZZ^{nN}\},
\]
or 
\[
	C \supseteq \{\rho(\psi(\xv M_C)):\xv\in\ZZ^{nN}\}.
\]
For $j=1,2,\dots,N$, let $\uv_j=(u_{j1},\dots,u_{jn})\in\ZZ^n$.
Then $\xv\in\ZZ^{nN}$ can be written as $\xv=(\uv_1,\dots,\uv_N)$.
Let $x_j=\sum_{i=1}^nu_{ji}v_i$, then the $s$th entry of $u_jM D_\alpha$ is given by
\[
	\sum_{i=1}^nu_{ji}\sigma_s(v_i)\sqrt{\sigma_s(\alpha)}=\sqrt{\sigma_s(\alpha)}\sigma_s\left(\sum_{i=1}^nu_{ji}v_i\right)=\sqrt{\sigma_s(\alpha)}\sigma_s(x_j).
\]
Thus
\begin{eqnarray*}
	\xv M_C&=&[\uv_1,\dots,\uv_N]\begin{bmatrix}
	I_k\otimes M& A\widetilde{\otimes}M\\
	\boldsymbol{0}_{nN-nk,nk}&I_{N-k}M_p
\end{bmatrix}(I_N\otimes D_\alpha)\\
&=&\left[\sigma(x_1),\dots,\sigma(x_k),\sigma\left(\sum_{j=1}^ka_{j1}x_j+x'_{k+1}\right),\dots,\sigma\left(\sum_{j=1}^ka_{j(N-k)}x_j+x'_{N}\right)\right],
\end{eqnarray*}
where $x'_{k+1},\dots,x'_N$ are in the ideal $\p$, then $\rho(x'_i)=0$ for $i=k+1,\dots,x_N$.
We have
\begin{eqnarray*}
	&&\rho(\psi(\xv M_C))\\&=&\rho(x_1,\dots,x_k,\sum_{j=1}^ka_{j1}x_j+x'_{k+1},\dots,\sum_{j=1}^ka_{j(N-k)}x_j+x'_N)\\
	&=&(x_1~{\rm mod}~\p,\dots,x_k~{\rm mod}~\p,\sum_{j=1}^ka_{j1}x_j+x'_{k+1}~{\rm mod}~\p,\dots,\sum_{j=1}^ka_{j(N-k)}x_j+x'_N~{\rm mod}~\p)\\
	&=&(x_1~{\rm mod}~\p,\dots,x_k~{\rm mod}~\p)(I_k~A~{\rm mod}~\p)\in C.
\end{eqnarray*}
\end{proof}
\begin{lem}
	The Gram matrix $G_C=M_CM_C^T$ of $(\rho^{-1}(C),b_\alpha)$ is
\begin{equation}\label{eq:GCReal}
G_C=
\begin{bmatrix}
	\Tr{\alpha(I_k+AA^T)\otimes M_1M_1^T}&\Tr{\alpha A\otimes M_1 M_{p,1}^T}\\
	\Tr{\alpha A\otimes M_1M_{p,1}^T}^T&\Tr{\alpha I_{N-k}\otimes M_{p,1}M_{p,1}^T}
\end{bmatrix}
\end{equation}
where ${\rm Tr}={\rm Tr}_{K/\QQ}$ is taken componentwise and $M_{p,1}$ denotes the first column of the matrix $M_p$.
\end{lem}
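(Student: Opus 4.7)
The plan is a direct block-matrix calculation: expand $M_C M_C^T$ using the block decomposition from Proposition~\ref{prop:MCReal}, then identify each of the four resulting blocks with its claimed trace expression. Writing
\[
M_C=\begin{bmatrix}B_1\\B_2\end{bmatrix}(I_N\otimes D_\alpha),\quad B_1=[I_k\otimes M,\ A\widetilde{\otimes}M],\quad B_2=[\boldsymbol{0},\ I_{N-k}\otimes M_p],
\]
symmetry of $D_\alpha$ gives
\[
M_C M_C^T = \begin{bmatrix}B_1\\B_2\end{bmatrix}(I_N\otimes D_\alpha^2)\begin{bmatrix}B_1^T&B_2^T\end{bmatrix}.
\]
The single pointwise identity collapsing each of the four blocks to a trace is $\sum_{s=1}^{n}\sigma_s(\alpha)\sigma_s(x)\sigma_s(y)=\Tr{\alpha xy}$ for $x,y\in K$, specialised with $x,y$ ranging over the $\ZZ$-bases $\{v_i\}$ of $\O_K$ and $\{w_i\}$ of $\p$.

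For the $(2,2)$-block the ordinary mixed-product property of Kronecker products gives
\[
(I_{N-k}\otimes M_p)(I_{N-k}\otimes D_\alpha^2)(I_{N-k}\otimes M_p^T)=I_{N-k}\otimes(M_p D_\alpha^2 M_p^T),
\]
and the pointwise identity above reads the $(i,j)$-entry of $M_pD_\alpha^2 M_p^T$ as $\Tr{\alpha w_iw_j}$; this matches $\Tr{\alpha I_{N-k}\otimes M_{p,1}M_{p,1}^T}$ on interpreting ``$M_{p,1}M_{p,1}^T$'' as the formal $n\times n$ matrix $(w_iw_j)_{ij}$ (with entries in $K$) to which $\Tr$ is applied componentwise. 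The $(1,1)$-block splits into
\[
(I_k\otimes M)(I_k\otimes D_\alpha^2)(I_k\otimes M^T)+(A\widetilde{\otimes}M)(I_{N-k}\otimes D_\alpha^2)(A\widetilde{\otimes}M)^T,
\]
the first summand equalling $I_k\otimes(MD_\alpha^2 M^T)$ and contributing the ``$I_k$'' in the asserted $(I_k+AA^T)$; the $(1,2)$-block is the cross-term $(A\widetilde{\otimes}M)(I_{N-k}\otimes D_\alpha^2)(I_{N-k}\otimes M_p^T)$, and the $(2,1)$-block is its transpose by symmetry of $G_C$.

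The main technical step is the $\widetilde{\otimes}$-term. From the definition, the $((j,i),(l,s))$-entry of $A\widetilde{\otimes}M$ is $\sigma_s(a_{jl})\sigma_s(v_i)$, so the $((j,i),(j',i'))$-entry of $(A\widetilde{\otimes}M)(I_{N-k}\otimes D_\alpha^2)(A\widetilde{\otimes}M)^T$ is
\[
\sum_{l=1}^{N-k}\sum_{s=1}^{n}\sigma_s(\alpha)\sigma_s(a_{jl})\sigma_s(a_{j'l})\sigma_s(v_i)\sigma_s(v_{i'})=\Tr{\alpha(AA^T)_{jj'}v_iv_{i'}},
\]
where the inner $s$-sum becomes a trace via the pointwise identity. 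Adding the $I_k$-contribution produces, entrywise, $\Tr{\alpha(I_k+AA^T)\otimes M_1M_1^T}$, and the identical unpacking of the $(1,2)$-block gives its $((j,i),(l,i'))$-entry as $\Tr{\alpha a_{jl}v_iw_{i'}}$, matching $\Tr{\alpha A\otimes M_1 M_{p,1}^T}$. The chief obstacle is that $A\widetilde{\otimes}M$ is \emph{not} a genuine Kronecker product — the Galois index $s$ and the column-of-$M$ index are locked together — so the single-step mixed-product identity is unavailable, and the double sum over $(l,s)$ must be collapsed by hand before the inner $s$-sum can be recognised as a trace from $K$ to $\QQ$.
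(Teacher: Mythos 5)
Your proof is correct and takes essentially the same route as the paper's: a direct block expansion of $M_CM_C^T$ with the diagonal factor $D_\alpha^2$ absorbed, followed by recognizing each sum over the embeddings $\sigma_s$ as a trace via $M_s=\sigma_s(M_1)$ and $M_{p,s}=\sigma_s(M_{p,1})$. Your entry-level unpacking of the $\widetilde{\otimes}$ blocks (collapsing the double sum over $(l,s)$ by hand) is just a more explicit rendering of the matrix-level identities the paper asserts.
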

\begin{proof}
Let $\tilde{D_\alpha}=D_\alpha D_\alpha^T$ be the diagonal matrix with diagonal entries given by $\sigma_1(\alpha),\dots,\sigma_n(\alpha)$.
For $M_C$ in (\ref{eq:MCReal}), a direct computation gives
\[
G_C=\begin{bmatrix}
	I_k\otimes M\tilde{D}_\alpha M^T+(A\widetilde{\otimes}M)(I_{N-k}\otimes\tilde{D}_\alpha)(A\widetilde{\otimes}M)^T&(A\widetilde{\otimes}M)(I_{N-k}\otimes \tilde{D}_\alpha M_p^T)\\
	(I_{N-k}\otimes M_p\tilde{D}_\alpha)(A\widetilde{\otimes}M)^T&I_{N-k}\otimes M_p\tilde{D}_\alpha M_p^T
\end{bmatrix}.
\]
Using that $M_i=\sigma_i(M_1)$, $(1\leq i\leq n)$, we have
\begin{eqnarray*}
	(A\widetilde{\otimes} M)(I_{N-k}\otimes\tilde{D}_\alpha)(A\widetilde{\otimes} M)^T &= &\Tr{\alpha AA^T\otimes M_1M_1^T}\\
	I_k\otimes M\tilde{D}_\alpha M^T&=&\Tr{\alpha I_k\otimes M_1M_1^T}
\end{eqnarray*}
thus showing that
\begin{equation*}
I_k\otimes M\tilde{D}_\alpha M^T+(A\widetilde{\otimes}M)(I_{N-k}\otimes\tilde{D}_\alpha)(A\widetilde{\otimes}M)^T=\Tr{\alpha(I_k+AA^T)\otimes M_1M_1^T}.
\end{equation*}
Similarly, let $M_{p,i}$ denote the $i$th column of $M_p$, then using $\sigma_i(M_{p,1})=M_{p,i}$ $(1\leq i\leq n)$, we have
\[
	I_{N-k}\otimes M_p\tilde{D}_\alpha M_p^T=\Tr{\alpha I_{N-k}\otimes M_{p,1}M_{p,1}^T}.
\]
Moreover,
\begin{equation*}
  \begin{array}{l}
	  (A\widetilde{\otimes}M)(I_{N-k}\otimes \tilde{D}_\alpha M_p^T)=\\
    \begin{bmatrix}
	    \sigma_1(a_{11})M_1&\sigma_2(a_{11})M_2& \!\!\!\!\!\ldots \!\!\!\!\!& \sigma_n(a_{1,{N-k}})M_n\\
      \vdots   &  \vdots &    & \vdots\\
      \sigma_1(a_{k,1})M_1&\sigma_2(a_{k,2})M_2& \!\!\!\!\! \ldots \!\!\!\!\!    &\sigma_n(a_{k,N-k})M_n
    \end{bmatrix}
   \begin{bmatrix}
	\sigma_1(\alpha)M_{p,1}^T&0&\!\!\!\!\!\ldots \!\!\!\!\!&0\\
	\sigma_2(\alpha)M_{p,2}^T&0&\!\!\!\!\!\ldots \!\!\!\!\!&0\\
        \vdots&\vdots&\ddots&\vdots\\
	\sigma_n(\alpha)M_{p,n}^T&0&\!\!\!\!\!\ldots \!\!\!\!\!&0\\
        \vdots&\vdots&\ddots&\vdots\\
	0&0&\!\!\!\!\!\ldots \!\!\!\!\!&\sigma_{n-1}(\alpha)M_{p,n-1}^T\\
	0&0&\!\!\!\!\!\ldots \!\!\!\!\!&\sigma_n(\alpha)M_{p,n}^T
    \end{bmatrix}\\
    =\sigma_1(A\otimes \alpha M_1M_{p,1}^T)+\sigma_2(A\otimes \alpha M_1M_{p,1}^T)+\dots+\sigma_n(A\otimes\alpha M_1M_{p,1}^T)\\
    =\Tr{\alpha A\otimes M_1M_{p,1}^T},
  \end{array}
\end{equation*}
\end{proof}

When $K$ is a CM number field, $n$ is even and all embeddings of $K$ into $\CC$ are complex embeddings.
Assume $\sigma_{i+1}$ is the conjugate of $\sigma_i$ for $i=1,3,5,\dots,n-1$.

\begin{lemma}
Let $K$ be a CM number field of degree $n$.
Then
\begin{equation}\label{eqn:MCM}
	M=\sqrt{2}
\begin{bmatrix}
	\re\sigma_1(v_1)&\im\sigma_2(v_1)&\re\sigma_3(v_1)&\dots&\re\sigma_{n-1}(v_1)&\im\sigma_n(v_1)\\
	\vdots&\vdots&\vdots&\ddots&\vdots&\vdots\\
	\re\sigma_1(v_n)&\im\sigma_2(v_n)&\re\sigma_3(v_n)&\dots&\re\sigma_{n-1}(v_n)&\im\sigma_n(v_n)
\end{bmatrix}
\end{equation}
is a generator matrix for the lattice $(\O_K,b_1)$ and $\de{M}=\Delta^{\frac{1}{2}}$.
\end{lemma}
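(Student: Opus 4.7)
The plan is to show that the rows of $M$ are the images of $v_1,\ldots,v_n$ under an isometric embedding of $(\O_K,b_1)$ into $\RR^n$, and then compute its determinant via $\det(MM^T)$.

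First I would set up the real embedding adapted to the CM structure. Since $\sigma_{i+1}=\bar{\sigma_i}$ for $i=1,3,\ldots,n-1$, for each conjugate pair I pair off the two complex coordinates of $\sigma(x)\in\CC^n$ into two real coordinates $\sqrt{2}\,\re\sigma_{2l-1}(x)$ and $\sqrt{2}\,\im\sigma_{2l}(x)$, $l=1,\ldots,n/2$. This defines a map $\tilde\sigma:K\hookrightarrow\RR^n$, and the $j$-th row of $M$ in \eqref{eqn:MCM} is precisely $\tilde\sigma(v_j)$. So to show $M$ is a generator matrix it suffices to check that $\tilde\sigma$ is an isometry from $(K,b_1)$ to $\RR^n$ with the standard inner product, i.e.\ $MM^T=(b_1(v_i,v_j))_{i,j}$.

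The computational core is to verify this column-pair by column-pair. For each $l$, the contribution of columns $2l-1,2l$ to $(MM^T)_{ij}$ is
\[
2\,\re\sigma_{2l-1}(v_i)\re\sigma_{2l-1}(v_j)+2\,\im\sigma_{2l}(v_i)\im\sigma_{2l}(v_j).
\]
Using $\re\sigma_{2l}=\re\sigma_{2l-1}$ and $\im\sigma_{2l}=-\im\sigma_{2l-1}$, this equals
\[
2\,\re\bigl(\sigma_{2l-1}(v_i)\,\overline{\sigma_{2l-1}(v_j)}\bigr)=\sigma_{2l-1}(v_i\bar v_j)+\sigma_{2l}(v_i\bar v_j),
\]
where in the last step I use that complex conjugation lies in $\mathrm{Gal}(K/\QQ)$, so $\overline{\sigma_{2l-1}(v_j)}=\sigma_{2l-1}(\bar v_j)$ and $\sigma_{2l}(\bar v_j)=\overline{\sigma_{2l}(v_j)}$. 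Summing over $l$ collapses to $\sum_{s=1}^n \sigma_s(v_i\bar v_j)=\Tr{v_i\bar v_j}=b_1(v_i,v_j)$. This is the main (and only really non-trivial) step; the potential pitfall is the bookkeeping of the sign flip coming from $\im\sigma_{2l}=-\im\sigma_{2l-1}$, which fortunately cancels out.

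Finally, for the determinant I would use $\det(MM^T)=\det(b_1(v_i,v_j))_{i,j}$. Writing $T=(\sigma_i(v_j))_{i,j}$ one has $(\Tr{v_i\bar v_j})_{i,j}=T^T\bar T$ (where $\bar T$ means componentwise complex conjugation), so
\[
\det(MM^T)=\det(T)\,\overline{\det(T)}=|\det(T)|^2=\Delta,
\]
the last equality being the standard identity $\det(T)^2=d_K$ for the discriminant. Hence $|\det M|=\Delta^{1/2}$, which finishes the lemma.
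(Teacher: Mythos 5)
Your proposal is correct and follows essentially the same route as the paper: both arguments verify that $MM^T$ coincides with the Gram matrix $(\Tr{v_i\bar v_j})_{i,j}$ by grouping the columns into conjugate pairs and using the CM property that complex conjugation commutes with the embeddings, so that $\re\sigma_s(v_i)\re\sigma_s(v_j)+\im\sigma_s(v_i)\im\sigma_s(v_j)=\re\sigma_s(v_i\bar v_j)$ sums to the trace. The only cosmetic difference is the last step: the paper reads off $\de{M}=\Delta^{\frac12}$ from the known volume of $(\O_K,b_1)$, whereas you compute $\det(MM^T)=|\det(\sigma_i(v_j))|^2=\Delta$ directly; both are fine.
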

\begin{proof}
	The Gram matrix for $(\O_K,b_1)$ is $G=(\Tr{v_i\bar{v}_j})_{1\leq i,j\leq n}$.
	For $i=1,2,\dots,n$,
	\begin{eqnarray*}
		(MM^T)_{ii}&=&2\sum_{j=1,3,\dots,n-1}(\re\sigma_j(v_i))^2+(\im\sigma_{j+1}(v_i))^2=2\sum_{j=1,3,\dots,n-1}|\sigma_j(v_i)|^2\\
		&=&2\sum_{j=1,3,\dots,n-1}\sigma_j(|v_i|^2)=\Tr{|v_i|^2}=G_{ii}.
	\end{eqnarray*}
	For $i,j=1,2,\dots,n,i\neq j$,
	\begin{eqnarray*}
		(MM^T)_{ij}&=&2\sum_{s=1,3,\dots,n-1}\re\sigma_s(v_i)\re\sigma_s(v_j)+\im\sigma_{s+1}(v_i)\im\sigma_{s+1}(v_j)\\
		&=&2\sum_{s=1,3,\dots,n-1}\re\sigma_s(v_i)\re\sigma_s(v_j)+\im\sigma_s(v_i)\im\sigma_s(v_j)\\
		&=&2\sum_{s=1,3,\dots,n-1}\re\sigma_s(v_i\bar{v}_j)=\Tr{v_i\bar{v}_j}=G_{ij}.
	\end{eqnarray*}
	The determinant of $M$ is then given by the volume of $(\O_K,b_1)$.
\end{proof}
Define 
\begin{equation}\label{eqn:MpCM}
M_p=\sqrt{2}
\begin{bmatrix}	\re\sigma_1(\omega_1)&\im\sigma_2(\omega_1)&\re\sigma_3(\omega_1)&\dots&\re\sigma_{n-1}(\omega_1)&\im\sigma_n(\omega_1)\\
	\vdots&\vdots&\vdots&\ddots&\vdots&\vdots\\
	\re\sigma_1(\omega_n)&\im\sigma_2(\omega_n)&\re\sigma_3(\omega_n)&\dots&\re\sigma_{n-1}(\omega_n)&\im\sigma_n(\omega_n)
\end{bmatrix}.
\end{equation}
Then similarly $M_p$ is a generator matrix for $(\p,b_1)$ and has determinant $\Delta^{\frac{1}{2}}p^f$.
As $\alpha$ is totally positive, all $\sigma_i(\alpha)\in\RR$.
Let $D_\alpha$ be a diagonal matrix whose diagonal entries are $\sqrt{\sigma_i(\alpha)}$, $i=1,\ldots,n$.

\begin{proposition}\label{prop:MCCM}
	Let $K$ be a CM field with degree $n$ and Galois group $\{\sigma_1,\sigma_2,\dots,\sigma_n\}$, where $\sigma_{i+1}$ is the conjugate of $\sigma_i$ $(i=1,3,\dots,n-1)$.
       	A generator matrix for $(\rho^{-1}(C),b_\alpha)$ is given by
	\begin{equation}\label{eq:MCCM}
		M_C=\begin{bmatrix}
			I_k\otimes M& A\widetilde{\tilde{\otimes}} M\\
	\boldsymbol{0}_{nN-nk,nk}&I_{N-k}\otimes M_p
\end{bmatrix}(I_N\otimes D_\alpha),
	\end{equation}
	where $M$ and $M_p$ are defined in (\ref{eqn:MCM}) and (\ref{eqn:MpCM}) respectively.
	$A$ is a matrix such that $(I_k~(A~{\rm mod}~\p))$ is a generator matrix of $C$ and
	\begin{equation*}
		\begin{array}{l}
		A\widetilde{\tilde{\otimes}}M:=[
		{\rm Re}\sigma_1(A_1) \otimes M_1+{\rm Im}\sigma_1(A_1)\otimes M_2 , {\rm Re}\sigma_1(A_1) \otimes M_2-{\rm Im}\sigma_1(A_1)\otimes M_1,\\
		\re\sigma_3(A_1)\otimes M_3+\im\sigma_3(A_1)\otimes M_4, \re\sigma_3(A_1)\otimes M_4-\im\sigma_3(A_1)\otimes M_3,\dots,\\
		{\rm Re}\sigma_{n-1}(A_1) \otimes M_{n-1}+{\rm Im}\sigma_{n-1}(A_1)\otimes M_n , {\rm Re}\sigma_{n-1}(A_1) \otimes M_n-{\rm Im}\sigma_{n-1}(A_1)\otimes M_{n-1},\\
		\dots,{\rm Re}\sigma_{n-1}(A_{N-k}) \otimes M_{n-1}+{\rm Im}\sigma_n(A_{N-k})\otimes M_n,\\
 {\rm Re}\sigma_{n-1}(A_{N-k}) \otimes M_n-{\rm Im}\sigma_{n-1}(A_{N-k})\otimes M_{n-1}],
		 \end{array}
	\end{equation*}
	where we denote the columns of the matrices $M,A$ by $M_i$, $i=1,\ldots,n$, $A_j$, $j=1,2,\dots,N-k$, Re and Im are understood componentwise.
\end{proposition}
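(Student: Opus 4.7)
The plan is to follow the same two-step strategy as the proof of Proposition \ref{prop:MCReal}, adapted to the CM embedding. First I check that the volume of the lattice $\Lambda$ generated by $M_C$ matches the volume of $(\rho^{-1}(C),b_\alpha)$ given by Lemma \ref{lem:latdisc}, and then I show $\Lambda$ is contained in the image of $\rho^{-1}(C)$ under a suitable real embedding $\tau$. Since both lattices have rank $nN$ and the same volume, the inclusion forces equality.

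The volume check is immediate from the block-triangular form of $M_C$: using $\de{M}=\Delta^{1/2}$ and $\de{M_p}=\Delta^{1/2}p^f$ from the preceding lemmas together with $\de{D_\alpha}=N(\alpha)^{1/2}$, one gets $\de{M_C} = \Delta^{k/2}(\Delta^{1/2}p^f)^{N-k}N(\alpha)^{N/2}$, which agrees with Lemma \ref{lem:latdisc}.

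For the inclusion, I would introduce the CM embedding
\[
\tau : x \mapsto \sqrt{2}\bigl(\sqrt{\sigma_1(\alpha)}\,\re\sigma_1(x),\ \sqrt{\sigma_2(\alpha)}\,\im\sigma_2(x),\ \ldots,\ \sqrt{\sigma_n(\alpha)}\,\im\sigma_n(x)\bigr),
\]
extended componentwise to $\Oc_K^N$ (note $\sigma_s(\alpha)\in\RR$ and $\sigma_s(\alpha)>0$ since $\alpha\in K\cap\RR$ is totally positive). Writing $\xv\in\ZZ^{nN}$ as $(\uv_1,\ldots,\uv_N)$ with $\uv_j\in\ZZ^n$ and setting $x_j = \sum_i u_{ji}v_i$, a direct expansion using the linearity of $\re,\im$ over $\ZZ$ gives $\uv_j M D_\alpha = \tau(x_j)$ and similarly $\uv_j M_p D_\alpha = \tau(y_j)$ for some $y_j\in\p$. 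The remaining task is to verify that the top-right block $(A\widetilde{\tilde{\otimes}}M)(I_{N-k}\otimes D_\alpha)$ realizes $\tau\bigl(\sum_{j=1}^k a_{jl} x_j\bigr)$ in each block of positions $k+l$.

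The substantive part, and the main obstacle, is the sign bookkeeping that shows the definition of $A\widetilde{\tilde{\otimes}}M$ encodes complex multiplication correctly. For odd $j$, expanding the $(p,r)$-entry of the two columns of $A\widetilde{\tilde{\otimes}}M$ associated with the conjugate pair $(\sigma_j,\sigma_{j+1})$ and the $l$-th column of $A$ gives, after combining $\im\sigma_{j+1}(v_r) = -\im\sigma_j(v_r)$ with the multiplication identities $\re\sigma_j(ab)=\re\sigma_j(a)\re\sigma_j(b)-\im\sigma_j(a)\im\sigma_j(b)$ and $\im\sigma_j(ab)=\re\sigma_j(a)\im\sigma_j(b)+\im\sigma_j(a)\re\sigma_j(b)$, exactly $\sqrt{2}\re\sigma_j(a_{pl}v_r)$ and $\sqrt{2}\im\sigma_{j+1}(a_{pl}v_r)$. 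Summing over $p$ produces $\tau\bigl(\sum_{j=1}^k a_{jl}x_j\bigr)$ in positions $k+l$ (the factors $\sqrt{\sigma_s(\alpha)}$ are then supplied by the final $I_N\otimes D_\alpha$). Thus $\tau^{-1}(\xv M_C) = (x_1,\ldots,x_k,\,\sum_j a_{j1}x_j + x'_{k+1},\ldots,\sum_j a_{j,N-k}x_j + x'_N)$ with each $x'_{k+l}\in\p$, and its componentwise reduction mod $\p$ equals $(x_1,\ldots,x_k)(I_k~A~\text{mod}~\p) \in C$, completing the argument.
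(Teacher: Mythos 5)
Your proposal is correct and follows essentially the same route as the paper's own proof: a volume comparison via the block-triangular determinant, followed by the componentwise computation with the embedding $\sqrt{2}(\sqrt{\sigma_1(\alpha)}\,\re\sigma_1,\sqrt{\sigma_2(\alpha)}\,\im\sigma_2,\ldots,\sqrt{\sigma_n(\alpha)}\,\im\sigma_n)$, where the sign bookkeeping $\im\sigma_{t+1}(v)=-\im\sigma_t(v)$ combined with the real/imaginary multiplication identities shows that the columns of $A\widetilde{\tilde{\otimes}}M$ encode $\re\sigma_t(a_{jl}x_j)$ and $\im\sigma_{t+1}(a_{jl}x_j)$, so that reduction mod $\p$ lands in $C$. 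The concluding step (full-rank inclusion plus equal volume forces equality) is exactly the paper's implicit argument.
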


\begin{proof}
	The volume of the lattice generated by $M_C$ is
\begin{eqnarray*}
	\de{I_k\otimes M}\de{I_{N-k}\otimes M_p}\de{I_N\otimes D_\alpha}
&=&\Delta^{\frac{k}{2}}\left(\Delta^{\frac{1}{2}}p^f\right)^{N-k}N(\alpha)^{N/2}\\
&=&\Delta^{\frac{N}{2}}p^{f(N-k)}N(\alpha)^{N/2},
\end{eqnarray*}
which agrees with the volume of $(\rho^{-1}(C),b_\alpha)$.

Define $\psi:\sigma(x)\mapsto x\in\O_K$ to be the inverse of the embedding 
\[
	\sigma=\sqrt{2}(\sqrt{\sigma_1(\alpha)}\re\sigma_1,\sqrt{\sigma_2(\alpha)}\im\sigma_2,\dots,\sqrt{\sigma_{n-1}(\alpha)}\re\sigma_{n-1},\sqrt{\sigma_n(\alpha)}\im\sigma_n):\O_K\hookrightarrow\RR^n.
\]
Then it suffices to prove
\[
	\rho^{-1}(C)\supseteq\{\psi(\xv M_C):\xv\in\ZZ^{nN}\},
\]
or
\[
	\{\rho(\psi(\xv M_C)):\xv\in\ZZ^{nN}\}\subseteq C.
\]
For $j=1,2,\dots,N$, let $\uv_j=(u_{j1},\dots,u_{jn})\in\ZZ^n$.
Then $\xv\in\ZZ^{nN}$ can be written as $\xv=(\uv_1,\dots,\uv_N)$.
Let $x_j=\sum_{i=1}^nu_{ji}v_i$, we have the $t$th entry of $\uv_j MD_\alpha$ is
\[
	\sqrt{2}\sum_{i=1}^nu_{ji}\re\sigma_t(v_i)\sqrt{\sigma_t(\alpha)}=\sqrt{2}\sqrt{\sigma_t(\alpha)}\sum_{i=1}^n\re\sigma_t(u_{ji}v_i)=\sqrt{2}\sqrt{\sigma_t(\alpha)}\re\sigma_t(x_j)
\]
for $t$ odd, or $\sqrt{2}\sqrt{\sigma_t(\alpha)}\im\sigma_t(x_j)$ for $t$ even.
And the $st$th entry $(1\leq s\leq N-k,1\leq t\leq n)$ of $\uv_j
(A\widetilde{\tilde{\otimes}} M)(I_{N-k}\otimes D_\alpha)$ $(1\leq j\leq k)$ is
\begin{eqnarray*}
	&\ &\sqrt{2}\sum_{i=1}^nu_{ji}\left[\re\sigma_t(a_{js})\re\sigma_t(v_i)+\im\sigma_t(a_{js})\im\sigma_{t+1}(v_i)\right]\sqrt{\sigma_t(\alpha)}\\
	&=&\sqrt{2}\sqrt{\sigma_t(\alpha)}\sum_{i=1}^n\re\sigma_t(a_{js})\re\sigma_t(u_{ji}v_i)+\im\sigma_t(a_{js})\im\sigma_{t+1}(u_{ji}v_i)\\
	&=&\sqrt{2}\sqrt{\sigma_t(\alpha)}\sum_{i=1}^n\re\sigma_t(a_{js}u_{ji}v_i)=\sqrt{2}\sqrt{\sigma_t(\alpha)}\re\sigma_t(a_{js})x_j
\end{eqnarray*}
for $t$ odd, or $\sqrt{2}\sqrt{\sigma_t(\alpha)}\im\sigma_t(a_{js})x_j$ for $t$ even.

Then
\begin{eqnarray*}
	\xv M_C&=&[\uv_1,\dots,\uv_N]\begin{bmatrix}
		I_k\otimes M& A\widetilde{\tilde{\otimes}} M\\
	\boldsymbol{0}_{nN-nk,nk}&I_{N-k}\otimes M_p
\end{bmatrix}(I_N\otimes D_\alpha)\\
&=&\left[\sigma(x_1),\dots,\sigma(x_k),\sigma\left(\sum_{j=1}^ka_{j1}x_j+x'_{k+1}\right),\dots,\sigma\left(\sum_{j=1}^ka_{j(N-k)}x_j+x'_{N}\right)\right],
\end{eqnarray*}
where $x'_{k+1},\dots,x'_N$ are in the ideal $\p$, and hence $\rho(x'_i)=0$ for $i=k+1,\dots,x_N$.
Then we have
\begin{eqnarray*}
	&&\rho(\psi(\xv M_C))\\&=&\rho(x_1,\dots,x_k,\sum_{j=1}^ka_{j1}x_j+x'_{k+1},\dots,\sum_{j=1}^ka_{j(N-k)}x_j+x'_N)\\
	&=&(x_1~{\rm mod}~\p,\dots,x_k~{\rm mod}~\p,\sum_{j=1}^ka_{j1}x_j+x'_{k+1}~{\rm mod}~\p,\dots,\sum_{j=1}^ka_{j(N-k)}x_j+x'_N~{\rm mod}~\p)\\
	&=&(x_1~{\rm mod}~\p,\dots,x_k~{\rm mod}~\p)(I_k~A~{\rm mod}~\p)\in C.
\end{eqnarray*}
\end{proof}
\begin{remark}\label{rm:AOMCM}
\item 1. Let $\vv=[v_1,v_2,\dots,v_n]^T$, then 
	\begin{eqnarray*}
		A\widetilde{\tilde{\otimes}}M&=&\sqrt{2}[\re\sigma_1(A_1\otimes\vv),\im\sigma_2(A_1\otimes\vv),\\
&&\dots,\re\sigma_{n-1}(A_1\otimes\vv),\im\sigma_n(A_1\otimes\vv),\dots,\im\sigma_n(A_{N-k}\otimes\vv)].
	\end{eqnarray*}
\item 2. When $p$ is totally ramified, the entries of $A~{\rm mod}~\p$ are in $\FF_p$ and hence $A\widetilde{\tilde{\otimes}}M=A\otimes M$.
\end{remark}
\begin{proposition}\label{prop:gramIm}
	The Gram matrix $G_C=M_CM_C^T$ of $(\rho^{-1}(C),b_\alpha)$ is
\begin{equation}\label{eq:GCCM}
G_C=
\begin{bmatrix}
	\Tr{\alpha(I+AA^\dagger)\otimes \vv\vv^\dagger}&\Tr{\alpha A\otimes (\vv\wv^\dagger)}\\
	\Tr{\alpha A^T\otimes(\bar{\wv}\vv^T)}&\Tr{\alpha I_{N-k}\otimes\wv\wv^\dagger}
\end{bmatrix}
\end{equation}
where ${\rm Tr} = {\rm Tr}_{K/\QQ}$ is taken componentwise and $\wv=[w_1,w_2,\dots,w_n]^T$.
$\vv^\dagger=\bar{v}^T$, is the conjugate transpose of $\vv$.
Similarly $A^\dagger=\bar{A}^T,\ \wv^\dagger=\bar{\wv}^T$.
\end{proposition}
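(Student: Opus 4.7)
My plan is to compute $G_C=M_CM_C^T$ directly from the factored form in (\ref{eq:MCCM}). Writing $M_C=B(I_N\otimes D_\alpha)$ with $B$ the leading block matrix and setting $\tilde D_\alpha:=D_\alpha D_\alpha^T=\mathrm{diag}(\sigma_1(\alpha),\ldots,\sigma_n(\alpha))$, we get $G_C=B(I_N\otimes\tilde D_\alpha)B^T$. Expanding the $2\times 2$ block product decomposes $G_C$ into four blocks of exactly the shape of (\ref{eq:GCReal}); the task is then to identify each with the corresponding trace expression in (\ref{eq:GCCM}).

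The whole argument would rest on one reusable identity, which I would state and prove first. For any column vector $y\in\O_K^m$, form the $m\times n$ matrix $X$ whose $t$-th column is $\sqrt{2}\re\sigma_t(y)$ for $t$ odd and $\sqrt{2}\im\sigma_t(y)$ for $t$ even. I claim
\[
(X\tilde D_\alpha X^T)_{ab}=\Tr{\alpha y_a\bar y_b}.
\]
To see this, group the $n$ columns into pairs $(2s-1,2s)$; since $\alpha$ is totally real we have $\sigma_{2s}(\alpha)=\sigma_{2s-1}(\alpha)$, and since $\sigma_{2s}=\overline{\sigma_{2s-1}}$, each pair contributes $2\sigma_{2s-1}(\alpha)\re\sigma_{2s-1}(y_a\bar y_b)$. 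Summing in $s$ and using $2\re\sigma_{2s-1}(z)=\sigma_{2s-1}(z)+\sigma_{2s}(z)$ collapses the sum to the full trace. Specializing $y=\vv$ and $y=\wv$ immediately yields $M\tilde D_\alpha M^T=\Tr{\alpha\vv\vv^\dagger}$ and $M_p\tilde D_\alpha M_p^T=\Tr{\alpha\wv\wv^\dagger}$; after tensoring with $I_k$ and $I_{N-k}$, these handle the $I_k\otimes$-piece of the top-left block and the whole bottom-right block of $G_C$.

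For the remaining pieces I would invoke Remark~\ref{rm:AOMCM}(1), which asserts that $A\widetilde{\tilde\otimes}M$ is an $X$-shaped matrix in the sense above, block by block with $y=A_j\otimes\vv$ ranging over $j=1,\ldots,N-k$. Applying the identity per block yields
\[
(A\widetilde{\tilde\otimes}M)(I_{N-k}\otimes\tilde D_\alpha)(A\widetilde{\tilde\otimes}M)^T=\sum_{j=1}^{N-k}\Tr{\alpha(A_j\otimes\vv)(A_j\otimes\vv)^\dagger}=\Tr{\alpha(AA^\dagger)\otimes(\vv\vv^\dagger)},
\]
which combines with the $I_k\otimes$-piece to give the stated top-left block. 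For the off-diagonal block $(A\widetilde{\tilde\otimes}M)(I_{N-k}\otimes\tilde D_\alpha M_p^T)$, the same odd/even pairing, applied entrywise, shows that its $\bigl((l,i_1),(j,i_2)\bigr)$ entry equals $\Tr{\alpha a_{lj}v_{i_1}\bar w_{i_2}}$, so the whole block is $\Tr{\alpha A\otimes\vv\wv^\dagger}$. The bottom-left block is the transpose, and the identity $(\vv\wv^\dagger)^T=\bar\wv\vv^T$ recasts it in the stated form.

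The main obstacle, though essentially notational, is bookkeeping the conjugates: unlike in the totally real case leading to (\ref{eq:GCReal}), Hermitian adjoints $A^\dagger,\vv^\dagger,\wv^\dagger$ rather than plain transposes appear in $G_C$. They are forced by the identity $\overline{\sigma(x)}=\sigma(\bar x)$, which holds for every $\sigma$ in the Galois group of $K$ precisely because, $K$ being a CM field, complex conjugation lies in that group. Once this point is tracked carefully, the proof is block-matrix bookkeeping that mirrors the totally real computation.
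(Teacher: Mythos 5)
Your proposal is correct and follows essentially the same route as the paper: expand $G_C=B(I_N\otimes\tilde D_\alpha)B^T$ in blocks, rewrite $A\widetilde{\tilde{\otimes}}M$ via Remark \ref{rm:AOMCM}, and pair the conjugate embeddings $\sigma_{2s-1},\sigma_{2s}$ to convert each real matrix product into a componentwise trace, yielding the four stated blocks. Your only departure is presentational — isolating the pairing computation as one reusable identity (whose bilinear two-vector form you correctly note handles the off-diagonal blocks) — whereas the paper performs the same computation inline block by block.
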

\begin{proof}
Let $\tilde{D_\alpha}=D_\alpha D_\alpha^T$ be the diagonal matrix with diagonal entries given by $\sigma_1(\alpha),\dots,\sigma_n(\alpha)$.
For $M_C$ in (\ref{eq:MCCM}), a direct computation gives
\[
G_C=\begin{bmatrix}
	I_k\otimes M\tilde{D}_\alpha M^T+(A\widetilde{\tilde{\otimes}}M)(I_{N-k}\otimes\tilde{D}_\alpha)(A\widetilde{\tilde{\otimes}}M)^T&(A\widetilde{\tilde{\otimes}}M)(I_{N-k}\otimes \tilde{D}_\alpha M_p^T)\\
	(I_{N-k}\otimes M_p\tilde{D}_\alpha)(A\widetilde{\tilde{\otimes}}M)^T&I_{N-k}\otimes M_p\tilde{D}_\alpha M_p^T
\end{bmatrix}.
\]
By Remark \ref{rm:AOMCM},
\begin{eqnarray*}
	&&(A\widetilde{\tilde{\otimes}}M)(I_{N-k}\otimes\tilde{D}_\alpha)(A\widetilde{\tilde{\otimes}}M)^T \\&=& \Tr{(\alpha A_1\otimes\vv)(A_1^\dagger\otimes\vv^\dagger)}+\dots+\Tr{(\alpha A_{N-k}\otimes\vv)(A_{N-k}^\dagger\otimes\vv^\dagger)}\\
&=&\Tr{\alpha A_1A_1^\dagger\otimes\vv\vv^\dagger}+\dots+\Tr{\alpha A_{N-k}A_{N-k}^\dagger\otimes\vv\vv^\dagger}\\
&=&\Tr{\alpha(A_1A_1^\dagger+\dots+A_{N-k}A_{N-k}^\dagger)\otimes\vv\vv^\dagger}\\
&=&\Tr{\alpha AA^\dagger\otimes\vv\vv^\dagger}.
\end{eqnarray*}
Furthermore,
\[
	I_k\otimes M\tilde{D}_\alpha M^T=\Tr{\alpha I_k\otimes\vv\vv^\dagger},\text{ and }I_{N-k}\otimes M_p\tilde{D}_\alpha M_p^T=\Tr{\alpha I_{N-k}\otimes\wv\wv^\dagger}.
\]
Hence
\[
	I_k\otimes M\tilde{D}_\alpha M^T+(A\widetilde{\tilde{\otimes}}M)(I_{N-k}\otimes\tilde{D}_\alpha)(A\widetilde{\tilde{\otimes}}M)^T=\Tr{\alpha(I_k+AA^\dagger)\otimes \vv\vv^\dagger}.
\]
Next, it can be computed that 
\[
(I_{N-k}\otimes M_p\tilde{D}_\alpha)(A\widetilde{\tilde{\otimes}}M)^T=
\Tr{\alpha A^T\otimes(\bar{\wv}\vv^T)}
\]
which also gives
\[
	(A\widetilde{\tilde{\otimes}}M)(I_{N-k}\otimes \tilde{D}_\alpha M_p^T)=\Tr{\alpha A\otimes (\vv\wv^\dagger)}.
\]
So the Gram Matrix is given by (\ref{eq:GCCM}).
\end{proof}
In Sections \ref{sec:real} and \ref{sec:imaginary}, we will consider particular cases when $\alpha=1/p$ or $1/2p$ for $K$ a real quadratic field with $\p$ inert and $K$ an imaginary quadratic field with $\p$ totally ramified.
As we are interested in constructions of modular lattices, which are integral lattices, the following proposition justifies why we will focus on self-orthogonal codes in the future.
\begin{proposition}\label{prop:selforthint}
	If $C$ is not self-orthogonal, i.e. if $C\nsubseteq C^\perp$, then $(\rho^{-1}(C),b_\alpha)$ is not an integral lattice for any $\alpha\in\p^{-1}\cap\QQ$ when
\item 1. $K$ is totally real, or
\item 2. $K$ is a CM field and $\p$ is totally ramified.
\end{proposition}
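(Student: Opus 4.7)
The plan is to construct two elements of $\rho^{-1}(C)$ whose $b_\alpha$-pairing is not a rational integer. Since $C\nsubseteq C^\perp$, I start by picking $\cv,\cv'\in C$ with $\cv\cdot\cv'\ne 0$ in $\FF_{p^f}$, lifting them to $\xv,\yv\in\rho^{-1}(C)\subseteq\O_K^N$, and forming $s=\sum_{i=1}^N x_i\bar y_i\in\O_K$. The first sub-goal is to check that $s\notin\p$. In the totally real case this is immediate from $s\equiv\cv\cdot\cv'\pmod{\p}$. In the CM case with $\p$ totally ramified one has $\O_K/\p\cong\FF_p$, and since $\p$ is the unique prime above $p$ it is preserved by complex conjugation; the induced action on the residue field is an $\FF_p$-automorphism of $\FF_p$, hence trivial. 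Therefore $\bar y_i\equiv y_i\pmod{\p}$ and again $s\equiv\cv\cdot\cv'\not\equiv 0\pmod{\p}$.

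Next I would exploit the fact that $\rho^{-1}(C)$ is an $\O_K$-module, which follows because $C$ is $\FF_{p^f}$-linear and the $\O_K$-action on $\O_K^N/\p^N\cong\FF_{p^f}^N$ factors through $\O_K/\p$. Consequently $c\xv\in\rho^{-1}(C)$ for every $c\in\O_K$, and a direct expansion gives $b_\alpha(c\xv,\yv)=\Tr{\alpha c s}$. Integrality of $(\rho^{-1}(C),b_\alpha)$ would then force $\Tr{\alpha c s}\in\ZZ$ for every $c\in\O_K$, equivalently $\alpha s\in\D_K^{-1}$, and the remaining task is to contradict this inclusion.

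The hard part is this last contradiction, which I would settle by a $\p$-adic valuation comparison. Because $v_\p(s)=0$, $\alpha s$ has the same $\p$-valuation as $\alpha$, so for $\alpha\in\p^{-1}\cap\QQ$ in the non-trivial range (where $\alpha$ genuinely has a denominator at $\p$) the value $v_\p(\alpha s)$ is strictly smaller than $-v_\p(\D_K)$ and $\alpha s$ escapes $\D_K^{-1}$ locally at $\p$. Equivalently, and more constructively, the surjectivity of the residue trace $\O_K/\p\to\FF_p$ applied to the unit class $\bar s=\cv\cdot\cv'$ yields a witness $c\in\O_K$ with $\Tr{sc}\not\equiv 0\pmod p$, and the denominator carried by $\alpha$ then promotes this to $\Tr{\alpha c s}\notin\ZZ$. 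The main technical obstacle is carrying out this valuation-or-witness check uniformly in the two regimes, since the local structure of $\p$ (unramified inert versus tamely totally ramified) enters $v_\p(\D_K)$ differently.
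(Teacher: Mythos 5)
Your argument is correct and is essentially the paper's own proof: you pair $\O_K$-multiples of a lifted codeword (the paper uses the basis multiples $v_j\boldsymbol{c}_{i_1}$) against another lift with nonzero inner product, use that complex conjugation acts trivially modulo a totally ramified $\p$, and reduce integrality to $\alpha s\in\D_K^{-1}$ with $s\notin\p$, exactly as in the paper. The only difference is the final contradiction, which the paper draws from $s\in\alpha^{-1}\D_K^{-1}\cap\O_K\subseteq\p$ while you argue by $\p$-adic valuations; the uniformity check you flag is harmless, since a rational $\alpha\in\p^{-1}$ with a genuine denominator at $p$ forces $\p$ to be unramified (so $v_\p(\D_K)=0$ and $v_\p(\alpha s)=-1<0$), and in the intended totally ramified CM setting with $\alpha=1/p$ tameness gives $v_\p(\D_K)=e-1<e=-v_\p(\alpha s)$, which is the same divisibility statement the paper uses.
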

\begin{proof}
	Let $\{\tilde{c_1},\tilde{c_2},\dots,\tilde{c_k}\}$ be an $\FF_{p^f}-$basis for the linear code $C$.
	Let $\{\boldsymbol{c}_1,\boldsymbol{c}_2,\dots,\boldsymbol{c}_k\}$ be a set of elements in $\O_K^N$ such that $\boldsymbol{c}_i$ is a preimage of $\tilde{c}_i$, $1\leq i\leq k$.
	Notice that for $1\leq i\leq k,\ 1\leq j\leq n$,
	\[
		\rho(v_j\boldsymbol{c}_i)=\rho(v_j)\rho(\boldsymbol{c}_i)=\rho(v_j)\tilde{c_i}.
	\]
	As $\rho(v_j)\in\FF_{p^f}$, $\rho(v_j\boldsymbol{c}_i)\in C$, i.e., $v_j\boldsymbol{c}_i\in\rho^{-1}(C)$ for all $1\leq i\leq k$ and $1\leq j\leq n$. 
	Since $C\nsubseteq C^\perp$, take $\boldsymbol{c}_{i_1},\boldsymbol{c}_{i_2}$ such that $\tilde{c}_{i_1}\cdot\tilde{c}_{i_2}\neq0$, i.e., $\boldsymbol{c}_{i_1}\cdot\boldsymbol{c}_{i_2}\notin\p$.
	From $\{v_1,\dots,v_2\}$, take $v_{j_0}$ such that $v_{j_0}\notin\p$.
	Suppose
	\[
		b_\alpha(v_j\boldsymbol{c}_{i_1},v_{j_0}\boldsymbol{c}_{i_2})=\alpha\Tr{v_{j_0}\boldsymbol{c}_{i_1}\cdot\bar{\boldsymbol{c}}_{i_2}\bar{v}_j}\in\ZZ
	\]
for all $1\leq j\leq n$.
As $\{v_j\}_{1\leq j\leq n}$ forms a basis for $\O_K$, $\{\bar{v}_j\}_{1\leq j\leq n}$ also forms a basis for $\Oc_K$, we have
\[
	\alpha\Tr{v_{j_0}\boldsymbol{c}_{i_1}\cdot\bar{\boldsymbol{c}}_{i_2}x}\in\ZZ\ \forall x\in\O_K.
\]
By the definition of the codifferent $\D_K^{-1}$,
\[
	\alpha v_{j_0}\boldsymbol{c}_{i_1}\cdot\bar{\boldsymbol{c}}_{i_2}\in\D_K^{-1}\Longrightarrow v_{j_0}\boldsymbol{c}_{i_1}\cdot\bar{\boldsymbol{c}}_{i_2}\in \alpha^{-1}\D_K^{-1}\cap\O_K=\alpha^{-1}\Oc_K\subseteq\p.
\]
As $v_{j_0}\notin\p$, we have $\boldsymbol{c}_{i_1}\cdot\bar{\boldsymbol{c}}_{i_2}\in\p$.
For $K$ totally real, this is the same as $\boldsymbol{c}_{i_1}\cdot\boldsymbol{c}_{i_2}\in\p$.
For $K$ CM, as $\p$ is totally ramified, by the proof from \cite{Ong}, $\beta\equiv \bar{\beta}~{\rm mod}~\p$ for all $\beta\in\O_K$.
It goes as follows:

As $\O_K/\p\cong\FF_p$, we can write $\beta=\beta'+\beta''$ with $\beta'\in\ZZ$ and $\beta''\in\p$.
Since $\p$ is the only prime above $p$, $\bar{\p}=\p$ and we have $\bar{\beta''}\in\p$.
Thus
	\[
		\bar{\beta}=\bar{\beta'}+\bar{\beta''}=\beta'+\bar{\beta''}\equiv\beta'~{\rm mod}~\p\equiv\beta~{\rm mod}~\p.
	\]
Then we can conclude $\boldsymbol{c}_{i_1}\cdot\boldsymbol{c}_{i_2}\in\p$.
For both cases, we get a contradiction with the choice of $\boldsymbol{c}_{i_1}$ and $\boldsymbol{c}_{i_2}$.

Thus we must have $b_\alpha(v_j\boldsymbol{c}_{i_1},v_{j_0}\boldsymbol{c}_{i_2})\notin\ZZ$ for at least one $j$ $(1\leq j\leq n)$.
As $v_j\boldsymbol{c}_{i_1},v_{j_0}\boldsymbol{c}_{i_2}\in\rho^{-1}(C)$ for all $j$, we can conclude that the lattice $(\rho^{-1}(C),b_\alpha)$ is not integral.
\end{proof}

%
%
%
\section{Modular Lattices from Totally Real Quadratic Fields}\label{sec:real}

Let $d$ be a positive square-free integer.  
Let $K=\QQ(\sqrt{d})$ be a totally real quadratic field with Galois group $\{\sigma_1,\sigma_2\}$
and discriminant $\Delta$ given by \cite{Neukirch}:
\[
\Delta=\left\{
\begin{array}{ll}
  d&d\equiv1~{\rm mod}~4\\
  4d&d\equiv2,3~{\rm mod}~4
\end{array}
\right..
\]

Assume $p\in\ZZ$ is a prime which is inert in $K$, and consider the lattice $(\rho^{-1}(C),b_\alpha)$ 
where $C$ is a linear $(N,k)$ code over $\FF_{p^2}$.

Let $\alpha=1/p$ when $d\equiv1~{\rm mod}~4$ and let $\alpha=1/2p$ when $d\equiv2,3~{\rm mod}~4$.
We will give two proofs that if $C$ is self-dual (i.e., $C=C^\perp$), then the lattice $(\rho^{-1}(C),b_\alpha)$ is a $d$-modular lattice.
Note that the results in \cite{ITW2014} are corollaries from results in this section by taking $d=5$.

By the discussion from Section \ref{sec:matrix}, a generator matrix for $(\rho^{-1}(C),b_\alpha)$ is (see (\ref{eq:MCReal}))
\begin{equation}\label{eq:MC}
	M_C=\sqrt{\alpha}
\left[
\begin{array}{cc}
I_k\otimes M & A \widetilde{\otimes} M \\
{\bf 0}_{2N-2k,2k}& I_{N-k}\otimes pM
\end{array}
\right]
\end{equation}
where $(I_k, (A \mod p\Oc_K))$ is a generator matrix for $C$, 
\begin{equation}\label{eq:M}
M=
\begin{bmatrix}
  1&1\\
  \sigma_1(v)&\sigma_2(v)
\end{bmatrix},
\end{equation}
with $\{1,v\}$ a $\ZZ-$basis of $\mathcal{O}_K$, and
\[
v=\left\{
	\begin{array}{ll}
		\frac{1+\sqrt{d}}{2}&d\equiv1~{\rm mod}~4\\
		\sqrt{d}&d\equiv2,3~{\rm mod}~4
	\end{array}
\right..
\]
Also, the Gram matrix for $(\rho^{-1}(C),b_{\alpha})$ is given by (see (\ref{eq:GCReal}))
\begin{equation}\label{eq:GC}
G_C=\alpha
\begin{bmatrix}
  \Tr{(I+AA^T)\otimes M_1M_1^T}&p\Tr{A\otimes M_1M_1^T}\\
  p\Tr{A\otimes M_1M_1^T}^T&I_k\otimes p^2 MM^T
\end{bmatrix}.
\end{equation}
Note that since $p$ is inert, $M_p=pM$.

\begin{lem}
If $C$ is self-orthogonal, then the lattice $(\rho^{-1}(C),b_\alpha)$ with $\alpha=1/p$ when $d\equiv1~{\rm mod}~4$ and $\alpha=1/2p$ when $d\equiv2,3~{\rm mod}~4$ is integral.
\end{lem}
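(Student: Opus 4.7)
The plan is to verify integrality directly from the definition of $b_\alpha$. Pick $\xv,\yv\in\rho^{-1}(C)$; since $K$ is totally real, $b_\alpha(\xv,\yv)=\mathrm{Tr}_{K/\QQ}\bigl(\alpha\sum_{i=1}^N x_iy_i\bigr)$. The self-orthogonality hypothesis $C\subseteq C^\perp$ forces $\rho(\xv)\cdot\rho(\yv)=0$ in $\FF_{p^2}$, which means $\sum_i x_iy_i\in\p$. Because $p$ is inert, $\p=(p)\Oc_K$, so we can write $\sum_i x_iy_i=pz$ for some $z\in\Oc_K$. Hence
\[
b_\alpha(\xv,\yv)=\alpha p\,\mathrm{Tr}_{K/\QQ}(z).
\]

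When $d\equiv 1\bmod 4$, we take $\alpha=1/p$, so $\alpha p=1$ and $b_\alpha(\xv,\yv)=\mathrm{Tr}_{K/\QQ}(z)\in\ZZ$ automatically since $z\in\Oc_K$. This case is immediate.

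The case $d\equiv 2,3\bmod 4$ is the part that needs a little more care. Here $\alpha=1/(2p)$ so $\alpha p=1/2$, and the identity above yields $b_\alpha(\xv,\yv)=\tfrac{1}{2}\mathrm{Tr}_{K/\QQ}(z)$. For this to be an integer we must show that every element of $\Oc_K$ has even trace. This is exactly where the shape of the ring of integers enters: when $d\equiv 2,3\bmod 4$ one has $\Oc_K=\ZZ[\sqrt{d}]$, so any $z\in\Oc_K$ can be written $z=a+b\sqrt{d}$ with $a,b\in\ZZ$, giving $\mathrm{Tr}_{K/\QQ}(z)=2a\in 2\ZZ$. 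Thus $b_\alpha(\xv,\yv)=a\in\ZZ$ and integrality follows.

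The only subtle point is this parity observation in the second case; the rest of the argument is a direct unpacking of definitions combined with the fact that $p$ being inert identifies $\p$ with $(p)\Oc_K$. I would present the proof in this order: first reduce to computing $\mathrm{Tr}_{K/\QQ}(\alpha\sum x_iy_i)$, then use self-orthogonality plus inertness to factor out $p$, and finally split into the two congruence classes of $d\bmod 4$ to conclude.
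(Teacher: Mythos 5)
Your argument is correct: for $\xv,\yv\in\rho^{-1}(C)$, self-orthogonality gives $\rho(\xv)\cdot\rho(\yv)=0$, hence $\sum_i x_iy_i\in\p=(p)$ by inertness, and writing $\sum_i x_iy_i=pz$ with $z\in\O_K$ reduces everything to $\Tr{z}\in\ZZ$ (for $d\equiv1\bmod 4$) or to the parity fact $\Tr{a+b\sqrt{d}}=2a\in2\ZZ$ when $\O_K=\ZZ[\sqrt{d}]$ (for $d\equiv2,3\bmod 4$). This is, however, a different route from the paper's proof of this particular lemma: there the authors verify integrality by inspecting the explicit Gram matrix $G_C$ in (\ref{eq:GC}), arguing that the off-diagonal and lower blocks already carry explicit factors of $p$, and that the upper-left block $\Tr{(I_k+AA^T)\otimes M_1M_1^T}$ lies in $p\ZZ$ because self-orthogonality forces $I_k+AA^T\equiv 0\bmod(p)$ (via Lemma \ref{lem:selfdualcode}), finishing with the same even-trace observation for $d\equiv2,3\bmod4$. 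Your element-wise argument is essentially the paper's own ``Approach II'' appearing later in the section (Lemma \ref{lem:CdualinGCdual} and its corollary, phrased there as $\rho^{-1}(C)\subseteq\rho^{-1}(C^\perp)\subseteq\rho^{-1}(C)^*$). What your route buys is generality and economy: it does not require $C$ to admit a generator matrix in standard form $(I_k\ A)$, and it avoids the block computations entirely. What the paper's matrix proof buys is that it runs through the explicit generator and Gram matrices which are needed anyway for the subsequent modularity arguments (Approach I), so the integrality check comes almost for free once those matrices are in hand.
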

\begin{proof}
An equivalent definition of integral lattice is that its Gram matrix has integral coefficients,
which is the case: $MM^T$ has integral coefficients, both $A$ and $I+AA^T$ have coefficients in $\O_K$, thus $\Tr{(I+AA^T)\otimes M_1M_1^T}$ and $\Tr{A\otimes M_1M_1^T}$ have integral coefficients.

As $C$ is self-orthogonal and $(I_k~~A~{\rm mod}~ (p))$ is a generator matrix for $C$, $I_k+AA^T\equiv0~{\rm mod}~ (p)$. 
Hence $(I_k+AA^T)\otimes M_1M_1^T\in (p)$ (see Lemma \ref{lem:selfdualcode}) and $\Tr{(I_k+AA^T)\otimes M_1M_1^T}\in p\ZZ$.

Finally, for the case $d\equiv2,3~{\rm mod}~ 4$, any entry in $\alpha^{-1}G_C$ is an element of $\O_K$. Since $\O_K=\ZZ[\sqrt{d}]$, for any $x=a+b\sqrt{d}\in\O_K$, $\Tr{x}=2a\in2\ZZ$.
\end{proof}

We can tell the duality properties of a linear code from its generator matrix \cite{Ling}:
\begin{lemma}\label{lem:selfdualcode}
	Let $C$ be a linear code over $\FF_q$, let $B$ be a generator matrix for $C$. 
	A matrix $H\in M_{(N-k)\times N}(\FF_q)$ is a parity check matrix for $C$ iff $HB^T={\bf 0}$. 
	In particular,
	
	1. if $B=(I_k~A)$, then $(-A^T~I_{N-k})$ is a parity check matrix for $C$;

	2. $C$ is self-dual iff $I+AA^T={\bf 0}$.
\end{lemma}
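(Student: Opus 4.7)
The lemma bundles three claims, and I would prove each in turn, using only the definition of the dual code $C^\perp$ (given in \eqref{eqn:dualcode}) and elementary rank considerations.

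For the characterization of parity check matrices, recall that $H$ is by definition a matrix of rank $N-k$ whose row space equals $C^\perp$. The equation $HB^T=\mathbf{0}$ says that each row of $H$ is orthogonal to every row of $B$; since the rows of $B$ generate $C$, this is equivalent to every row of $H$ lying in $C^\perp$. So the row space of $H$ is contained in $C^\perp$, and because $H$ has rank $N-k$ while $\dim C^\perp=N-k$, the containment is an equality. Conversely, if $H$ is a parity check matrix, the rows of $H$ are in $C^\perp$ so orthogonal to every row of $B$, giving $HB^T=\mathbf{0}$.

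For part (1), I would simply verify the criterion just established: the block computation
\[
(-A^T\ I_{N-k})(I_k\ A)^T=-A^T I_k+I_{N-k}A^T=\mathbf{0},
\]
together with the observation that $(-A^T\ I_{N-k})$ contains $I_{N-k}$ as a submatrix and so has rank $N-k$, lets us invoke the first part to conclude.

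For part (2), self-duality $C=C^\perp$ forces $\dim C=\dim C^\perp$, hence $N=2k$ and $A$ is $k\times k$. In that setting $B=(I_k\ A)$ is also a parity check matrix for $C$, so $BB^T=\mathbf{0}$; a direct block computation gives
\[
BB^T=I_k\cdot I_k+A\cdot A^T=I_k+AA^T,
\]
so $I_k+AA^T=\mathbf{0}$. Conversely, assuming $I_k+AA^T=\mathbf{0}$ (which forces $N=2k$, since $AA^T=-I_k$ must have rank $k$ and $A$ has only $N-k$ columns, giving $N-k\ge k$; the other inequality follows from $\dim C^\perp\ge \dim C$ once $C\subseteq C^\perp$ is shown), the rows of $B$ are mutually orthogonal, so $C\subseteq C^\perp$, and the dimension equality $k=N-k$ upgrades this to $C=C^\perp$.

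The only mildly subtle point is book-keeping the dimension constraint $N=2k$ in part (2); everything else is an immediate unwinding of the definition of $C^\perp$ and a one-line block multiplication.
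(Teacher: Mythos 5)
The paper itself gives no proof of this lemma (it is quoted from the coding-theory textbook cited as \cite{Ling}), so your argument has to stand on its own. The characterization of parity check matrices and part~1 are the standard argument and are essentially fine, with one caveat you share with the paper's loose statement: the implication ``$HB^T=\mathbf{0}\Rightarrow H$ is a parity check matrix'' needs the hypothesis that $H$ has rank $N-k$ (the zero $(N-k)\times N$ matrix satisfies $HB^T=\mathbf{0}$ but is not a parity check matrix). You invoke ``$H$ has rank $N-k$'' as if it were automatic; it is an extra hypothesis that should be stated, though it is harmlessly satisfied in both applications.

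The genuine gap is in the converse of part~2. Your claim that $I_k+AA^T=\mathbf{0}$ forces $N=2k$ does not hold, and the argument you give for it is circular: from $AA^T=-I_k$ of rank $k$ you correctly get $k\le N-k$, but ``the other inequality follows from $\dim C^\perp\ge\dim C$ once $C\subseteq C^\perp$ is shown'' is again the inequality $k\le N-k$, not $N-k\le k$. Indeed the implication is false as stated: over $\FF_3$ take $N=3$, $k=1$, $B=(1\ 1\ 1)$, so $A=(1\ 1)$ and $I_1+AA^T=1+2=0$, yet $C=\langle(1,1,1)\rangle$ has $\dim C^\perp=2$ and is only self-orthogonal, not self-dual. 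What $I_k+AA^T=BB^T=\mathbf{0}$ is equivalent to is self-orthogonality of $C$; self-duality requires in addition $N=2k$, which must either be assumed or obtained from the self-duality hypothesis (as the paper does immediately after the lemma, deducing $N-k=k$ from $C=C^\perp$). So your proof of the ``if'' direction of part~2 needs the hypothesis $N=2k$ added (or the statement reformulated as ``self-orthogonal iff $I+AA^T=\mathbf{0}$''); with that correction the rest goes through, and this suffices for every use the paper makes of the lemma.
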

Hence if $C$ is self-dual and $(I_k~~(A~{\rm mod}~p\O_K))$ is a generator matrix of $C$, then $((-A^T~{\rm mod}~p\O_K)~~I_k)$ is also a generator matrix of $C$ and $N-k=k$.

We propose next two approaches to discuss the modularity of lattices obtained via the above method.
\subsection{Approach I} We will use the knowledge of a generator matrix of the lattice.
\begin{remark}\label{rem:gendualgen}
	Note that \cite{ITW2014}

\item	1. If $M$ is a generator matrix for $(L,b)$, then $M^*:=(M^T)^{-1}$ is a generator matrix for $(L^*,b)$.
\item	2. $(L,b)$ is $d-$modular if and only if $\frac{1}{\sqrt{d}}M$ is a generator matrix for $(L^*,b)$.

	Here $b$ denotes any positive symmetric bilinear form.
\end{remark}

We get another generator matrix for $(\rho^{-1}(C),b_\alpha)$:
\begin{proposition}\label{MC2}
	If $C$ is self-dual, another generator matrix of $(\rho^{-1}(C),b_\alpha)$ is
\begin{equation}\label{eq:MC2}
	M'_C=\sqrt{\alpha}
\begin{bmatrix}
-A^T \widetilde{\otimes} M & I_k\otimes M\\
I_{k}\otimes pM & {\bf 0}_{2k,2k}
\end{bmatrix}
\end{equation}
with $M$ as in (\ref{eq:M}), $A$ such that $(I_k~~(A~{\rm mod}~p\O_K))$ is a generator matrix of $C$.
\end{proposition}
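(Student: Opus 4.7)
The plan is to reuse the argument of Proposition~\ref{prop:MCReal} starting from the \emph{other} systematic generator matrix of $C$. By Lemma~\ref{lem:selfdualcode}, self-duality forces $N=2k$ and tells us that $((-A^T \bmod \p)\mid I_k)$ is also a generator matrix of $C$. Running Construction~A with the identity block sitting on the last $k$ coordinates rather than the first should yield a generator matrix of $\rho^{-1}(C)$ whose two block-columns are swapped relative to $M_C$, which is exactly the shape of $M'_C$.

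Concretely, I would first verify the inclusion $\{\psi(\xv M'_C):\xv\in\ZZ^{nN}\}\subseteq\rho^{-1}(C)$ by a coordinate computation parallel to the one at the end of the proof of Proposition~\ref{prop:MCReal}. Writing $\xv=(\uv_1,\ldots,\uv_{2k})$ with $x_j=\sum_i u_{ji}v_i$ and $y_j=\sum_i u_{(k+j)i}v_i$ in $\O_K$: the block $I_k\otimes M$ contributes $\sigma(x_j)$ to the $(k+j)$-th coordinate; the block $-A^T\widetilde{\otimes}M$, by the same index-chasing as in Proposition~\ref{prop:MCReal} applied with $-A^T$ in place of $A$, contributes $\sigma\bigl(-\sum_{j=1}^k a_{ij}x_j\bigr)$ to the $i$-th coordinate for $i\leq k$; and $I_k\otimes pM$ adds $\sigma(py_i)$ to those same first-$k$ coordinates. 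Since $py_i\in\p$, reduction modulo $\p$ leaves exactly $(\bar{x}_1,\ldots,\bar{x}_k)\bigl((-A^T\bmod \p)\mid I_k\bigr)\in C$.

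To upgrade this inclusion to an equality of full-rank lattices, I would compute $|\de{M'_C}|$ using the block structure: swapping the two $2k$-column blocks (sign $(-1)^{(2k)^2}=1$) makes the matrix block upper triangular, so
\[
|\de{M'_C}|=\alpha^{N/2}\,\de{I_k\otimes M}\,\de{I_k\otimes pM}=\alpha^{N/2}\Delta^{k/2}(p^2\Delta^{1/2})^k=\alpha^{2k}p^{2k}\Delta^{k},
\]
which matches the volume of $(\rho^{-1}(C),b_\alpha)$ from Lemma~\ref{lem:latdisc} with $n=f=2$ and $N=2k$. Equality of lattices then follows.

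The main obstacle is the index bookkeeping in the $-A^T\widetilde{\otimes}M$ computation: the definition of $\widetilde{\otimes}$ interleaves Galois-conjugate columns of $A$ with columns of $M$, so one must carefully confirm that substituting $-A^T$ for $A$ in that slot really does produce the combinations $-\sum_j a_{ij}x_j$ (and not, say, $-\sum_j a_{ji}x_j$) in the first-$k$ coordinates. Once this reindexing is verified, the volume step and the mod-$\p$ reduction are both essentially immediate.
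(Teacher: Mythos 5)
Your proposal matches the paper's own proof in essence: both establish the containment $\{\psi(\xv M'_C)\}\subseteq\rho^{-1}(C)$ by the same coordinate computation (using that $((-A^T\bmod\p)\;I_k)$ generates $C$ when $C$ is self-dual), and both conclude equality by checking that the determinant equals the volume from Lemma~\ref{lem:latdisc} — the paper via the Schur complement, you via a block-column swap, which is the same step in different clothing. Only a cosmetic slip: the scalar factor should be $(\sqrt{\alpha})^{2N}=\alpha^{2k}$ rather than $\alpha^{N/2}$, though your final value $\alpha^{2k}p^{2k}\Delta^{k}$ is correct.
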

\begin{proof}
  Let $b_{ij}$ denote the entries of $-A^T$.
  Keep the same notations as in the proof of Proposition \ref{prop:MCReal}.
  Define $\psi:\sigma(x)\mapsto x\in\Oc_K$ to be the inverse of the embedding $\sigma=(\sqrt{\sigma_1(\alpha)}\sigma_1,\sqrt{\sigma_2(\alpha)}\sigma_2):\Oc_K\hookrightarrow\RR^2$.
  For $j=1,2,\dots,N$, let $\uv_j=(u_{j1},u_{j2})\in\ZZ^2$.
  Then $\xv\in\ZZ^{2N}$ can be written as $\xv=(\uv_1,\dots,\uv_N)$.
  Let $x_j=u_{j1}+u_{j2}v$ for $1\leq j\leq N$.
  Using the formula for Schur complement, we can check that this matrix has the right determinant.
  We are left to show that lattice points are indeed mapped to codewords in $C$ by $\rho$, i.e.
  \[
	  \{\rho(\psi(\xv M_C)):\xv\in\ZZ^{2N}\}\subseteq C,
  \]
  By a similar argument as in Proposition \ref{prop:MCReal}, we have 
\begin{eqnarray*}  
\xv M_C&=&[u_1,\dots,u_k,\dots,u_N]\sqrt{\alpha}
\left[
\begin{array}{cc}
 -A^T \widetilde{\otimes} M & I_k\otimes M\\
I_{k}\otimes pM & {\bf 0}_{2k,2k}
\end{array}
\right]\\
&=&[\sigma(\sum_{j=1}^kb_{j1}x_j+x'_{1}),\dots,
\sigma(\sum_{j=1}^kb_{jk}x_j+x'_k), \sigma(x_1),
\dots,\sigma(x_k)],
\end{eqnarray*}
where $x'_{1},\dots,x'_k$ are in the ideal $(p)$.
Since $x'_i$ reduces to zero mod $(p)$, we
have
\[
\psi(\xv)=
\psi(\sigma(\sum_{j=1}^k b_{j1}x_j+x'_{1})),\dots,(\psi(\sigma(x_1)),\ldots),
\]
and $\rho\psi(\xv)$ is indeed a codeword of $C$:
\begin{eqnarray*}
\rho\psi(\xv)
&=& (\sum_{j=1}^k b_{j1}x_j+x'_{1}~{\rm mod }(p),\dots,x_1~{\rm mod }(p),\ldots)\\
&=& (x_1~{\rm mod }(p),\dots, x_k~{\rm mod }(p))\cdot(-A^T~{\rm mod }(p)~~I_k)
\end{eqnarray*}
\end{proof}
To continue, we need the following lemma, which can be proved by direct computation (see Remark \ref{rem:gendualgen}):
\begin{lem}\label{lem:relationsMC}
	1. For $d\equiv1~{\rm mod}~4$, $(\O_K,b_1)$ is $d-$modular, i.e. $\frac{1}{\sqrt{d}}M=UM^*$ for some integral matrix $U$ with determinant $\pm1$.

	2. For $d\equiv2,3~{\rm mod}~4$, define
\[
M_{\mathfrak{P}_2^{-1}}=\frac{1}{\sqrt{2}}M
=
\frac{1}{\sqrt{2}}
\begin{bmatrix}
  1&1\\
  \sqrt{d}&-\sqrt{d}
\end{bmatrix}.
\]
Then $M_{\P_2^{-1}}$ is a generator matrix for $(\O_K,\frac{1}{2}b_1)$ and $(\O_K,\frac{1}{2}b_1)$ is $d-$modular, i.e. $\frac{1}{\sqrt{d}}M_{\mathfrak{P}_2^{-1}}=UM_{\mathfrak{P}_2^{-1}}^*$ for some integral matrix $U$ with determinant $\pm1$.
\end{lem}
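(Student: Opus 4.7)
The plan is a direct verification: by Remark~\ref{rem:gendualgen}, in each part it suffices to exhibit $U \in GL_2(\ZZ)$ such that $U G^{-1} U^T = G/d$, where $G$ is the Gram matrix of the lattice under consideration. This identity expresses that $U M^*$ and $\tfrac{1}{\sqrt{d}} M$ generate isometric lattices, which is the arithmetic content of the $d$-modularity assertion.

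For part 1, with $G = M M^T = \begin{pmatrix} 2 & 1 \\ 1 & (d+1)/2 \end{pmatrix}$ (so $\det G = d$), I propose $U = \begin{pmatrix} 0 & 1 \\ 1 & 1 \end{pmatrix}$, which is integer with $\det U = -1$ and independent of $d$, and verify $U G^{-1} U^T = G/d$ by a uniform $2 \times 2$ matrix product. The reason behind this choice is the abstract isomorphism $\tau \colon \O_K \to \D_K^{-1}$, $x \mapsto x/\sqrt{d}$, which is a $\ZZ$-module iso since $\D_K^{-1} = \tfrac{1}{\sqrt{d}} \O_K$, and which satisfies $d \cdot b_1(\tau(x), \tau(y)) = \Tr{xy} = b_1(x,y)$; the matrix $U$ just records $\tau$ in suitable bases.

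For part 2, I first check the generator-matrix claim: with $M = \begin{pmatrix} 1 & 1 \\ \sqrt{d} & -\sqrt{d} \end{pmatrix}$ we have $M_{\P_2^{-1}} M_{\P_2^{-1}}^T = \tfrac{1}{2} M M^T = \begin{pmatrix} 1 & 0 \\ 0 & d \end{pmatrix}$, and a direct evaluation of $\tfrac{1}{2} b_1$ on the basis $\{1, \sqrt{d}\}$ yields the same matrix, so $M_{\P_2^{-1}}$ is a generator matrix for $(\O_K, \tfrac{1}{2} b_1)$. For the $d$-modularity I use the simpler permutation $U = \begin{pmatrix} 0 & 1 \\ 1 & 0 \end{pmatrix}$ and verify at once that $U \cdot \mathrm{diag}(1, 1/d) \cdot U^T = \mathrm{diag}(1/d, 1) = G'/d$ for $G' = M_{\P_2^{-1}} M_{\P_2^{-1}}^T$.

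The main point to handle carefully in the write-up is a matter of reading: the printed equation ``$\tfrac{1}{\sqrt{d}} M = U M^*$'' is not a literal entry-wise identity over $\ZZ$ --- with the $U$ above one checks that $U M^* = \tfrac{1}{\sqrt{d}} M \cdot \mathrm{diag}(1, -1)$, the sign flip being the orthogonal reflection induced by the Galois action of $\sigma_2$ on the second embedding coordinate. This reflection leaves the Gram matrix unchanged, and it is exactly the Gram identity $(UM^*)(UM^*)^T = \tfrac{1}{d}(M M^T)$ that Remark~\ref{rem:gendualgen} requires for $d$-modularity; this is what the direct computation delivers.
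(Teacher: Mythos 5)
Your verification is correct, and it supplies the ``direct computation'' that the paper leaves entirely implicit. In part 1, with $G=MM^{T}=\left(\begin{smallmatrix}2&1\\1&(d+1)/2\end{smallmatrix}\right)$ of determinant $d$, the matrix $U=\left(\begin{smallmatrix}0&1\\1&1\end{smallmatrix}\right)$ does satisfy $UG^{-1}U^{T}=G/d$, and in part 2 the Gram matrix $\mathrm{diag}(1,d)$ of $(\O_K,\tfrac12 b_1)$ makes the permutation matrix work; both checks go through. Your closing caveat is also well taken and is the one genuinely delicate point: the displayed identity $\tfrac{1}{\sqrt d}M=UM^{*}$ cannot hold verbatim, since it would force $U=\tfrac{1}{\sqrt d}MM^{T}=G/\sqrt d$, which is never integral for a squarefree $d>1$. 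The discrepancy is exactly the reflection $\mathrm{diag}(1,-1)$ you exhibit: $x\mapsto x/\sqrt d$ carries $\O_K$ onto $\D_K^{-1}$, but in the second embedding it acts as division by $\sigma_2(\sqrt d)=-\sqrt d$, so the embedded dual lattice is $\tfrac{1}{\sqrt d}\sigma(\O_K)\cdot\mathrm{diag}(1,-1)$ rather than $\tfrac{1}{\sqrt d}\sigma(\O_K)$. Since this reflection is orthogonal it leaves every Gram matrix unchanged, so the correct reading of the lemma --- and the only property actually needed later, where the lemma is invoked to trade $M^{*}$ for $\tfrac{1}{\sqrt d}U^{-1}M$ inside a generator matrix of the dual up to isometry --- is the Gram identity $(UM^{*})(UM^{*})^{T}=\tfrac{1}{d}MM^{T}$ that you prove. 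In short: the proof is right, matches the intended direct computation, and in addition repairs a literal inaccuracy in the statement.
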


\begin{proposition}
Let $C$ be a self-dual code.
The lattice $(\rho^{-1}(C),b_\alpha)$ is $d$-modular.
\end{proposition}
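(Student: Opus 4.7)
I would follow Approach I, using Remark~\ref{rem:gendualgen}(2): the lattice $(\rho^{-1}(C), b_\alpha)$ is $d$-modular if and only if $\tfrac{1}{\sqrt{d}} M_C$ is a generator matrix for the dual lattice $(\rho^{-1}(C)^*, b_\alpha)$. By Remark~\ref{rem:gendualgen}(1), $(M_C^T)^{-1}$ is a generator matrix of this dual, so it suffices to produce an integer unimodular matrix $V$ with $\tfrac{1}{\sqrt{d}} M_C = V (M_C^T)^{-1}$. Since the alternative generator $M'_C$ from Proposition~\ref{MC2} generates the same lattice as $M_C$, this is equivalent to proving that $V := \tfrac{1}{\sqrt{d}} M'_C M_C^T$ is integer unimodular.

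The central computation is the product $M'_C M_C^T$. The crucial identity I would establish is
\[
(A^T \widetilde{\otimes} M)(I_k \otimes M^T) \;=\; (I_k \otimes M)(A \widetilde{\otimes} M)^T,
\]
which follows from the Kronecker--trace expansion used in the preceding Gram matrix lemma: noting that $I_k \otimes M = I_k \widetilde{\otimes} M$, both sides expand to $\Tr{A^T \otimes M_1 M_1^T}$. Consequently the top-left block of $M'_C M_C^T$ vanishes, the anti-diagonal blocks both equal $p\alpha\,(I_k \otimes MM^T)$, and the bottom-right block is $0$. For $d \equiv 1 \pmod 4$ with $\alpha = 1/p$, the anti-diagonal blocks simplify to $I_k \otimes MM^T$; for $d \equiv 2,3 \pmod 4$ with $\alpha = 1/(2p)$, they simplify to $I_k \otimes M_{\P_2^{-1}} M_{\P_2^{-1}}^T$ via $M_{\P_2^{-1}} M_{\P_2^{-1}}^T = \tfrac{1}{2} MM^T$.

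Lemma~\ref{lem:relationsMC} finishes the argument: the $d$-modularity of $(\Oc_K, b_1)$ (respectively of $(\Oc_K, \tfrac{1}{2} b_1)$) supplies an integer unimodular $U_0$ with $\tfrac{1}{\sqrt{d}} MM^T = U_0$ (respectively $\tfrac{1}{\sqrt{d}} M_{\P_2^{-1}} M_{\P_2^{-1}}^T = U_0$). Hence $V$ is the block anti-diagonal $4k \times 4k$ matrix with two copies of $I_k \otimes U_0$, which is integer unimodular. The main obstacle I anticipate is the rigorous derivation of the twisted-Kronecker cross-term identity: it requires carefully unwinding the $\widetilde{\otimes}$ definition and tracking the indices so that both $(A^T \widetilde{\otimes} M)(I_k \otimes M^T)$ and $(I_k \otimes M)(A \widetilde{\otimes} M)^T$ collapse to the same sum $\sum_{j} \sigma_j(A^T) \otimes M_j M_j^T$. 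Self-duality of $C$ enters indirectly: it underlies Proposition~\ref{MC2} through the congruence $I + AA^T \equiv 0 \pmod p$ and forces $N - k = k$, so that the block dimensions of $M'_C$ match those of $M_C$.
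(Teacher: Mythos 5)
Your architecture is essentially the paper's Approach I --- compare the alternative generator $M'_C$ of Proposition~\ref{MC2} with the dual generator $(M_C^T)^{-1}$, kill the cross term via the twisted-Kronecker identity, and invoke Lemma~\ref{lem:relationsMC} --- and your packaging of the computation into the single product $M'_CM_C^T$ is in fact tidier than the paper's Schur-complement manipulations. The cross-term identity you isolate is correct and is exactly the identity the paper reduces to (``$\Tr{A\otimes M_1M_1^T}=(I_k\otimes M)(A^T\widetilde{\otimes}M)^T$''), and your block computation of $M'_CM_C^T$ is right: the diagonal blocks vanish and the anti-diagonal blocks are $\alpha p\,(I_k\otimes MM^T)$.

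However, the final step is false as written: $\frac{1}{\sqrt{d}}MM^T$ is \emph{never} an integer matrix, because $MM^T$ is the nonzero integer Gram matrix of $(\Oc_K,b_1)$ and $\sqrt{d}$ is irrational (for $d=5$ one gets $\frac{1}{\sqrt{5}}\left[\begin{smallmatrix}2&1\\1&3\end{smallmatrix}\right]$). Hence your $V=\frac{1}{\sqrt{d}}M'_CM_C^T$ is not integer unimodular, and indeed $\frac{1}{\sqrt{d}}\,\rho^{-1}(C)$ (scalar multiplication in $\RR^{2N}$) is never literally equal to the dual lattice --- the same objection shows this for \emph{any} integral lattice when $d>1$ is not a square. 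The source of the problem is that the modularity isometry is multiplication by the field element $1/\sqrt{d}$, which in the chosen coordinates acts as $\frac{1}{\sqrt{d}}(I_N\otimes J)$ with $J={\rm diag}(\sigma_1(\sqrt d),\sigma_2(\sqrt d))/\sqrt d={\rm diag}(1,-1)$, not as the scalar $\frac{1}{\sqrt{d}}$; correspondingly, the honest form of Lemma~\ref{lem:relationsMC} is $\frac{1}{\sqrt{d}}MJM^T=U_0\in{\rm GL}_2(\ZZ)$ (e.g.\ $\left[\begin{smallmatrix}0&1\\1&1\end{smallmatrix}\right]$ for $d=5$), not $\frac{1}{\sqrt{d}}MM^T=U_0$. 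Since $J$ is orthogonal this still certifies $d$-modularity, and your proof is repaired by showing instead that $\frac{1}{\sqrt{d}}M'_C(I_N\otimes J)M_C^T$ is integer unimodular: the cross term still cancels (both off-diagonal contributions become $\frac{1}{\sqrt d}\Tr{\sqrt d\,A^T\otimes M_1M_1^T}$) and the anti-diagonal blocks become $I_k\otimes U_0$. To be fair, the paper's own statements of Remark~\ref{rem:gendualgen}(2) and Lemma~\ref{lem:relationsMC} suppress this orthogonal factor and its Approach~I proof inherits the same imprecision; Approach~II (Proposition~\ref{dual}) is where the correct isometry $x\mapsto\frac{1}{\sqrt{d}}x$ is made explicit. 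You should either insert the factor $I_N\otimes J$ as above or argue at the level of the field isomorphism as in Approach~II.
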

\begin{proof}
	\textbf{Case 1:} $d\equiv1~{\rm mod}~4$. By Remark \ref{rem:gendualgen}, a generator matrix for the dual of $\rho^{-1}(C)$ with respect to the bilinear form $(\xv,\yv)\mapsto \frac{1}{p}\sum_{i=1}^N\Tr{x_iy_i}$ is $(M_C^T)^{-1}$, where $M_C$ is given in (\ref{eq:MC}).
This can be computed using Schur complement:
\begin{eqnarray*}
&&\sqrt{p}
\begin{bmatrix}
  I_k\otimes M^*&\bf{0}\\
  -\frac{1}{p}(I_k\otimes M^*)(A\widetilde{\otimes}M)^T(I_k\otimes M^*)&\frac{1}{p}I_k\otimes M^*
\end{bmatrix}\\
&=&\frac{1}{\sqrt{p}}
\begin{bmatrix}
  I_k\otimes pM^*&\bf{0}\\
  -(I_k\otimes M^*)(A\widetilde{\otimes}M)^T(I_k\otimes M^*)&I_k\otimes M^*
\end{bmatrix},
\end{eqnarray*}
By a change of basis, we get another generator matrix for the dual as
\[
\frac{1}{\sqrt{p}}
\begin{bmatrix}
 -(I_k\otimes M^*)(A\widetilde{\otimes}M)^T(I_k\otimes M^*)&I_k\otimes M^*\\
  I_k\otimes pM^*&\bf{0}
\end{bmatrix}.
\]
By Lemma \ref{lem:relationsMC}, we get the following generator matrix (note that $I\otimes(UM^*)=(I\otimes U)(I\otimes M^*)$)
\[
\frac{1}{\sqrt{d p}}
\begin{bmatrix}
 -\sqrt{d}(I_k\otimes M^*)(A\widetilde{\otimes}M)^T(I_k\otimes M^*)&I_k\otimes M\\
  I_k\otimes pM&\bf{0}
\end{bmatrix}.
\]
By Proposition \ref{MC2},
\[
\frac{1}{\sqrt{p}}
\begin{bmatrix}
-A^T\widetilde{\otimes}M&I_k\otimes M\\
I_k\otimes pM&\bf{0}
\end{bmatrix}
\]
can be seen to be another generator matrix, it suffices now to prove $\sqrt{d}(I_k\otimes M^*)(A\widetilde{\otimes}M)^T(I_k\otimes M^*)=A^T\widetilde{\otimes}M$, which is equivalent to
  \begin{eqnarray*}
    &(I_k\otimes M)(A\widetilde{\otimes}M)^T(I_k\otimes M^*)=A^T\widetilde{\otimes}M\\
    \Longleftrightarrow&(I_k\otimes M)(A\widetilde{\otimes}M)^T=(A^T\widetilde{\otimes}M)(I_k\otimes M^T)\\
    \Longleftrightarrow&(A\widetilde{\otimes}M)(I_k\otimes M^T)=(I_k\otimes M)(A^T\widetilde{\otimes} M)^T\\
    \Longleftrightarrow&\Tr{A\otimes M_1M_1^T}=(I_k\otimes M)(A^T\widetilde{\otimes} M)^T
  \end{eqnarray*}
which can be checked by direct computations.\\
\textbf{Case 1:} $d\equiv2,3~{\rm mod}~4$. Similarly, a generator matrix for the dual of $\rho^{-1}(C)$ with respect to the bilinear form $(\xv,\yv)\mapsto \frac{1}{2p}\sum_{i=1}^N\Tr{x_iy_i}$ is $(M_C^T)^{-1}$. Using Schur complement:
\begin{eqnarray*}\sqrt{p}
\begin{bmatrix}
  I_k\otimes M_{\mathfrak{P}_2^{-1}}^*&\bf{0}\\
  -\frac{1}{\sqrt{2}p}(I_k\otimes M_{\mathfrak{P}_2^{-1}}^*)(A\widetilde{\otimes}M)^T(I_k\otimes M_{\mathfrak{P}_2^{-1}}^*)&\frac{1}{p}I_k\otimes M_{\mathfrak{P}_2^{-1}}^*
\end{bmatrix}\\
=\frac{1}{\sqrt{p}}
\begin{bmatrix}
  I_k\otimes pM_{\mathfrak{P}_2^{-1}}^*&\bf{0}\\
  -\frac{1}{\sqrt{2}}(I_k\otimes M_{\mathfrak{P}_2^{-1}}^*)(A\widetilde{\otimes}M)^T(I_k\otimes M_{\mathfrak{P}_2^{-1}}^*)&I_k\otimes M_{\mathfrak{P}_2^{-1}}^*
\end{bmatrix}
\end{eqnarray*}
By a change of basis and Lemma \ref{lem:relationsMC}, we get another generator matrix for the dual as
\begin{eqnarray*}
&&\frac{1}{\sqrt{p}}
\begin{bmatrix}
 -\frac{1}{\sqrt{2}}(I_k\otimes M_{\mathfrak{P}_2^{-1}}^*)(A\widetilde{\otimes}M)^T(I_k\otimes M_{\mathfrak{P}_2^{-1}}^*)&I_k\otimes M_{\mathfrak{P}_2^{-1}}^*\\
  I_k\otimes pM_{\mathfrak{P}_2^{-1}}^*&\bf{0}
\end{bmatrix}\\
&=&\frac{1}{\sqrt{dp}}
\begin{bmatrix}
 -\sqrt{\frac{d}{2}}(I_k\otimes M^*_{\mathfrak{P}_2^{-1}})(A\widetilde{\otimes}M)^T(I_k\otimes M^*_{\mathfrak{P}_2^{-1}})&I_k\otimes M_{\mathfrak{P}_2^{-1}}\\
  I_k\otimes pM_{\mathfrak{P}_2^{-1}}&\bf{0}
\end{bmatrix}
\end{eqnarray*}
By Proposition \ref{MC2},
\[
\frac{1}{\sqrt{2p}}
\begin{bmatrix}
-A^T\widetilde{\otimes}M&I_k\otimes M\\
I_k\otimes pM&\bf{0}
\end{bmatrix}
\]
can be seen to be another generator matrix, it suffices now to prove
$\sqrt{d}(I_k\otimes M_{\mathfrak{P}_2^{-1}}^*)(A\widetilde{\otimes}M)^T(I_k\otimes M_{\mathfrak{P}_2^{-1}}^*)=A^T\widetilde{\otimes}M$, which is equivalent to
  \begin{eqnarray*}
    &(I_k\otimes M_{\mathfrak{P}_2^{-1}})(A\widetilde{\otimes}M)^T(I_k\otimes M_{\mathfrak{P}_2^{-1}}^*)=A^T\widetilde{\otimes}M\\
    \Longleftrightarrow&(I_k\otimes M_{\mathfrak{P}_2^{-1}})(A\widetilde{\otimes}M)^T=(A^T\widetilde{\otimes}M)(I_k\otimes M_{\mathfrak{P}_2^{-1}}^T)\\
    \Longleftrightarrow&(A\widetilde{\otimes}M)(I_k\otimes M_{\mathfrak{P}_2^{-1}}^T)=(I_k\otimes M_{\mathfrak{P}_2^{-1}})(A^T\widetilde{\otimes} M)^T\\
    \Longleftrightarrow&(A\widetilde{\otimes}M)(I_k\otimes M^T)=(I_k\otimes M)(A^T\widetilde{\otimes} M)^T\\
    \Longleftrightarrow&\Tr{A\otimes M_1M_1^T}=(I_k\otimes M)(A^T\widetilde{\otimes} M)^T,
  \end{eqnarray*}
  which can be checked by direct computations.
\end{proof}
\subsection{Approach II}\label{sec:realalgebraic}
In this subsection, let $C\subseteq\FF_{p^2}^N$ be a linear code not necessarily having a generator matrix in the standard form.
We consider the lattice $(\rho^{-1}(C),b_\alpha)$, where $\alpha=1/p$ if $d\equiv1~{\rm mod}~4$ and $\alpha=1/2p$ if $d\equiv2,3~{\rm mod}~4$.
Thus $b_\alpha$ is the following bilinear form (see (\ref{eqn:bilinearform})):
\[
b_\alpha(\xv,\yv)=
\left\{
\begin{array}{ll}
 \frac{1}{p}\sum_{i=1}^N\Tr{x_iy_i}&d\equiv1~{\rm mod}~4\\
\frac{1}{2p}\sum_{i=1}^N\Tr{x_iy_i}&d\equiv2,3~{\rm mod}~4
\end{array}
\right..
\]
Then the dual of $\rho^{-1}(C)$ is given by $(\rho^{-1}(C)^*,b_\alpha)$, where $\rho^{-1}(C)^*:=\{\xv\in K^N:b_\alpha(\xv,\yv)\in\ZZ~\forall \yv\in\rho^{-1}(C)\}$.
We have the following relation between the dual of $\rho^{-1}(C)$ and the lattice constructed from the dual of $C$:
\begin{lem}\label{lem:CdualinGCdual}
	$\rho^{-1}(C^\perp)\subseteq\rho^{-1}(C)^*$.
\end{lem}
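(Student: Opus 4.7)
The plan is to unwrap the definitions and use the fact that $p$ is inert (so $\mathfrak{p} = (p)$) together with the explicit value of $\alpha$ in each congruence class of $d \bmod 4$. Take arbitrary $\xv \in \rho^{-1}(C^\perp)$ and $\yv \in \rho^{-1}(C)$; I need to show $b_\alpha(\xv,\yv) \in \ZZ$. Since $\rho(\xv)\in C^\perp$ and $\rho(\yv)\in C$, we have $\sum_i \rho(x_i)\rho(y_i)=0$ in $\FF_{p^2}=\O_K/\mathfrak{p}$, so $\sum_i x_iy_i \in \mathfrak{p} = (p)$, i.e.\ there exists $z \in \O_K$ with $\sum_{i=1}^N x_i y_i = pz$.

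Then I simply compute
\[
b_\alpha(\xv,\yv) \;=\; \alpha\sum_{i=1}^N \Tr{x_iy_i} \;=\; \alpha\,\Tr{pz} \;=\; \alpha p\,\Tr{z}.
\]
In the case $d \equiv 1 \bmod 4$ we have $\alpha p = 1$, so $b_\alpha(\xv,\yv) = \Tr{z} \in \ZZ$ and we are done. In the case $d \equiv 2,3 \bmod 4$ we have $\alpha p = 1/2$, so $b_\alpha(\xv,\yv) = \Tr{z}/2$, and I must argue this is in $\ZZ$. But here $\O_K = \ZZ[\sqrt{d}]$, so writing $z=a+b\sqrt{d}$ gives $\Tr{z}=2a \in 2\ZZ$, and $\Tr{z}/2 = a \in \ZZ$.

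There is no real obstacle: once one notices that inertness of $p$ gives $\mathfrak{p}=(p)$, the containment $\sum x_iy_i\in\mathfrak{p}$ that comes from $\rho(\xv)\cdot\rho(\yv)=0$ is literally divisibility by $p$, which cancels the $1/p$ factor in $\alpha$. The only mildly subtle point is the factor of $2$ when $d\equiv 2,3\bmod 4$, and that is handled by the shape of $\O_K$, exactly as in the proof of the integrality lemma earlier in this section. The same argument also shows the analogous containment in the CM case (with $\bar y_i$ in place of $y_i$), since $\rho$ commutes with conjugation modulo a totally ramified prime.
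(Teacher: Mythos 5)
Your proof is correct and follows essentially the same route as the paper: code orthogonality gives $\sum_i x_iy_i\in\mathfrak{p}=(p)$, and the factor $p$ cancels $\alpha$, with the extra $\tfrac12$ for $d\equiv 2,3\bmod 4$ absorbed by the even trace of elements of $\ZZ[\sqrt{d}]$. The only cosmetic difference is that you write $\sum_i x_iy_i=pz$ and pull $p$ out of the trace, whereas the paper argues $\sigma_2\bigl(\sum_i x_iy_i\bigr)\in(p)$ and hence $\Tr{\sum_i x_iy_i}\in p\ZZ$; these are interchangeable (and your version makes the divisibility by $2p$ slightly more explicit).
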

\begin{proof}
	Take any $\xv\in\rho^{-1}(C^\perp)$ and $\yv\in\rho^{-1}(C)$, we have
	\begin{eqnarray*}
    \rho(\xv\cdot \yv)&=&\rho\left(\sum_{i=1}^Nx_iy_i\right)=\sum_{i=1}^N\rho(x_i)\rho(y_i)\\
    &=&\rho(\xv)\cdot\rho(\yv)=0\in\FF_{p^2},
  \end{eqnarray*}
  where the last equality follows from the definition of $C^\perp$ (see (\ref{eqn:dualcode})).
  Then
  \[
  \sum_{i=1}^Nx_iy_i=\xv\cdot \yv\equiv0~{\rm mod}~(p).
  \]
  Since $p$ is inert, $\sigma_2\left(\sum_{i=1}^Nx_iy_i\right)\in(p)$, we have
  \[
	  \Tr{\sum_{i=1}^Nx_iy_i}\in(p)\cap\ZZ=p\ZZ.
  \]
  In the case $d\equiv2,3~{\rm mod}~4$, any element in $\O_K$ has even trace.
  In conclusion, we have $b_\alpha(\xv,\yv)\in\ZZ$ and hence $\rho^{-1}(C^\perp)\subseteq\rho^{-1}(C)^*$ by definition.
\end{proof}
\begin{corollary}
Let $C$ be a self-orthogonal linear code, then $\rho^{-1}(C)$ is integral.
\end{corollary}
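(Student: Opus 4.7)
The plan is to deduce this corollary directly from Lemma \ref{lem:CdualinGCdual} together with the monotonicity of the preimage operation $\rho^{-1}$. By definition, an integral lattice satisfies $L \subseteq L^*$, so the goal is to establish $\rho^{-1}(C) \subseteq \rho^{-1}(C)^*$.

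First I would observe that self-orthogonality of $C$ means $C \subseteq C^\perp$ by definition. Since $\rho$ is a $\ZZ$-module homomorphism, taking preimages preserves inclusions, hence $\rho^{-1}(C) \subseteq \rho^{-1}(C^\perp)$. Then I would invoke Lemma \ref{lem:CdualinGCdual}, which states exactly that $\rho^{-1}(C^\perp) \subseteq \rho^{-1}(C)^*$. Chaining these two inclusions yields $\rho^{-1}(C) \subseteq \rho^{-1}(C)^*$, which is precisely the statement that $(\rho^{-1}(C), b_\alpha)$ is integral.

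There is essentially no obstacle: the work has already been done in Lemma \ref{lem:CdualinGCdual}, where the trace-and-norm computation verifying $b_\alpha(\xv, \yv) \in \ZZ$ for $\xv \in \rho^{-1}(C^\perp)$ and $\yv \in \rho^{-1}(C)$ was carried out (using that $p$ is inert to guarantee $\sigma_2(\xv \cdot \yv) \in (p)$, and, in the case $d \equiv 2,3 \bmod 4$, the fact that every element of $\Oc_K = \ZZ[\sqrt{d}]$ has even trace). The corollary is just the specialization $C^\perp \supseteq C$ of that lemma. No additional case distinction between $d \equiv 1 \bmod 4$ and $d \equiv 2,3 \bmod 4$ is needed beyond what the lemma already handles through the choice of $\alpha$.
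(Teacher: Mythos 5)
Your proposal is correct and is essentially the paper's own proof: self-orthogonality gives $C\subseteq C^\perp$, monotonicity of $\rho^{-1}$ gives $\rho^{-1}(C)\subseteq\rho^{-1}(C^\perp)$, and Lemma \ref{lem:CdualinGCdual} yields $\rho^{-1}(C^\perp)\subseteq\rho^{-1}(C)^*$, hence integrality. No differences worth noting.
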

\begin{proof}
As $C$ is self-orthogonal, we have $C\subseteq C^\perp$.
Hence by Lemma \ref{lem:CdualinGCdual} $\rho^{-1}(C)\subseteq\rho^{-1}(C^\perp)\subseteq\rho^{-1}(C)^*$.
\end{proof}
By Lemma \ref{lem:latdisc} the discriminant of $\rho^{-1}(C)$ is
\[
\left.
\begin{array}{ll}
\frac{1}{p^{2N}}(\Delta^Np^{4k})=\Delta^N&d\equiv1~{\rm mod}~4\\
\frac{1}{(2p)^{2N}}(\Delta^Np^{4k})=\left(\frac{\Delta}{4}\right)^N&d\equiv2,3~{\rm mod}~4
\end{array}
\right\}=d^N
\]
 We have
\begin{proposition}\label{dual}
$\rho^{-1}(C)$ is $d-$modular.  
\end{proposition}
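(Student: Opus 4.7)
The plan is to exhibit the required $d$-modularity isomorphism explicitly as coordinatewise multiplication by $1/\sqrt{d}$. Define $\tau:K^N\to K^N$ by $\tau(\xv)=\xv/\sqrt{d}$. A direct computation gives
\[
b_\alpha(\tau(\xv),\tau(\yv))=\alpha\sum_{i=1}^N\Tr{x_iy_i/d}=\tfrac{1}{d}\,b_\alpha(\xv,\yv),
\]
so $(d\,b_\alpha)(\tau(\xv),\tau(\yv))=b_\alpha(\xv,\yv)$. Thus once $\tau$ is shown to restrict to a $\ZZ$-module isomorphism $\rho^{-1}(C)\to\rho^{-1}(C)^*$, the lattice is $d$-modular by definition.

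The main step is the inclusion $\tfrac{1}{\sqrt{d}}\rho^{-1}(C)\subseteq\rho^{-1}(C)^*$. Fix $\xv,\yv\in\rho^{-1}(C)$. Since $C=C^\perp$, $\rho(\xv)\cdot\rho(\yv)=0$ in $\FF_{p^2}$, hence $S:=\sum_i x_iy_i\in p\Oc_K$; writing $S=pS'$ with $S'\in\Oc_K$,
\[
b_\alpha(\xv/\sqrt{d},\yv)=\alpha\,\Tr{S/\sqrt{d}}=\Tr{\alpha p\,S'/\sqrt{d}}.
\]
The key observation is that $\alpha p/\sqrt{d}\in\D_K^{-1}$ in both arithmetic cases: it equals $1/\sqrt{d}$, a generator of $\D_K^{-1}$, when $d\equiv 1\bmod 4$, and $1/(2\sqrt{d})$, again a generator of $\D_K^{-1}$, when $d\equiv 2,3\bmod 4$. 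Hence $\alpha pS'/\sqrt{d}\in\D_K^{-1}$ and its trace lies in $\ZZ$.

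For the reverse inclusion I would argue by volumes. By Lemma \ref{lem:latdisc}, $\rho^{-1}(C)$ has volume $d^{N/2}$ and $\rho^{-1}(C)^*$ has volume $d^{-N/2}$, while coordinatewise multiplication by $1/\sqrt{d}$ on $K^N\hookrightarrow\RR^{2N}$ acts on each Galois pair by the matrix $\mathrm{diag}(1/\sqrt{d},-1/\sqrt{d})$ and so scales the lattice volume by $d^{-N}$. Thus $\tfrac{1}{\sqrt{d}}\rho^{-1}(C)$ and $\rho^{-1}(C)^*$ have equal volume, and the one-sided inclusion of full-rank sublattices forces equality. The only delicate step is the trace-integrality check in the second paragraph; the $d\bmod 4$ case split is superficial, since the prefactor $\alpha p$ exactly absorbs the factor of $2$ that appears in the generator of $\D_K^{-1}$ when $d\equiv 2,3\bmod 4$.
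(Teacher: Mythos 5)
Your proof is correct and follows essentially the same route as the paper's: establish $\frac{1}{\sqrt{d}}\rho^{-1}(C)\subseteq\rho^{-1}(C)^*$ by a codifferent/trace argument exploiting $C=C^\perp$, upgrade to equality by comparing volumes, and note that multiplication by $1/\sqrt{d}$ rescales $b_\alpha$ by $1/d$. Your trace step is in fact a slight streamlining: by observing that $\alpha p/\sqrt{d}$ generates $\D_K^{-1}$ you absorb the factor $p$ at once, whereas the paper first shows the trace lies in $p\D_K^{-1}\cap\ZZ$ and then argues this intersection equals $p\ZZ$ using that $p$ is unramified.
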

\begin{proof}
	We first prove $\frac{1}{\sqrt{d}}\rho^{-1}(C)=\rho^{-1}(C)^*$ as $\ZZ-$modules. 

Take any $\xv\in\frac{1}{\sqrt{d}}\rho^{-1}(C)$, $\xv=\frac{1}{\sqrt{d}}\xv'$ with $\xv'\in\rho^{-1}(C)$.
Take any $\yv\in\rho^{-1}(C)$.

	\textbf{Case 1} For $d\equiv 1~{\rm mod}~4$,
  \[
	 b_\alpha(\xv,\yv)=\frac{1}{p}\sum_{i=1}^N\Tr{x_iy_i}=\frac{1}{p}\sum_{i=1}^N\Tr{\frac{1}{\sqrt{d}}x'_iy_i}.
  \]
  Since $x'_i\in\O_K$, $\frac{1}{\sqrt{d}}x'_i\in\mathcal{D}_K^{-1}$. We have
  \[
	  \Tr{\frac{1}{\sqrt{d}}x'_iy_i}\in\ZZ\Longrightarrow\Tr{\sum_{i=1}^N\frac{1}{\sqrt{d}}x'_iy_i}\in\ZZ.
  \]
  By the same argument as in the proof of Lemma \ref{lem:CdualinGCdual}, we have $\sum_{i=1}^Nx'_iy_i\in(p)$, so $\sum_{i=1}^N\frac{1}{\sqrt{d}}x'_iy_i\in p\mathcal{D}_K^{-1}$.\\
  Since $\sigma_2(p)=p$, $\sigma_2(\mathcal{D}_K^{-1})=\mathcal{D}_K^{-1}$, $\Tr{\sum_{i=1}^N\frac{1}{\sqrt{d}}x'_iy_i}\in p\mathcal{D}_K^{-1}$. We have $\Tr{\sum_{i=1}^N\frac{1}{\sqrt{d}}x'_iy_i}\in p\mathcal{D}_K^{-1}\cap\ZZ$.

  \textbf{Case 2} For $d\equiv 2,3~{\rm mod}~4$,
  \[
	  b_\alpha(\xv,\yv)=\frac{1}{2p}\sum_{i=1}^N\Tr{x_iy_i}=\frac{1}{p}\sum_{i=1}^N\Tr{\frac{1}{2\sqrt{d}}x'_iy_i}.
  \]
  Since $x'_i\in\O_K$, $\frac{1}{2\sqrt{d}}x'_i\in\mathcal{D}_K^{-1}$. We have
  \[
	  \Tr{\frac{1}{2\sqrt{d}}x'_iy_i}\in\ZZ\Longrightarrow\Tr{\sum_{i=1}^N\frac{1}{2\sqrt{d}}x'_iy_i}\in\ZZ.
  \]
  Similarly we have $\sum_{i=1}^Nx'_iy_i\in(p)$, so $\sum_{i=1}^N\frac{1}{2\sqrt{d}}x'_iy_i\in p\mathcal{D}_K^{-1}$.

  Since $\sigma_2(p)=p$, $\sigma_2(\mathcal{D}_K^{-1})=\mathcal{D}_K^{-1}$, $\Tr{\sum_{i=1}^N\frac{1}{2\sqrt{d}}x'_iy_i}\in p\mathcal{D}_K^{-1}$. Hence we have $\Tr{\sum_{i=1}^N\frac{1}{2\sqrt{d}}x'_iy_i}\in p\mathcal{D}_K^{-1}\cap\ZZ$.

  By definition of different, $\O_K\subseteq\mathcal{D}_K^{-1}$, so $p\O_K\subseteq p\mathcal{D}_K^{-1}$, we have $p\mathcal{D}_K^{-1}\cap\ZZ\supseteq\O_K\cap\ZZ=p\ZZ$, which gives $p\mathcal{D}_K^{-1}\cap\ZZ=\ZZ$ or $p\ZZ$.
  But $p\mathcal{D}_K^{-1}\cap\ZZ=\ZZ$ implies $(p)|\mathcal{D}_K$, which is impossible as $p$ is inert.
  We have $p\mathcal{D}_K^{-1}\cap\ZZ=p\ZZ$ and hence $b_\alpha(\xv,\yv)\in\ZZ$.

  We have proved $\frac{1}{\sqrt{d}}\rho^{-1}(C)\subseteq \rho^{-1}(C)^*$.

  On the other hand,
  \begin{eqnarray*}
	  vol(\frac{1}{\sqrt{d}}\rho^{-1}(C))=vol(\rho^{-1}(C))\left|\rho^{-1}(C)\big/\frac{1}{\sqrt{d}}\rho^{-1}(C)\right|=\sqrt{d^N}\left(\frac{1}{\sqrt{d}}\right)^{2N}=d^{-\frac{N}{2}},
  \end{eqnarray*}
  and \cite{Ebeling}
  \[
	  vol(\rho^{-1}(C)^*)=\frac{1}{vol(\rho^{-1}(C))}=\frac{1}{\sqrt{d^N}}=d^{-\frac{N}{2}}.
  \]
  Thus we have $\rho^{-1}(C)^*=\frac{1}{\sqrt{d}}\rho^{-1}(C)$.
  Define
  \begin{eqnarray*}
	  h:(\rho^{-1}(C),b_\alpha)&\to&(\rho^{-1}(C)^*,b_\alpha)\\
    x&\mapsto&\frac{1}{\sqrt{d}}x.
  \end{eqnarray*}
  By the above, $h$ is a $\ZZ-$linear bijection. Take any $\xv,\yv\in\rho^{-1}(C)$,
  \begin{eqnarray*}
	  d\cdot b_\alpha(h(\xv),h(\yv))&=&d\cdot \Tr{\sum_{i=1}^N\alpha h(x)_ih(y)_i}\\
	  &=&d\cdot \Tr{\sum_{i=1}^N\alpha\frac{1}{\sqrt{d}}x_i\frac{1}{\sqrt{d}}y_i}=d\cdot \Tr{\sum_{i=1}^N\alpha\frac{1}{d}x_i y_i}\\
	  &=&\Tr{\sum_{i=1}^N\alpha x_i y_i}=b_\alpha(\xv,\yv).
  \end{eqnarray*}
  The proof is completed.
\end{proof}
%
%
%
\section{Interesting Lattices from Totally Real Quadratic Fields}
\label{sec:realproperties}

The previous section gave generic methods to construct modular lattices, out of which we now would like to find lattices with good properties in terms of minimal norm or secrecy gain.
The following definitions hold for integral lattices. Let thus $(L,b)$ be an integral lattice with generator matrix $M_L$. We further assume that the lattice is embedded in $\RR^n$, and that $b$ is the natural inner product. We will then denote the lattice by $L$ for short.

\begin{definition}\cite{Conway, Ebeling}\label{def:min}
The {\em minimum}, or {\em minimal norm}, of $L$ in $\RR^n$, is
\begin{equation}\label{eqn:latmin}
	\mu_L:=\min\{\|\xv\|^2:\xv\in L\},
\end{equation}
which is the length of the shortest nonzero vector. 
\end{definition}
One motivation to consider the shortest nonzero vector comes from the sphere packing problem~\cite{Conway}, which requires large minimum.
Upper bounds on the minimum has been established for $d=1$ (unimodular lattices) in~\cite{Conway1988, Conway1998, Mallows1975, Quebbemann1998} and for $d\in \{2,3,5,6,7,11,14,15,23\}$~\cite{Rains}.	
$d-$modular lattices which reach those bounds are called {\em extremal}. 
Classification of known extremal lattices is found in \cite{Bachoc, Conway, Nebe, Nebe1996, Nebe2015, Quebbemann1997} and on the on-line table \cite{LatticeWeb}. 

\begin{definition}\cite[Chapter 2]{Conway}\label{def:theser}
Let $\mathbb{H}=\{\tau\in\mathbb{C}:\im\tau>0\}$. For $\tau\in\mathbb{H}$ let $q=e^{2\pi i\tau}$. The {\em theta series} of the lattice $L$ is the function
\begin{equation}\label{eqn:thetaseries}
	\Theta_L(\tau):=\sum_{\xv\in L} q^{\|\xv\|^2} = \sum_{m\in\ZZ}A_m q^m,
\end{equation}
where the second equality holds because we took $L$ to be integral and $A_m=|\{\xv:~\xv\in L,~\|\xv\|^2=m\}|$.
\end{definition}
The coefficient of $q$ in the second term of $\Theta_L$ is called the kissing number of $L$, and the power of $q$ in the second term gives its minimum.
The theta series helps in determining bounds for the minimum \cite{Rains} as well as classifying lattices \cite{Bocherer}. It has also been used recently to define the notion of secrecy gain.

\begin{definition}\label{def:wsg}
	Let $L$ be an $n-$dimensional $d-$modular lattice. The {\em weak secrecy gain} of $L$, denoted by  $\chi_L^W$, is given by \cite{Oggier}:
\begin{equation}\label{eqn:wsg}
	\chi_L^W = \frac{\Theta_{\sqrt[4]{d}\ZZ^n}(\tau)}{\Theta_L(\tau)}, \tau=\frac{i}{\sqrt{d}},
\end{equation}
noting that the volume of a $d-$modular lattice is $vol(L)=d^{\frac{n}{4}}$. 
\end{definition}
The secrecy gain characterizes the amount of confusion that a wiretap lattice code brings~\cite{Oggier}. 
The weak secrecy gain $\chi_L^W$ is conjectured to be the secrecy gain itself. 
This conjecture is still open, but for large classes of unimodular lattices, it is known to be true \cite{Ernvall, Pinchak}. 
This motivates the study of the relationship between $d$ and $\chi^W_L$ for $d-$modular lattices \cite{ITW2014, Lin2013, Lin}. 
Up to now, no clear conclusion has been drawn. 
We will construct some examples of $d-$modular lattices in Section \ref{sec:realproperties} to gain more information regarding this problem.

Consider the $2N-$dimensional $d-$modular lattice $(\rho^{-1}(C),b_\alpha)$ with 
\[
b_\alpha(\xv,\yv)=
\left\{
\begin{array}{ll}
 \frac{1}{p}\sum_{i=1}^N\Tr{x_iy_i}&d\equiv1~{\rm mod}~4\\
\frac{1}{2p}\sum_{i=1}^N\Tr{x_iy_i}&d\equiv2,3~{\rm mod}~4
\end{array}
\right.
\]
obtained from a self-dual code $C\subseteq\FF_{p^2}^N$, where $p$ a prime inert in $K=\QQ(\sqrt{d})$, for $d$ a square free positive integer. A generator matrix $M_C$ is given by (\ref{eq:MC}).

We thus consider next the following properties of those $d-$modular lattices:

$\bullet$ whether the lattice constructed is even or odd; recall that an integral lattice $(L,b)$ is called {\em even} if $b(x,x)\in 2\ZZ$ for all $x\in L$ and {\em odd} otherwise.

$\bullet$ the minimum of the lattice;

$\bullet$ the theta series and secrecy gain of the lattice.

The computations in this section are mostly done by using SAGE \cite{Sage} and Magma \cite{Magma}.
\subsection{Even/Odd Lattices and Minimum}

We will give general results for the first two properties in this subsection.
By observing the Gram matrices, we have the following results
\begin{proposition}\label{prop:even}
	The lattice $(\rho^{-1}(C),b_\alpha)$ is even iff $d\equiv5~{\rm mod}~8$, $p=2$ and the diagonal entries of $I+AA^T$ are elements from $(4)$.
\end{proposition}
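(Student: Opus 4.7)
The plan is to use the standard fact that an integral lattice is even iff every diagonal entry of its Gram matrix lies in $2\ZZ$, and then inspect the diagonal of $G_C$ from (\ref{eq:GC}). Only the block-diagonal contributes, so the task splits into a field-only part (bottom-right block $\alpha(I_k\otimes p^2MM^T)$) and a code-dependent part (top-left block $\alpha\Tr{(I+AA^T)\otimes M_1M_1^T}$).

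I would first use the bottom-right block for necessity. Its diagonal entries are, with multiplicity $k$, equal to $2\alpha p^2$ (coming from $v_1=1$) and $\alpha p^2\Tr{v^2}$ (coming from $v_2=v$). Substituting the relevant $\alpha$ together with $\Tr{v^2}=(d+1)/2$ or $2d$ according to $d\bmod 4$, an elementary parity check rules out $d\equiv 2,3\bmod 4$ (it would force $p=2$, contradicting inertness since $2$ ramifies there) and, for $d\equiv 1\bmod 4$, forces $p=2$; inertness of $2$ in $\QQ(\sqrt{d})$ then selects exactly $d\equiv 5\bmod 8$.

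Under $p=2$, $d\equiv 5\bmod 8$, $\alpha=1/2$, I would then pin down the condition on the code via the top-left block, whose diagonal entries are $\tfrac12\Tr{c_{ii}v_j^2}$ for $i=1,\dots,k$, $j\in\{1,2\}$, where $c_{ii}$ is the $i$-th diagonal entry of $I+AA^T$. Sufficiency under $c_{ii}\in(4)$ is immediate: writing $c_{ii}=4\widetilde{c}_{ii}$ gives $\tfrac12\Tr{c_{ii}v_j^2}=2\Tr{\widetilde{c}_{ii}v_j^2}\in 2\ZZ$. For necessity, self-orthogonality of $C$ (via Lemma \ref{lem:selfdualcode}) gives $c_{ii}\in(2)=2\Oc_K$, so write $c_{ii}=2c'_{ii}$ with $c'_{ii}=a+bv$, $a,b\in\ZZ$. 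Using $v^2=v+(d-1)/4$, a short computation yields $\Tr{c'_{ii}}=2a+b$ and $\Tr{c'_{ii}v^2}=a(1+2m)+b(1+3m)$ with $m=(d-1)/4$; since $d\equiv 5\bmod 8$ makes $m$ odd, these reduce modulo $2$ to $b$ and $a$ respectively, hence are simultaneously even iff $c'_{ii}\in 2\Oc_K$, i.e.\ iff $c_{ii}\in(4)$.

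The main obstacle is this last parity computation: it is the value $m\bmod 2$, determined by $d\bmod 8$, that decouples the two conditions ($j=1$ and $j=2$) into the single clean statement $c'_{ii}\in 2\Oc_K$. The remainder is routine unpacking of the tensor and block structure of (\ref{eq:GC}).
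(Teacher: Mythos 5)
Your proposal is correct and follows the same skeleton as the paper's own proof: reduce evenness (for this integral lattice) to the diagonal of the Gram matrix (\ref{eq:GC}), use the lower-right block $\alpha p^2 I_k\otimes MM^T$ to rule out $d\equiv2,3~{\rm mod}~4$ and force $p=2$, hence $d\equiv5~{\rm mod}~8$ by inertness of $2$, and then read the code condition off the diagonal of the upper-left block $\alpha\Tr{(I+AA^T)\otimes M_1M_1^T}$. The only genuine divergence is the final equivalence. The paper notes that $\{1,v^2\}$ is a $\ZZ$-basis of $\O_K$ (since $v=v^2-\frac{d-1}{4}$), so the two trace conditions say $\Tr{c_{ii}x}\in4\ZZ$ for all $x\in\O_K$, i.e. $c_{ii}\in4\D_K^{-1}$, and concludes with $4\D_K^{-1}\cap\O_K=4\O_K$; you instead write $c_{ii}=2c'_{ii}$ (using self-orthogonality via Lemma \ref{lem:selfdualcode}) and finish by an explicit parity computation in the coordinates $a,b$. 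Your route is more elementary and self-contained, but it leans on $c_{ii}\in(2)$, which the codifferent argument does not need (it is of course available here since $C$ is self-dual). One small overstatement: the parity of $m=(d-1)/4$ is not really what makes the reduction work --- if $m$ were even the two congruences would read $b\equiv0$ and $a+b\equiv0~{\rm mod}~2$, which still force $a\equiv b\equiv 0$ --- but this is harmless since $d\equiv5~{\rm mod}~8$ holds in the relevant case anyway.
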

\begin{proof}
\textbf{Case 1}: $d\equiv2,3~{\rm mod}~4$, $2$ is always ramified, so $p$ is an odd prime. And we have
\[
MM^T = 
\begin{bmatrix}
1&1\\
1+\sqrt{d}&1-\sqrt{d}
\end{bmatrix}
\begin{bmatrix}
1&1+\sqrt{d}\\
1&1-\sqrt{d}
\end{bmatrix}
=\begin{bmatrix}
2&2\\
2&2+2d
\end{bmatrix}.
\]
The lower right corner of the Gram matrix is given by
\[
\frac{1}{2}I_k\otimes p
\begin{bmatrix}
2&2\\
2&2+2d
\end{bmatrix}
=I_k\otimes
\begin{bmatrix}
p&p\\
p&p(1+d)
\end{bmatrix}.
\]
Hence the lattice is odd.

\textbf{Case 2}: $d\equiv1~{\rm mod}~4$ and $p$ is an odd prime
\[
MM^T = 
\begin{bmatrix}
1&1\\
\frac{1+\sqrt{d}}{2}&\frac{1-\sqrt{d}}{2}
\end{bmatrix}
\begin{bmatrix}
	1&\frac{1+\sqrt{d}}{2}\\
	1&\frac{1-\sqrt{d}}{2}
\end{bmatrix}
=\begin{bmatrix}
2&1\\
1&\frac{d+1}{2}
\end{bmatrix}.
\]
The lower right corner of the Gram matrix is given by
\[
I_k\otimes p
\begin{bmatrix}
2&1\\
1&\frac{d+1}{2}
\end{bmatrix}
=I_k\otimes
\begin{bmatrix}
2p&p\\
p&p\frac{(1+d)}{2}
\end{bmatrix}.
\]
Hence the lattice is odd.

\textbf{Case 3}: When $d\equiv1~{\rm mod}~4$ and $p=2$, $\O_K=\ZZ[\frac{1+\sqrt{d}}{2}]$. The minimum polynomial of $\frac{1+\sqrt{d}}{2}$ is $f(x) = x^2-x+\frac{1-d}{4}$. We have
\[
	f(x)\equiv
	\begin{cases}
		x^2-x\equiv x(x-1)~{\rm mod}~2&d\equiv1~{\rm mod}~8\\
		x^2-x+1~{\rm mod}~2&d\equiv5~{\rm mod}~8
	\end{cases}
\]
So $2$ is inert only when $d\equiv5~{\rm mod}~8$. In this case, the right lower corner of the Gram Matrix is
\[
I_k\otimes2
\begin{bmatrix}
2&1\\
1&\frac{d+1}{2}
\end{bmatrix}
=
I_k\otimes
\begin{bmatrix}
4&2\\
2&d+1
\end{bmatrix},
\]
which has even diagonal entries.

Furthermore, 
\[
M_1M_1^T=
\begin{bmatrix}
1\\
\frac{1+\sqrt{d}}{2}
\end{bmatrix}
\begin{bmatrix}
	1&\frac{1+\sqrt{d}}{2}
\end{bmatrix}
=
\begin{bmatrix}
	1&\frac{1+\sqrt{d}}{2}\\
	\frac{1+\sqrt{d}}{2}&\frac{1+d+2\sqrt{d}}{4}
\end{bmatrix}.
\]
The left upper corner of the Gram Matrix is
\[
	\frac{1}{2}\Tr{(I+AA^T)\otimes M_1M_1^T}=\Tr{(I+AA^T)\otimes
	\begin{bmatrix}
		\frac{1}{2}&\frac{1+\sqrt{d}}{4}\\
	\frac{1+\sqrt{d}}{4}&\frac{1+d+2\sqrt{d}}{8}	
        \end{bmatrix}
}.
\]
Let $\{\cv_1,\dots,\cv_t\}$ be the rows of $(I ~ A~{\rm mod}~(2))$, i.e. they form a basis for $C$. 
Let $\{\hat{\cv}_1,\dots,\hat{\cv}_k\}$ denote the rows of $(I~A)$, i.e. $\hat{\cv}_i$ is a preimage of $\cv_i$. 
Then the diagonal entries of $\frac{1}{2}\Tr{(I+AA^T)\otimes M_1M_1^T}$ are given by $\Tr{\frac{1}{2}\hat{\cv}_i\cdot\hat{\cv}_i}\text{ and }\Tr{\frac{1+d+2\sqrt{d}}{8}\hat{\cv}_i\cdot\hat{\cv}_i}$.
The lattice is even iff $\forall i$, $\Tr{\frac{1}{2}\hat{\cv}_i\cdot\hat{\cv}_i},\Tr{\frac{1+d+2\sqrt{d}}{8}\hat{\cv}_i\cdot\hat{\cv}_i}\in 2\ZZ$, i.e.
\begin{equation}\label{eqn:cici}
	\Tr{\hat{\cv}_i\cdot\hat{\cv}_i},\Tr{\frac{1+d+2\sqrt{d}}{4}\hat{\cv}_i\cdot\hat{\cv}_i}\in 4\ZZ.
\end{equation}
As $d\equiv1~{\rm mod}~4$,
\[
	\frac{1+\sqrt{d}}{2}=\frac{1+d+2\sqrt{d}}{4}-\frac{d-1}{4}
\]
shows $\{1,\frac{1+d+2\sqrt{d}}{4}\}$ is a $\ZZ-$basis for $\Oc_K$.
Then (\ref{eqn:cici}) is equivalent as $\hat{\cv}_i\cdot\hat{\cv}_i\in4\D^{-1}_K$.
Since $\hat{\cv}_i\in\O_K$, $\hat{\cv}_i\cdot\hat{\cv}_i\in\O_K$, the lattice is even iff 
\[
\hat{\cv}_i\cdot\hat{\cv}_i\in4\D^{-1}_K\cap\O_K=4\O_K
\]
$\hat{\cv}_i\cdot\hat{\cv}_i$ are exactly the diagonal entries of $I+AA^T$ and the proof is completed.
\end{proof}
Next we look at the minimum of some of those lattices.

Consider $d\equiv2,3~{\rm mod}~4$. Let $p$ be a prime such that $\left(\frac{d}{p}\right)=-1$, hence $p$ is inert in $\QQ(\sqrt{d})$ and the finite field $\FF_{p^2}\cong\FF_p(\omega)$, where $\omega$ satisfies the polynomial $x^2-d=0~{\rm mod}~p$. 
Let $C\subseteq\FF_{p^2}^N$ be a self-dual linear code. 
Then each codeword $\cv\in C$ can be written as $\sv + \tv\omega$ for some $\sv,\tv\in\FF_p^N$.
For each coordinate of $\cv$, we have $c_i=s_i+t_i\omega$. Note that each $\hat{\cv}\in\O_K^N$ with $\hat{c}_i=s_i+t_i\sqrt{d}\in\O_K$, where $s_i$ and $t_i$ are considered as integers, is a preimage of $\cv$. 
Furthermore, we can assume $s_i,t_i\in\{-\frac{p-1}{2},\frac{p-3}{2},\dots,-1,0,1,\dots,\frac{p-1}{2}\}$. 
We have proved that $(\rho^{-1}(C),b_\alpha)$ is an odd $d-$modular lattice of dimension $2N$. Moreover, we have
\begin{lemma}
	For $d\equiv2,3~{\rm mod}~4$, the minimum of $(\rho^{-1}(C),b_\alpha)$ is given by
\[
	\min\left\{p,\ \ \min_{\cv\in C\backslash\{\bf{0}\}}b_\alpha(\hat{\cv},\hat{\cv})\right\}
\]
\end{lemma}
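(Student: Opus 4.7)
The plan is to partition $\rho^{-1}(C)\setminus\{\boldsymbol{0}\}$ according to the reduction modulo $\mathfrak{p}=(p)$, and minimize the norm on each coset separately. Since $p$ is inert, we have $\rho^{-1}(\boldsymbol{0})=(p\Oc_K)^N$, so
\[
\rho^{-1}(C)\setminus\{\boldsymbol{0}\}=\bigl((p\Oc_K)^N\setminus\{\boldsymbol{0}\}\bigr)\sqcup \bigsqcup_{\boldsymbol{c}\in C\setminus\{\boldsymbol{0}\}}\rho^{-1}(\boldsymbol{c}).
\]
The minimum of $(\rho^{-1}(C),b_\alpha)$ is then the smaller of the infima on these two pieces.

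For the first piece, write $\boldsymbol{y}=p\boldsymbol{z}$ with $\boldsymbol{z}\in\Oc_K^N$ nonzero, $z_i=a_i+b_i\sqrt{d}\in\ZZ[\sqrt{d}]$. Since $\Tr{(a+b\sqrt{d})^2}=2(a^2+b^2 d)$, I compute
\[
b_\alpha(\boldsymbol{y},\boldsymbol{y})=\frac{1}{2p}\sum_{i=1}^N\Tr{p^2 z_i^2}=p\sum_{i=1}^N(a_i^2+b_i^2 d),
\]
which is minimized (over $\boldsymbol{z}\neq \boldsymbol{0}$) by a vector with a single $\pm 1$ coordinate and all others zero, giving the value $p$. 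So on this piece the minimum is exactly $p$.

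For the second piece, fix $\boldsymbol{c}\in C\setminus\{\boldsymbol{0}\}$ and let $\hat{\boldsymbol{c}}$ be the canonical preimage with coordinates $s_i+t_i\sqrt{d}$, $s_i,t_i\in\{-\frac{p-1}{2},\ldots,\frac{p-1}{2}\}$. Any other preimage has the form $\boldsymbol{x}=\hat{\boldsymbol{c}}+p\boldsymbol{z}$ for some $\boldsymbol{z}\in\Oc_K^N$, so writing $z_i=a_i+b_i\sqrt{d}$ I get
\[
b_\alpha(\boldsymbol{x},\boldsymbol{x})=\frac{1}{p}\sum_{i=1}^N\Bigl((s_i+pa_i)^2+(t_i+pb_i)^2 d\Bigr).
\]
The key step is the observation that because $|s_i|,|t_i|\leq (p-1)/2$, for any integer $a\neq 0$ one has $|s_i+pa|\geq p-|s_i|\geq (p+1)/2>|s_i|$, and similarly for the $t_i$ term. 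Hence each summand is coordinate-wise minimized at $a_i=b_i=0$, i.e. at $\boldsymbol{x}=\hat{\boldsymbol{c}}$, giving $\min_{\rho(\boldsymbol{x})=\boldsymbol{c}}b_\alpha(\boldsymbol{x},\boldsymbol{x})=b_\alpha(\hat{\boldsymbol{c}},\hat{\boldsymbol{c}})$. Taking the minimum over $\boldsymbol{c}\in C\setminus\{\boldsymbol{0}\}$ and combining with the first piece yields the stated formula.

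The only subtle point is the coordinate-wise minimization in the second piece; all the rest is bookkeeping about cosets. There is no serious obstacle, since the bound $|s_i|\leq (p-1)/2$ together with $p$ being an odd prime (guaranteed here because $d\equiv 2,3\pmod 4$ forces $2$ to ramify and hence $p$ odd) makes the shift argument completely routine.
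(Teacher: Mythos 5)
Your proposal is correct and follows essentially the same route as the paper: decompose $\rho^{-1}(C)$ into the cosets $\rho^{-1}(\cv)$, treat the zero coset $(p\Oc_K)^N$ separately to get the value $p$, and minimize coordinate-wise over each nonzero coset using the symmetric representatives $s_i,t_i\in\{-\frac{p-1}{2},\dots,\frac{p-1}{2}\}$ to conclude the minimum is attained at $\hat{\cv}$. Your explicit inequality $|s_i+pa|\geq p-|s_i|>|s_i|$ just spells out the shift argument the paper states more tersely.
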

\begin{proof}
	Take any $\cv\in C$, then any $\xv\in\rho^{-1}(\cv)$ is of the form $\xv=\hat{\cv}+p\yv$ for some $\yv\in\O_K^N$. Write $y_i=a_i+b_i\sqrt{d}$ for $a_i,b_i\in\ZZ$. Then
	\[
		x_i^2 = (c_i+py_i)^2 = ((s_i+pa_i)+(t_i+pb_i)\sqrt{d})^2, \Tr {x_i^2} = 2(s_i+pa_i)^2+2(t_i+pb_i)^2d.
	\]
	Since $a_i\in\ZZ$, $s_i\in\{-\frac{p-1}{2},\frac{p-3}{2},\dots,-1,0,1,\dots,\frac{p-1}{2}\}$, the minimum value for $(s_i+pa_i)^2$ is $s_i^2$. Similarly, the minimum value for $(t_i+pb_i)^2$ is $t_i^2$. 

	For $\cv\neq\boldsymbol{0}$, minimum value for $\Tr{x_i^2}$ is $2s_i^2+2t_i^2d$ and we have 
	\[
		\min_{\xv\in\rho^{-1}(\cv)}b_\alpha(\xv,\xv)=\frac{1}{2p}\sum_{i=1}^N2(s_i^2+t_i^2d)=\frac{1}{2p}\sum_{i=1}^N\Tr{\hat{c}_i^2}=b_\alpha(\hat{\cv},\hat{\cv}).
	\]
        When $\cv=\bf{0}$
	\[
		b_\alpha(\xv,\xv) = \frac{1}{2p}\sum_{i=1}^N\Tr{x_i^2} = p\sum_{i=1}^N(a_i^2+b_i^2d),
	\]
	which has minimum value $p$ ($\xv\neq\bf{0}$).

	We have
	\[
		\min_{\xv\in\rho^{-1}(C)}b_\alpha(\xv,\xv)=\min_{\cv\in C}\{\min_{\xv\in\rho^{-1}(\cv)}b_\alpha(\xv,\xv)\}=\min\left\{p,\ \ \min_{\cv\in C\backslash\{\bf{0}\}}b_\alpha(\hat{\cv},\hat{\cv})\right\}.
	\]
\end{proof}
\subsection{A New Extremal Lattice.}

	Take $d=5$, $p=2$, $N=6$, $C\subseteq\FF_{4}^6$ with generator matrix $(I~A~{\rm mod}~(2))$ and $A~{\rm mod}~(2)$ is given by
      \[
		\begin{bmatrix}
                  \omega&1&\omega\\
		  0&\omega+1&\omega\\
	 	  \omega+1&\omega+1&1
		\end{bmatrix},
	\]
	where $\omega\in\FF_{4}$ satisfies $\omega^2+\omega+1=0$. Taking $\frac{1+\sqrt{5}}{2}$ to be the preimage of $\omega$, we have $(I+AA^T)_{11} = \sqrt{5}+5\notin(4)$. 
	By Proposition \ref{prop:even} $L$ is an odd $5$-modular lattice of dimension $12$. 
	We computed that this lattice has minimum $4$, kissing number 60. 
	It is an extremal $5$-modular lattice in dimension $12$. It seems to be new.

\subsection{Construction of Existing Lattices}
We present a construction from codes of some well known lattices.
\begin{example}
	Take $d=5$, $p=2$, $N=4$, $C\subseteq\FF_{4}^4$ with generator matrix $(I~A~{\rm mod}~(2))$ and $A~{\rm mod}~(2)$ is given by
    	\[
		\begin{bmatrix}
\omega^2&\omega\\
-\omega&\omega^2
		\end{bmatrix},
	\]
	where $\omega\in\FF_{4}$ satisfies $\omega^2+\omega+1=0$. 
	Taking $\frac{1+\sqrt{5}}{2}$ to be the preimage of $\omega$, we have
	\[
I+AA^T=
\begin{bmatrix}
	2\sqrt{d}+6&0\\
	0&2\sqrt{d}+6
\end{bmatrix}
	\]
	By Proposition \ref{prop:even} $L$ is an even $5$-modular lattice of dimension $8$. 
	This lattice is actually the unique $5$-modular odd lattice of dimension $8$ and minimum $4$ ($Q_8(1)$ in Table 1 of \cite{Rains}).
\end{example}
\begin{example}
	Take $d=6$, $p=7$, $N=4$, $C\subseteq\FF_{25}^4$ with generator matrix $(I~A~{\rm mod}~(7))$ and $A~{\rm mod}~(7)$ is given by
    	\[
		\begin{bmatrix}
2+\omega&2-\omega\\
2-\omega&-2-\omega
		\end{bmatrix},
	\]	
	where $\omega\in\FF_{25}$ satisfies $\omega^2=2$. Then we get the unique $6$-modular odd lattice of dimension $8$ and minimum $3$ ($O^{(6)}$ in Table 1 of \cite{Rains}).
\end{example}
\begin{example}
Take $d=3$, $p=5$, $N=6$, linear code $C\subset\FF_{25}^6$ generated by $(I~A~{\rm mod}~(5))$ and $A~{\rm mod}~(5))$ is given by
\[
		\begin{bmatrix}
\omega+1&2\omega+2&2\\
2\omega+1&2&-\omega+2\\
-\omega+3&\omega+1&2\omega+1
		\end{bmatrix}.
	\]
	We get the unique $3-$modular odd lattice of dimension $12$ and minimum $3$ ($O^{(3)}$ in Table 1 of \cite{Rains}).	
\end{example}

\begin{example}
		Take $d=2$, $p=5$, $N=8$, $C\subseteq\FF_{25}^8$ with generator matrix $(I~A~{\rm mod}~(5))$. $A~{\rm mod}~(5)$ is given by
	      \[
		\left(\begin{array}{rrrr}
2 \omega + 1 & 4 \omega + 1 & 4 \omega + 3 & 4 \omega + 3 \\
\omega + 3 & 2 & 0 & 3 \omega + 4 \\
3 \omega + 3 & 0 & 2 & 4 a + 4 \\
3 \omega + 2 & 3 \omega + 2 & 3 \omega + 1 & 1
\end{array}\right)	\]
	where $\omega\in\FF_{25}$ satisfies $\omega^2=2$. 
	Taking $\sqrt{2}$ to be the preimage of $\omega$, we have the unique odd $2$-modular lattice of dimension $16$ with minimum $3$ (Odd Barnes-Wall lattice $O^{(2)}$ in Table 1 of  \cite{Rains}).
\end{example}
\subsection{Some Lattices with Large Minimum}

We next present two lattices which, though not extremal, achieve a large minimum.
``Large'' means close to the bound that extremal lattices achieve.

We also compute their theta series, so we can later on compute their weak secrecy gain.
Evidence from unimodular lattices~\cite{Lin2013} suggests that a large minimum induces a large weak secrecy gain.

\begin{example}
	Take $d=7$, $p=5$, $N=4$, we get an $8-$dimensional $7-$modular lattice with minimum $3$ and theta series $1 + 16q^3 + 16q^4 + O(q^5)$.
	Note that the upper bound for the minimum of a $8-$dimensional $7-$modular lattice is $4$.
\end{example}
\begin{example}
	Take $d=6$, $N=6$, we can get three different $12-$dimensional $6-$modular lattices with minimum $3$, which is close to the upper bound $4$. Their theta series are as follows:
	\begin{equation*}
		\begin{aligned}
		&1 + 4q^3 + 36q^4 + O(q^5)\\
		&1 + 12q^3 + 40q^4 + O(q^{5})\\
		&1 + 16q^3 + 36q^4 + O(q^{5}).
	\end{aligned}
	\end{equation*}
\end{example}
%
%
\begin{table}
\caption{Weak Secrecy Gain-Dimension 8}
\label{SGTableD8}
\begin{adjustbox}{center}
	
	\small
\begin{tabular}{|c|r|r|c|c|r|rrrrrrrrrr|} 
  \hline			
  No. & Dim & $d$ & $\mu_{L}$ & ks   & $\chi_{L}^W$ &   &   & &  &  $\Theta_{L}$   &   &   &   &   &     \\ 
  \hline
    1	  &  8  &  3  &   2   & 8    &   1.2077  & 1 & 0  & 8  & 64  & 120 & 192  & 424  & 576   & 920   & 1600\\
    2	  &  8  &  5  &   2   & 8    &   1.0020  & 1 & 0  & 8  & 16  & 24  & 96   & 128  & 208   & 408   & 480\\  
    3	  &  8  &  5  &   4   & 120  &   1.2970  & 1 & 0  & 0  & 0   & 120 & 0    & 240  & 0     & 600   & 0 \\
    4	  &  8  &  6  &   3   & 16   &   1.1753  & 1 & 0  & 0  & 16  & 24  & 48   & 128  & 144   & 216   & 400 \\ 
    5	  &  8  &  7  &   2   & 8    &   0.8838  & 1 & 0  & 8  & 0   & 24  & 64   & 32   & 128   & 120   & 192\\ 
    6	  &  8  &  7  &   3   & 16   &   1.1048  & 1 & 0  & 0  & 16  & 16  & 16   & 80   & 128   & 224   & 288\\  
    7	  &  8  &  11 &   3   & 8    &   1.0015  & 1 & 0  & 0  & 8   & 8   & 8    & 24   & 48    & 72    & 88\\
    8	  &  8  &  14 &   2   & 8    &   0.5303  & 1 & 0  & 8  & 0   & 24  & 0    & 32   & 8     & 24    & 64\\  
    9	  &  8  &  14 &   3   & 8    &	 0.9216  & 1 & 0  & 0  & 8   & 0   & 8    & 32   & 0     & 48    & 80\\ 
   10	  &  8  &  15 &   3   & 8    &	 0.8869  & 1 & 0  & 0  & 8   & 0   & 8    & 24   & 0     & 64    & 32 \\
   11	  &  8  &  15 &   4   & 8    &	 1.0840  & 1 & 0  & 0  & 0   & 8   & 16   & 0    & 16    & 32    & 64 \\
   12	  &  8  &  23 &   3   & 8    &	 0.6847  & 1 & 0  & 0  & 8   & 0   & 0    & 24   & 0     & 8     & 40 \\
   13	  &  8  &  23 &   5   & 16   &	 1.0396  & 1 & 0  & 0  & 0   & 0   & 16   & 0    & 0     & 16    & 0 \\
   14	  &  8  &  23 &   5   & 8    &	 1.1394  & 1 & 0  & 0  & 0   & 0   & 8    & 0    & 8     & 24    & 24  \\
\hline 
\end{tabular}

\end{adjustbox}
\end{table}

\subsection{Modular Lattices and their Weak Secrecy Gain}

We are now interested in the relationship between the level $d$ and the weak secrecy gain $\chi^W_{L}$ (see Definition \ref{def:wsg}).
We list the weak secrecy gain of some lattices we have constructed for dimensions $8$ (Table \ref{SGTableD8}), $12$ (Table \ref{SGTableD12}) and $16$ (Table \ref{SGTableD16}). 
In the tables, each row corresponds to a lattice $L$

$\bullet$ labeled by `No.';

$\bullet$ in dimension `Dim';

$\bullet$ of level $d$ (i.e., $L$ is $d-$modular);

$\bullet$ with minimum $\mu_{L}$ (the norm of the shortest vector, see Definition \ref{def:min});

$\bullet$ kissing number `ks' (the number of lattice points with minimal norm);

$\bullet$ obtains weak secrecy gain $\chi_{L}^W$ (see Definition \ref{def:wsg}).

Then in the last column we give the first $10$ coefficients of its theta series $\Theta_{L}$ (see Definition \ref{def:theser}).

\begin{table}
\caption{Weak Secrecy Gain-Dimension 12}
\label{SGTableD12}
\begin{adjustbox}{center}
	
	\small
\begin{tabular}{|c|r|r|c|c|r|rrrrrrrrrr|} 
  \hline			
  No. & Dim & $d$ & $\mu_{L}$ & ks   & $\chi_{L}^W$ &   &   & &  &  $\Theta_{L}$   &   &   &   &   &     \\ 
  \hline

   15     &  12 &   3 &   1   & 12   &	 0.4692  & 1 & 12 & 60 & 172 & 396 & 1032 & 2524 & 4704  & 8364  & 17164\\
   16	  &  12 &   3 &   1   & 4    &	 0.8342  & 1 & 4  & 28 & 100 & 332 & 984  & 2236 & 5024  & 9772  & 16516\\
17	  &  12 &   3 &   1   & 4    &	 0.9385  & 1 & 4  & 12 & 100 & 428 & 984  & 2092 & 5024  & 9708  & 16516\\
18	  &  12 &   3 &   2   & 24   &	 1.2012  & 1 & 0  & 24 & 64  & 228 & 960  & 2200 & 5184  & 10524 & 16192\\
19	  &  12 &   3 &   2   & 12   &	 1.3650  & 1 & 0  & 12 & 64  & 300 & 960  & 2092 & 5184  & 10476 & 16192\\
20	  &  12 &   3 &   3   & 64   &	 1.5806  & 1 & 0  & 0  & 64  & 372 & 960  & 1984 & 5184  & 10428 & 16192\\
21	  &  12 &   5 &   2   & 12   &	 1.0030  & 1 & 0  & 12 & 24  & 60  & 240  & 400  & 984   & 2172  & 3440\\
22	  &  12 &   5 &   4   & 60   &	 1.6048  & 1 & 0  & 0  & 0   & 60  & 288  & 520  & 960   & 1980  & 3680\\  
23	  &  12 &   6 &   1   & 12   &	 0.1820  & 1 & 12 & 60 & 160 & 252 & 312  & 556  & 1104  & 1740  & 2796 \\
24	  &  12 &   6 &   1   & 6    &	 0.3845  & 1 & 6  & 20 & 58  & 132 & 236  & 460  & 936   & 1564  & 2478\\
25	  &  12 &   6 &   2   & 8    &	 0.9797  & 1 & 0  & 8  & 20  & 36  & 144  & 264  & 544   & 1244  & 2016\\
26	  &  12 &   6 &   3   & 16   &	 1.3580  & 1 & 0  & 0  & 16  & 36  & 96   & 256  & 624   & 1308  & 2112 \\
27	  &  12 &   6 &   3   & 12   &   1.3974  & 1 & 0  & 0  & 12  & 40  & 100  & 244  & 668   & 1284  & 2076 \\
28	  &  12 &   6 &   3   & 12   &   1.5044  & 1 & 0  & 0  & 4   & 36  & 132  & 256  & 660   & 1308  & 1980\\
29	  &  12 &   7 &   1   & 12   &   0.1452  & 1 & 12 & 60 & 160 & 252 & 312  & 544  & 972   & 1164  & 1596\\
30	  &  12 &   7 &   1   & 4    &   0.4645  & 1 & 4  & 12 & 32  & 60  & 168  & 416  & 580   & 876   & 1684\\
31	  &  12 &   7 &   1   & 4    &   0.5806  & 1 & 4  & 4  & 16  & 84  & 152  & 208  & 580   & 1268  & 1908\\
32	  &  12 &   7 &   2   & 12   &   0.7584  & 1 & 0  & 12 & 16  & 36  & 144  & 112  & 384   & 852   & 1056\\
33	  &  12 &   7 &   2   & 8    &   0.8795  & 1 & 0  & 8  & 16  & 28  & 112  & 160  & 384   & 772   & 1152\\
34	  &  12 &   7 &   3   & 4    &   1.4023  & 1 & 0  & 0  & 4   & 36  & 84   & 64   & 384   & 972   & 1368\\
35	  &  12 &  11 &   1   & 8    &   0.1765  & 1 & 8  & 24 & 36  & 60  & 180  & 356  & 424   & 612   & 1204\\
36	  &  12 &  11 &   1   & 4    &   0.2173  & 1 & 4  & 16 & 48  & 88  & 152  & 204  & 144   & 316   & 772\\
37	  &  12 &  11 &   3   & 12   &   1.0726  & 1 & 0  & 0  & 12  & 0   & 12   & 108  & 72    & 108   & 436\\
38	  &  12 &  14 &   1   & 8    &   0.1331  & 1 & 8  & 24 & 36  & 56  & 148  & 264  & 320   & 544   & 912\\
39	  &  12 &  14 &   1   & 4    &   0.1534  & 1 & 4  & 16 & 48  & 88  & 152  & 204  & 144   & 280   & 628\\
40	  &  12 &  14 &   3   & 12   &   0.9134  & 1 & 0  & 0  & 12  & 0   & 0    & 72   & 48    & 72    & 256\\ 
41	  &  12 &  15 &   1   & 8    &   0.1313  & 1 & 8  & 24 & 32  & 32  & 112  & 292  & 352   & 328   & 744\\
42	  &  12 &  15 &   1   & 4    &   0.3899  & 1 & 4  & 4  & 0   & 12  & 56   & 96   & 80    & 132   & 388\\
43	  &  12 &  15 &   1   & 2    &   0.4661  & 1 & 2  & 0  & 10  & 32  & 30   & 44   & 96    & 128   & 186 \\   
44	  &  12 &  15 &   2   & 6    &   0.5455  & 1 & 0  & 6  & 8   & 4   & 42   & 46   & 74    & 136   & 154\\
45	  &  12 &  15 &   2   & 6    &   0.9217  & 1 & 0  & 2  & 2   & 4   & 24   & 20   & 46    & 100   & 154\\
46	  &  12 &  15 &   3   & 4    &   1.0031  & 1 & 0  & 0  & 4   & 8   & 18   & 28   & 36    & 64    & 104\\
47	  &  12 &  15 &   4   & 4    &   1.3573  & 1 & 0  & 0  & 0   & 4   & 10   & 12   & 48    & 72    & 108\\
48	  &  12 &  15 &   5   & 4    &   1.5265  & 1 & 0  & 0  & 0   & 0   & 4    & 12   & 44    & 108   & 112 \\
49	  &  12 &  23 &   1   & 8    &   0.0698  & 1 & 8  & 24 & 36  & 56  & 144  & 228  & 192   & 316   & 652\\
50	  &  12 &  23 &   1   & 4    &   0.0735  & 1 & 4  & 16 & 48  & 88  & 152  & 204  & 144   & 280   & 628\\
51	  &  12 &  23 &   3   & 12   &   0.5690  & 1 & 0  & 0  & 12  & 0   & 0    & 60   & 0     & 0     & 172\\	  
 \hline
\end{tabular}

\end{adjustbox}

\end{table}
  
\begin{table}
\caption{Weak Secrecy Gain-Dimension 16}
\label{SGTableD16}
\begin{adjustbox}{center}
	
	\small
\begin{tabular}{|c|r|r|c|c|r|rrrrrrrrrr|} 
  \hline			
  No. & Dim & $d$ & $\mu_{L}$ & ks   & $\chi_{L}^W$ &   &   & &  &  $\Theta_{L}$   &   &   &   &   &     \\ 
  \hline

52  	  &  16 &  3  &   2   & 16   &   1.4585  & 1 & 0  & 16 & 128 & 304 & 1408 & 6864 & 19584 & 47600 & 112768\\
53	  &  16 &  3  &   2   & 12   &   1.6669  & 1 & 0  & 12 & 48  & 440 & 1808 & 6332 & 18864 & 47648 & 113968\\
54	  &  16 &  3  &   2   & 8    &   1.7612  & 1 & 0  & 8  & 48  & 416 & 1808 & 6440 & 18864 & 48016 & 113968\\
55	  &  16 &  3  &   2   & 4    &   1.8303  & 1 & 0  & 4  & 64  & 360 & 1728 & 6676 & 19008 & 48448 & 113728\\
56	  &  16 &  5  &   2   & 2    &   1.7671  & 1 & 0  & 2  & 4   & 72  & 216  & 884  & 2452  & 6432  & 14520\\
57	  &  16 &  5  &   4   & 240  &   1.6822  & 1 & 0  & 0  & 0   & 240 & 0    & 480  & 0     & 15600 & 0\\
58	  &  16 &  5  &   4   & 112  &   1.9213  & 1 & 0  & 0  & 0   & 112 & 0    & 1248 & 2048  & 5872  & 16384\\
59	  &  16 &  5  &   4   & 64   &   1.9855  & 1 & 0  & 0  & 0   & 64  & 192  & 864  & 2432  & 6448  & 14656\\
60	  &  16 &  5  &   4   & 48   &   2.0079  & 1 & 0  & 0  & 0   & 48  & 256  & 736  & 2560  & 6640  & 14080\\
61	  &  16 &  6  &   2   & 16   &   0.8582  & 1 & 0  & 16 & 16  & 112 & 256  & 560  & 1792  & 2928  & 7616\\
62	  &  16 &  6  &   3   & 18   &   1.5662  & 1 & 0  & 0  & 18  & 44  & 122  & 392  & 1050  & 2896  & 7126\\
63	  &  16 &  6  &   3   & 8    &   1.7693  & 1 & 0  & 0  & 8   & 32  & 124  & 376  & 1112  & 3000  & 7156\\
64	  &  16 &  6  &   3   & 8    &   1.8272  & 1 & 0  & 0  & 8   & 16  & 120  & 448  & 1128  & 2992  & 7176\\
65	  &  16 &  7  &   3   & 32   &   1.2206  & 1 & 0  & 0  & 32  & 32  & 32   & 416  & 768   & 1216  & 3648\\
66	  &  16 &  7  &   3   & 6    &   1.7604  & 1 & 0  & 0  & 6   & 12  & 74   & 252  & 560   & 1536  & 3968 \\
67	  &  16 &  7  &   3   & 2    &   1.8381  & 1 & 0  & 0  & 2   & 16  & 86   & 212  & 496   & 1556  & 4072\\
68	  &  16 & 11  &   3   & 16   &   1.0985  & 1 & 0  & 0  & 16  & 0   & 16   & 176  & 96    & 192   & 1072 \\
69	  &  16 & 11  &   3   & 16   &   1.1138  & 1 & 0  & 0  & 16  & 0   & 12   & 164  & 100   & 240   & 1092\\
70	  &  16 & 14  &   3   & 16   &   0.8864  & 1 & 0  & 0  & 16  & 0   & 0    & 128  & 64    & 96    & 640 \\
71	  &  16 & 14  &   3   & 16   &   0.8933  & 1 & 0  & 0  & 16  & 0   & 0    & 124  & 52    & 100   & 676\\
72	  &  16 & 15  &   4   & 6    &   1.5187  & 1 & 0  & 0  & 0   & 6   & 10   & 22   & 54    & 78    & 182\\
73	  &  16 & 15  &   4   & 4    &   1.6192  & 1 & 0  & 0  & 0   & 4   & 4    & 34   & 40    & 74    & 182\\
74	  &  16 & 15  &   4   & 4    &   1.7660  & 1 & 0  & 0  & 0   & 4   & 0    & 14   & 24    & 134   & 156 \\
75	  &  16 & 15  &   4   & 2    &   1.8018  & 1 & 0  & 0  & 0   & 2   & 4    & 10   & 38    & 84    & 208 \\
76	  &  16 & 15  &   5   & 4    &   1.9146  & 1 & 0  & 0  & 0   & 0   & 4    & 8    & 26    & 100   & 178\\
77	  &  16 & 15  &   5   & 4    &   1.9344  & 1 & 0  & 0  & 0   & 0   & 4    & 4    & 36    & 74    & 170\\
78	  &  16 & 15  &   5   & 2    &   1.8890  & 1 & 0  & 0  & 0   & 0   & 2    & 16   & 42    & 70    & 160 \\
79	  &  16 & 23  &   3   & 16   &   0.4715  & 1 & 0  & 0  & 16  & 0   & 0    & 112  & 0     & 0     & 464\\
80	  &  16 & 23  &   3   & 16   &   0.4720  & 1 & 0  & 0  & 16  & 0   & 0    & 112  & 0     & 0     & 460\\
\hline
\end{tabular}
\end{adjustbox}
\end{table}

\begin{remark}\label{rem:wsg}
From the tables we have the following observations:

1. When the dimension increases, the weak secrecy gain $\chi^W_{L}$ tends to increase, a behaviour which has been 
proven for unimodular lattices~\cite{Lin2013};

2. Fixing dimension and level $d$, a large minimum is more likely to induce a large $\chi^W_{L}$, which is also consistent with the observations on unimodular lattices~\cite{Lin2013};

3. Fixing dimension, level $d$ and minimum $\mu_{L}$, a smaller kissing number gives a larger $\chi^W_{L}$ (see e.g. rows 13,14; 15,16,17; 73-75). It was shown for unimodular lattices~\cite{Lin2013} that when the dimension $n$ is fixed, $n\leq 23$, the secrecy gain is totally determined by the kissing number, and the lattice with the best secrecy gain is the one with the smallest kissing number;

4. Fixing dimension, minimum $\mu_{L}$, kissing number, a smaller level $d$ gives a bigger $\chi^W_{L}$. 
For example, in Table~\ref{TableMin3KS16} we list some $16-$dimensional lattices obtaining minimum $3$ and kissing number $16$, with $\chi^W_{L}$ in descending order.

5. Lattices with high level $d$ are more likely to have a large minimum, this is more obvious when the dimension increases (see \cite{Rains}, where the upper bounds for bigger $d$ are also bigger for higher dimensions), and results in bigger $\chi_{L}^W$. 
For example, see rows 13,14,48,76-78.
\end{remark}

Some of those observations can be reasoned by calculating the value of $\chi_{L}^W$:
by (\ref{eqn:wsg}) and (\ref{eqn:thetaseries}), take $\tau=\frac{i}{\sqrt{d}}$, the numerator of $\chi_{L}^W$ is given by
\begin{eqnarray*}
	\Theta_{\sqrt[4]{d}\ZZ^n}\left(\frac{i}{\sqrt{d}}\right)&=&\sum_{\xv\in\sqrt[4]{d}\ZZ^n}q^{\|\xv\|^2}=\sum_{\xv\in\ZZ^n}q^{\sqrt{d}\|\xv\|^2}\\
	&=&\sum_{\xv\in\ZZ^n}e^{\pi\cdot i\cdot i \cdot \frac{1}{\sqrt{d}}\cdot\sqrt{d}\|\xv\|^2}=\sum_{\xv\in\ZZ^n}e^{-\pi\|\xv\|^2},
\end{eqnarray*}
which is a constant. The denominator of $\chi_{L}^W$ is given by
\begin{eqnarray*}
	\Theta_{L}\left(\frac{i}{\sqrt{d}}\right)&=&\sum_{\xv\in L}q^{\|\xv\|^2}=\sum_{\xv\in L}e^{i\pi\cdot\frac{i}{\sqrt{d}}}\\
	&=&\sum_{\xv\in L}e^{\frac{-\pi}{\sqrt{d}}\|\xv\|^2} = \sum_{m\in\ZZ}A_m\left(e^{-\frac{\pi}{\sqrt{d}}}\right)^m,
\end{eqnarray*}
where $A_m$ is the number of vectors in $L$ with norm $m$. 
Hence the denominator can be viewed as a polynomial in $e^{-\frac{\pi}{\sqrt{d}}}$, which is less than $1$.
Then the following will be preferable for achieving a large weak secrecy gain.

1. Large minimum, which determines the lowest power of $e^{-\frac{\pi}{\sqrt{d}}}$ in the polynomial.

2. Small value of $A_m$, i.e., small kissing number.

3. Small value of $d$, so that $e^{-\frac{\pi}{\sqrt{d}}}$ is small.

However, from the three tables, the minimum seems to be more dominant than other factors, and as we mentioned in Remark \ref{rem:wsg} point $5$, large $d$ can still be preferable for high dimensions since it may result in large minima.

\begin{table}
\caption{Weak Secrecy Gain-Dimension 16 and minimum 3}
\label{TableMin3KS16}
\begin{adjustbox}{center}
	
	\small
\begin{tabular}{|c|r|r|c|c|r|rrrrrrrrrr|} 
  \hline			
  No. & Dim & $d$ & $\mu_{L}$ & ks   & $\chi_{L}^W$ &   &   & &  &  $\Theta_{L}$   &   &   &   &   &     \\ 
  \hline
69	  &  16 & 11  &   3   & 16   &   1.1138  & 1 & 0  & 0  & 16  & 0   & 12   & 164  & 100   & 240   & 1092\\
68	  &  16 & 11  &   3   & 16   &   1.0985  & 1 & 0  & 0  & 16  & 0   & 16   & 176  & 96    & 192   & 1072 \\
71	  &  16 & 14  &   3   & 16   &   0.8933  & 1 & 0  & 0  & 16  & 0   & 0    & 124  & 52    & 100   & 676\\
70	  &  16 & 14  &   3   & 16   &   0.8864  & 1 & 0  & 0  & 16  & 0   & 0    & 128  & 64    & 96    & 640 \\
80	  &  16 & 23  &   3   & 16   &   0.4720  & 1 & 0  & 0  & 16  & 0   & 0    & 112  & 0     & 0     & 460\\
79	  &  16 & 23  &   3   & 16   &   0.4715  & 1 & 0  & 0  & 16  & 0   & 0    & 112  & 0     & 0     & 464\\
\hline
\end{tabular}
\end{adjustbox}
\end{table}

%
%
%
\section{Imaginary Quadratic Field}\label{sec:imaginary}
Let $d$ be a positive square-free integer.  
Let $K=\QQ(\sqrt{-d})$ be an imaginary quadratic field with Galois group $\{\sigma_1,\sigma_2\}$, where $\sigma_1$ is the identity map and $\sigma_2:\sqrt{-d}\mapsto-\sqrt{-d}$.
The absolute value of the discriminant of $K$, denoted by $\Delta$, is given by \cite{Neukirch}:
\[
\Delta=\left\{
\begin{array}{ll}
  4d&d\equiv1,2~{\rm mod}~4\\
  d&d\equiv3~{\rm mod}~4
\end{array}
\right..
\]
Assume $p\in\ZZ$ is a prime which is totally ramified in $K$ and let $\p$ be the unique $\O_K-$ideal above $p$.
Then $\Oc_K/\p\cong\FF_p$.
Consider the lattice $(\rho^{-1}(C),b_\alpha)$ 
where $C$ is a linear $(N,k)$ code over $\FF_p$.

Let $\alpha=1/p$ when $d\equiv3~{\rm mod}~4$ and let $\alpha=1/2p$ when $d\equiv1,2~{\rm mod}~4$.
Similarly to Section \ref{sec:real}, we will give two proofs that if $C$ is self-orthogonal (i.e., $C\subseteq C^\perp$), then the lattice $(\rho^{-1}(C),b_\alpha)$ is integral and furthermore we will prove that for $C$ self-dual and for $d$ a prime, we get unimodular lattices.

\subsection{Approach I}
By the discussion from Section \ref{sec:matrix}, a generator matrix for $(\rho^{-1}(C),b_\alpha)$ is (see (\ref{eq:MCCM}))
\begin{equation}\label{eqn:MCIM}
	M_C=\sqrt{\alpha}\begin{bmatrix}
I_k\otimes M&A\otimes M\\
\boldsymbol{0}_{nN-nk,nk}&I_{N-k}\otimes M_p
	\end{bmatrix}
\end{equation}
where $(I_k ~(A \mod \p))$ is a generator matrix for $C$, 
\begin{equation}\label{eq:MIm}
	M=\sqrt{2}
\begin{bmatrix}
  1&0\\
  \re v&-\im v
\end{bmatrix},
M_p=\sqrt{2}
\begin{bmatrix}
	\re \omega_1&-\im\omega_1\\
	\re\omega_2&-\im\omega_2
\end{bmatrix}
\end{equation}
with $\{1,v\}$ a $\ZZ-$basis of $\mathcal{O}_K$, $\{\omega_1,\omega_2\}$ a $\ZZ-$basis of $\p$ and
\begin{equation}\label{eqn:vIM}
v=\left\{
	\begin{array}{ll}
		\frac{1+\sqrt{-d}}{2}&d\equiv3~{\rm mod}~4\\
		\sqrt{-d}&d\equiv1,2~{\rm mod}~4
	\end{array}
\right..
\end{equation}
Its Gram matrix is (see (\ref{eq:GCCM}))
\begin{equation}\label{eqn:GCIM}
	G_C=\alpha\begin{bmatrix}
		(I+AA^T)\otimes\Tr{\vv\vv^\dagger}&A\otimes\Tr{\vv\wv^\dagger}\\
		A^T\otimes\Tr{\bar{\wv}\vv^T}&I_{N-k}\otimes\Tr{\wv\wv^\dagger}
	\end{bmatrix},
\end{equation}
where $\vv=[1,v]^T,\wv=[\omega_1,\omega_2]^T$.
\begin{lem}\label{lem:CIntegralIM}
	If $C$ is self-orthogonal, i.e. $C\subseteq C^\perp$, then $(\rho^{-1}(C),b_\alpha)$ is an integral lattice.
\end{lem}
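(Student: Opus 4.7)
The plan is to verify directly that the Gram matrix $G_C$ in \eqref{eqn:GCIM} has integer entries, block by block, mirroring Approach I of Section \ref{sec:real}. First I would lift $A$ to a matrix with entries in $\ZZ$, which is legitimate since $p$ is totally ramified and so $\O_K/\p\cong\FF_p$. Self-orthogonality $C\subseteq C^\perp$ then gives $I_k+AA^T\equiv\boldsymbol{0}\pmod{\p}$, and because the lifted matrix is integer-valued, its entries actually land in $\p\cap\ZZ=p\ZZ$. This is the only place where the code hypothesis enters.

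Next I would treat the three blocks of $G_C$ separately. For the upper-left block $\alpha(I+AA^T)\otimes\Tr{\vv\vv^\dagger}$, the factor $I+AA^T$ supplies a multiple of $p$ while the entries of $\Tr{\vv\vv^\dagger}$ are already integers. For the lower-right block $\alpha I_{N-k}\otimes\Tr{\wv\wv^\dagger}$, I would exploit total ramification to write $\bar\p=\p$, so that $\omega_i\bar\omega_j\in\p\bar\p=\p^2=p\O_K$ and the trace lies in $p\ZZ$. For the off-diagonal block $\alpha A\otimes\Tr{\vv\wv^\dagger}$, the same identity gives $v_i\bar\omega_j\in\O_K\cdot\bar\p=\p$, so its trace is also in $p\ZZ$. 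When $d\equiv 3\pmod{4}$, the scalar $\alpha=1/p$ absorbs the factor of $p$ in every block and the proof is complete.

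The case $d\equiv 1,2\pmod{4}$, with $\alpha=1/(2p)$, requires one extra factor of $2$ in every entry. I would obtain this from the structural identity $\O_K=\ZZ[\sqrt{-d}]$, which forces $\Tr{a+b\sqrt{-d}}=2a\in 2\ZZ$ for every element of $\O_K$. Combined with the $p\ZZ$-divisibility established above, each block entry lands in $2p\ZZ$, and $\alpha=1/(2p)$ clears the denominator. I expect the cleanest step to be the upper-left block, where self-orthogonality directly delivers the $p$-factor; the most delicate step will be the off-diagonal block in the $d\equiv 1,2\pmod{4}$ regime, where one must simultaneously invoke the $\p$-containment (coming from $\bar\p=\p$) and the evenness of the trace on $\O_K$ in order to produce the $2p$ divisor matching $\alpha^{-1}$.
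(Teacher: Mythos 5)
Your proposal is correct and follows essentially the same route as the paper: check the Gram matrix $G_C$ of \eqref{eqn:GCIM} block by block, using $\bar{\p}=\p$ (total ramification) so that entries involving $\wv$ lie in $\p$ and have trace in $\p\cap\ZZ=p\ZZ$, using self-orthogonality via Lemma \ref{lem:selfdualcode} to get $I_k+AA^T\equiv\boldsymbol{0}\pmod{p}$ for the upper-left block, and invoking evenness of the trace on $\O_K=\ZZ[\sqrt{-d}]$ when $d\equiv1,2\pmod 4$ to absorb $\alpha=1/2p$. The only cosmetic difference is that you make the integer lift of $A$ explicit, which the paper leaves implicit.
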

\begin{proof}
	To prove $(\rho^{-1}(C),b_\alpha)$ is integral, it suffices to prove all entries of its Gram matrix $G_C$ in (\ref{eqn:GCIM}) has integral entries.

	Take any $x\in\p$, as $\p$ is the only prime ideal above $p$, we have $\sigma_2(x)\in\p$ and hence $\Tr{x}\in\p\cap\ZZ=p\ZZ$.
	As $\vv\wv^\dagger,\ \bar{\wv}\vv^T,\ \wv\wv^T$ all have entries in $\p$, $\alpha A\otimes\Tr{\vv\wv^\dagger},\alpha A^T\otimes\Tr{\bar{\wv}\vv^T}$ and $\alpha I_{N-k}\otimes\Tr{\wv\wv^T}$ all have entries in $\ZZ$.
	Furthermore, by Lemma \ref{lem:selfdualcode}, as $C$ is self-orthogonal, $I_k+AA^T~{\rm mod}~\p\equiv\boldsymbol{0}~{\rm mod}~p$ and hence $I_k+AA^T$ has entries in $p\ZZ$.
	We have $\alpha(I+AA^T)\otimes\Tr{\vv\vv^\dagger}$ has integral entries.

When $d\equiv1,2~{\rm mod}~4$, $\Tr{x}$ is even for all $x\in\O_K$.
The proof is completed.
\end{proof}
\begin{proposition}
	If $C$ is self-dual and $d=p$ is a prime, the lattice $(\rho^{-1}(C),b_\alpha)$ is unimodular.
\end{proposition}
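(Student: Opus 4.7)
The plan is to combine the two tools already developed in this section: integrality from the self-orthogonal case (Lemma \ref{lem:CIntegralIM}), and the exact discriminant formula (Lemma \ref{lem:latdisc}). A lattice is unimodular precisely when it is integral and has discriminant $1$, so the whole problem reduces to checking these two conditions.

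First, integrality comes for free: self-duality of $C$ forces $C\subseteq C^\perp$, so Lemma \ref{lem:CIntegralIM} immediately yields that $(\rho^{-1}(C),b_\alpha)$ is integral. The real content is therefore the discriminant computation.

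For that, I would plug the current hypotheses into Lemma \ref{lem:latdisc}: inertia degree $f=1$ (since $p$ is totally ramified, $\mathcal{O}_K/\mathfrak{p}\cong\FF_p$), $k=N/2$ (self-duality of $C$), and $N(\alpha)=\alpha^2$ (since $\alpha\in\QQ$). The formula collapses to $\Delta^N\,p^{N}\,\alpha^{2N}$, after which a quick case split on the residue of $d=p$ modulo $4$ finishes the job. When $p\equiv 3\pmod 4$, one has $\Delta=p$ and $\alpha=1/p$, yielding $p^{N}\cdot p^{N}\cdot p^{-2N}=1$. When $p\equiv 1,2\pmod 4$, one has $\Delta=4p$ and $\alpha=1/(2p)$, yielding $(4p)^{N}\cdot p^{N}\cdot (4p^{2})^{-N}=1$. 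Either way, the integral lattice has discriminant $1$, hence is unimodular.

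I do not foresee any serious obstacle. The hypothesis $d=p$ is tailored exactly so that the unique ramified prime absorbs the $p$-part of $\Delta$, while the extra factor of $4$ that appears in $\Delta$ when $d\not\equiv 3\pmod 4$ is cancelled by the factor of $2$ inside $\alpha=1/(2p)$. The only small background point worth acknowledging is that for $d=p$ prime, the assumption that $p$ is totally ramified in $K=\QQ(\sqrt{-p})$ is automatic, since $p\mid\Delta$ in each of the three residue classes of $p$ modulo $4$.
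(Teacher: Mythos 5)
Your proposal is correct and follows essentially the same route as the paper: integrality via Lemma \ref{lem:CIntegralIM} and discriminant $1$ via Lemma \ref{lem:latdisc} with $f=1$, $k=N/2$, split into the cases $\Delta=p,\ \alpha=1/p$ and $\Delta=4p,\ \alpha=1/(2p)$. The only (harmless) addition is your closing remark that total ramification of $p$ in $\QQ(\sqrt{-p})$ is automatic.
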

\begin{proof}
	By Lemma \ref{lem:CIntegralIM}, the lattice $(\rho^{-1}(C),b_\alpha)$ is integral.
	It suffices to prove it has discriminant $1$ \cite{Ebeling}.
	By Lemma \ref{lem:latdisc}, $(\rho^{-1}(C),b_\alpha)$ has discriminant
	\begin{equation*}
		\Delta^Np^{2k}N(\alpha)^N=\begin{cases}d^Nd^N\left(\frac{1}{d^2})\right)^N&d\equiv3~{\rm mod}~4\\
			(4d)^Nd^N\left(\frac{1}{(2d)^2}\right)^N&d\equiv1,2{\rm mod}~4
		\end{cases}=1.
	\end{equation*}
	\end{proof}

\subsection{Approach II}
In this subsection, we consider $C\subseteq\FF_p^N$ a linear code not necessarily having a generator matrix in the standard form.
We will give another proof that the lattice $(\rho^{-1}(C),b_\alpha)$ is integral, where $\alpha=1/p$ if $d\equiv3~{\rm mod}~4$ and $\alpha=1/2p$ if $d\equiv1,2~{\rm mod}~4$.
Thus $b_\alpha$ is the following bilinear form (see (\ref{eqn:bilinearform})):
\[
b_\alpha(\xv,\yv)\mapsto
\left\{
\begin{array}{ll}
	\frac{1}{p}\sum_{i=1}^N\Tr{x_i\bar{y}_i}&d\equiv1~{\rm mod}~4\\
	\frac{1}{2p}\sum_{i=1}^N\Tr{x_i\bar{y}_i}&d\equiv2,3~{\rm mod}~4
\end{array}
\right..
\]
Then the dual of $(\rho^{-1}(C),b_\alpha)$ is given by $(\rho^{-1}(C)^*,b_\alpha)$, where $\rho^{-1}(C)^*:=\{\xv\in K^N:b_\alpha(\xv,\yv)\in\ZZ~\forall \yv\in\rho^{-1}(C)\}$.
We have the following relation between the dual of $(\rho^{-1}(C),b_\alpha)$ and the lattice constructed from the dual of $C$:
\begin{lem}\label{lem:CdualinGCdualIM}
        $(\rho^{-1}(C^\perp),b_\alpha)\subseteq (\rho^{-1}(C)^*,b_\alpha)$.
\end{lem}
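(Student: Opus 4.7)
The plan is to mimic the proof of the totally real analog, Lemma \ref{lem:CdualinGCdual}, with one additional ingredient needed to accommodate the complex conjugation in $b_\alpha$. Fix $\xv\in\rho^{-1}(C^\perp)$ and $\yv\in\rho^{-1}(C)$, so that
\[
b_\alpha(\xv,\yv)=\alpha\,\Tr{\sum_{i=1}^N x_i\bar{y}_i};
\]
the goal is to show this lies in $\ZZ$.

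First I would show $\sum_{i=1}^N x_i\bar{y}_i\in\p$. The key ingredient (absent in the totally real case) is the identity $\bar\beta\equiv\beta\pmod\p$ for every $\beta\in\O_K$, which holds because $\p$ is the unique prime of $\O_K$ above $p$, so $\bar\p=\p$; this is precisely the observation already recorded in the proof of Proposition \ref{prop:selforthint}. Consequently $\rho(\bar y_i)=\rho(y_i)$, and therefore $\rho(\sum x_i\bar y_i)=\rho(\xv)\cdot\rho(\yv)=0$ in $\FF_p$, using that $\rho(\xv)\in C^\perp$ and $\rho(\yv)\in C$. Next, because $\sigma_2(\p)=\p$ (same total-ramification argument), $\Tr{\sum x_i\bar y_i}\in\p\cap\ZZ=p\ZZ$.

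To finish, I would split on the residue class of $d$. For $d\equiv 3\pmod 4$, $\alpha=1/p$ immediately gives $b_\alpha(\xv,\yv)\in\ZZ$. For $d\equiv 1,2\pmod 4$, I would additionally use $\O_K=\ZZ[\sqrt{-d}]$: every element of $\O_K$ has even trace, so intersecting with $p\ZZ$ (assuming $p$ odd, as is implicit in Approach I) yields $2p\ZZ$, which the factor $\alpha=1/(2p)$ clears. The only non-routine step is establishing the $\p$-containment, where the conjugation in $b_\alpha$ must be reconciled with the reduction map $\rho$; once the identity $\bar\beta\equiv\beta\pmod\p$ is invoked, the remainder is a direct transcription of the totally real calculation.
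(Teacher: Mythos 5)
Your proof is correct and takes essentially the same route as the paper's: use the totally-ramified fact $\bar{\beta}\equiv\beta \pmod{\p}$ (from Proposition \ref{prop:selforthint}) to get $\sum_i x_i\bar{y}_i\in\p$, then $\sigma_2(\p)=\p$ to place the trace in $p\ZZ$, and finally the even-trace property of $\ZZ[\sqrt{-d}]$ to absorb the extra factor $2$ in $\alpha=1/(2p)$. The only differences are cosmetic (you conjugate before reducing, while the paper shows $\xv\cdot\yv\in\p$ first and then conjugates), and your explicit remark that the last step needs $p$ odd (so that $2\ZZ\cap p\ZZ=2p\ZZ$) is a parity point the paper itself passes over silently.
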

\begin{proof}
	Take any $\xv\in\rho^{-1}(C^\perp)$ and $\yv\in\rho^{-1}(C)$, then
\begin{eqnarray*}
\rho(\xv\cdot\yv)&=&\rho\left(\sum_{i=1}^Nx_iy_i\right)=\sum_{i=1}^N\rho(x_i)\rho(y_i)\\
    &=&\rho(\xv)\cdot\rho(\yv)=0\in\FF_p,
  \end{eqnarray*}
       which gives $\xv\cdot\yv\equiv0~{\rm mod}~\p$.

       As $\p$ is totally ramified, by the same argument as in Proposition \ref{prop:selforthint}, $\beta\equiv \bar{\beta}~{\rm mod}~\p$ for all $\beta\in\O_K$.
       Then we can conclude
	\[
		\xv\cdot\bar{\yv}\equiv\xv\cdot\yv~{\rm mod}~\p\Longrightarrow\xv\cdot\bar{\yv}\in\p.
	\]
	As $\p$ is the only prime above $p$, we have $\sigma_2(\xv\cdot\bar{\yv})\in\p$.
	Hence $\Tr{\xv\cdot\bar{\yv}}\in\p\cap\ZZ=p\ZZ$ and
	\[
		b_{\alpha}(\xv,\yv)=\sum_{i=1}^N\Tr{\alpha x_i\bar{y}_i}=\Tr{\alpha\xv\cdot\bar{\yv}}\in\alpha p\ZZ.
	\]
 	In the case $d\equiv2,3~{\rm mod}~4$, any element in $\O_K$ has even trace.
  	In conclusion, we have $b_\alpha(\xv,\yv)\in\ZZ$ and hence $\rho^{-1}(C^\perp)\subseteq\rho^{-1}(C)^*$ by definition.
\end{proof}
\begin{corollary}
Let $C$ be a self-orthogonal linear code, then $(\rho^{-1}(C),b_\alpha)$ is integral.
\end{corollary}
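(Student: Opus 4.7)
The plan is to deduce this corollary essentially for free from the preceding Lemma \ref{lem:CdualinGCdualIM}, mirroring the analogous corollary that was given in the real quadratic setting just after Lemma \ref{lem:CdualinGCdual}.

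First, I would observe that the map $\rho$ is order-preserving on submodules: if $C\subseteq C^\perp$ as subsets of $\FF_p^N$, then taking preimages gives $\rho^{-1}(C)\subseteq\rho^{-1}(C^\perp)$, simply because a preimage of an element of $C$ is automatically a preimage of an element of $C^\perp$. This is the only nontrivial combinatorial content and it is immediate.

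Next, I would invoke Lemma \ref{lem:CdualinGCdualIM}, which supplies the inclusion $\rho^{-1}(C^\perp)\subseteq\rho^{-1}(C)^*$ with respect to the bilinear form $b_\alpha$. Chaining the two inclusions gives
\[
\rho^{-1}(C)\subseteq\rho^{-1}(C^\perp)\subseteq\rho^{-1}(C)^*,
\]
which is the definition of integrality for the lattice $(\rho^{-1}(C),b_\alpha)$.

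There is no real obstacle here: the entire content has already been packaged into Lemma \ref{lem:CdualinGCdualIM}, and the corollary is just the observation that self-orthogonality $C\subseteq C^\perp$ lifts through $\rho^{-1}$ to yield the lattice inclusion in its own dual. The only thing to be careful about is making the proof short and pointing to the correct lemma, so that the reader sees this is the exact analogue of the corollary from Section \ref{sec:real}.
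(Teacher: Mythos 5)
Your proposal is correct and matches the paper's own argument exactly: self-orthogonality gives $\rho^{-1}(C)\subseteq\rho^{-1}(C^\perp)$, and Lemma \ref{lem:CdualinGCdualIM} then yields $\rho^{-1}(C^\perp)\subseteq\rho^{-1}(C)^*$, so the chain of inclusions establishes integrality. No gaps to report.
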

\begin{proof}
As $C$ is self-orthogonal, we have $C\subseteq C^\perp$.
Hence by Lemma \ref{lem:CdualinGCdualIM} $\rho^{-1}(C)\subseteq\rho^{-1}(C^\perp)\subseteq\rho^{-1}(C)^*$.
\end{proof}
\begin{example}
	Take $d=3$, $K=\QQ{\sqrt{-3}}$, linear code $C\subseteq\FF_3^4$ with generator matrix 
	\[
		\begin{bmatrix}
1&0&2&1\\
0&1&2&2
		\end{bmatrix}.
	\]
	$(\rho^{-1}(C),b_\alpha)$ is a unimodular lattice of dimension $8$ with minimum $2$.
	Thus it is the unique extremal $8-$dimensional unimodular lattice $E_8$ \cite{Conway}.
\end{example}

%
%

\end{document}